\pdfoutput=1
\documentclass[11pt]{article}

\usepackage[letterpaper,margin=1in]{geometry}

\usepackage[utf8]{inputenc} \usepackage[T1]{fontenc}    \usepackage{url}            \usepackage{booktabs}       \usepackage{amsfonts}       \usepackage{nicefrac}       \usepackage{microtype}      \usepackage{xcolor}         \usepackage{lipsum}
\usepackage{subfig}
\usepackage[shortlabels]{enumitem}
\usepackage{xspace}

\usepackage{tikz}
\usetikzlibrary{arrows.meta}
\usepackage{tabularx}
\usepackage{booktabs}  %
\usepackage{array}     %

\usepackage{amsthm,amsmath,amssymb,amsfonts,amssymb,mathtools}
\usepackage{xcolor}
\usepackage{empheq}
\usepackage{dsfont}
\usepackage{mathrsfs}
\usepackage{graphicx}
\usepackage{enumitem}
\usepackage{subfig}
\usepackage{comment}
\usepackage{framed}
\usepackage{float}
\usepackage{algorithm2e}
\usepackage{needspace}
\usepackage[noend]{algpseudocode}
\usepackage{thm-restate}
\usetikzlibrary{backgrounds}
\pgfdeclarelayer{background}
\pgfdeclarelayer{foreground}
\pgfsetlayers{background,main,foreground}

\SetAlCapFnt{\normalfont\small}\SetAlCapNameFnt{\unskip\itshape\small}

\definecolor[named]{ACMBlue}{cmyk}{1,0.1,0,0.1}
\definecolor[named]{ACMYellow}{cmyk}{0,0.16,1,0}
\definecolor[named]{ACMOrange}{cmyk}{0,0.42,1,0.01}
\definecolor[named]{ACMRed}{cmyk}{0,0.90,0.86,0}
\definecolor[named]{ACMLightBlue}{cmyk}{0.49,0.01,0,0}
\definecolor[named]{ACMGreen}{cmyk}{0.20,0,1,0.19}
\definecolor[named]{ACMPurple}{cmyk}{0.55,1,0,0.15}
\definecolor[named]{ACMDarkBlue}{cmyk}{1,0.58,0,0.21}

\usepackage[colorlinks,citecolor=blue,linkcolor=magenta,bookmarks=true]{hyperref}
\usepackage[nameinlink]{cleveref}
\crefname{ineq}{Inequality}{Inequality}
\creflabelformat{ineq}{#2{\upshape(#1)}#3}
\crefname{sub}{Subsection}{Subsection}
\creflabelformat{Subsection}{#2{\upshape(#1)}#3}
\crefname{sdp}{SDP}{SDP}
\creflabelformat{sdp}{#2{\upshape(#1)}#3}
\crefname{lp}{LP}{LP}
\creflabelformat{lp}{#2{\upshape(#1)}#3}

\newtheorem{theorem}{Theorem}[section]
\newtheorem{lemma}[theorem]{Lemma}
\newtheorem{proposition}[theorem]{Proposition}
\newtheorem{corollary}[theorem]{Corollary}
\newtheorem{claim}[theorem]{Claim}

\newtheorem{definition}[theorem]{Definition}

\newtheorem{fact}[theorem]{Fact}

\theoremstyle{definition}

\newtheorem{remark}[theorem]{Remark}

\crefname{claim}{claim}{claims}

\renewcommand\vec[1]{\mathbf{#1}}
\DeclareMathOperator*{\pr}{\mathbf{Pr}}
\renewcommand{\Pr}{\mathbf{Pr}}
\DeclareMathOperator*{\E}{\mathbf{E}}

\newcommand{\normal}{\mathcal{N}}

\newcommand{\bx}{\mathbf{x}}
\newcommand{\by}{\mathbf{y}}
\newcommand{\bv}{\mathbf{v}}
\newcommand{\bu}{\mathbf{u}}
\newcommand{\bz}{\mathbf{z}}
\newcommand{\bw}{\mathbf{w}}

\newcommand{\err}{\mathrm{err}}

\newcommand{\op}{\mathrm{op}}

\newcommand{\p}{\mathbf{P}}

\newcommand{\R}{\mathbb{R}}

\newcommand{\Z}{\mathbb{Z}}
\newcommand{\N}{\mathbb{N}}

\newcommand{\eps}{\epsilon}

\newcommand{\poly}{\mathrm{poly}}
\newcommand{\polylog}{\mathrm{polylog}}

\newcommand{\sgn}{\mathrm{sign}}
\newcommand{\sign}{\mathrm{sign}}
\newcommand{\calN}{{\cal N}}

\newcommand{\opt}{\mathrm{OPT}}
\newcommand{\D}{\mathcal{D}}

\newcommand{\Ind}{\mathds{1}}

\newcommand{\littlesum}{\mathop{\textstyle \sum}}

\newcommand{\wt}{\widetilde}

\newcommand{\x}{\vec x}
\newcommand{\z}{\vec z}
\newcommand{\w}{\vec w}

\newcommand{\hide}[1]{}

\newcommand{\cH}{\mathcal{H}}

\newcommand{\cN}{\mathcal{N}}

\newcommand{\cS}{\mathcal{S}}

\newcommand{\eqdef}{\coloneqq}

\newcommand{\abs}[1]{\lvert#1\rvert}

\usepackage{multirow}

\def\colorful{0}

\ifnum\colorful=1

\else

\fi

\allowdisplaybreaks

\title{Testable Learning of General Halfspaces under Massart Noise}

\author{
Ilias Diakonikolas\thanks{Supported by NSF Medium Award CCF-2107079, 
ONR award number N00014-25-1-2268, 
and an H.I. Romnes Faculty Fellowship.}\\
UW Madison\\
{\tt ilias@cs.wisc.edu}\\
\and
Giannis Iakovidis\thanks{Supported by ONR award number N00014-25-1-2268.}\\
UW Madison\\
{\tt iakovidis@wisc.edu}\\
\and
Daniel M. Kane\thanks{Supported by NSF Medium Award CCF-2107547.}\\
UC San Diego\\
{\tt dakane@ucsd.edu }
 \and
Sihan Liu\\
UC San Diego\\
{\tt sil046@ucsd.edu}\\
}

\date{}

\begin{document}

\maketitle

\begin{abstract}%
We study the algorithmic task of testably learning {\em general} Massart halfspaces under the Gaussian distribution. In the testable learning setting, the aim is the design of a tester-learner pair satisfying the following properties: (1) if the tester accepts, the learner outputs a hypothesis and a certificate that it achieves near-optimal error, and (2) it is highly unlikely that the tester rejects if the data satisfies the underlying assumptions. Our main result is the first testable learning algorithm for general halfspaces with Massart noise and Gaussian marginals. The complexity of our algorithm is $d^{\mathrm{polylog}(\min\{1/\gamma, 1/\epsilon \})}$, where $\epsilon$ is the excess error and $\gamma$ is the bias of the target halfspace, which qualitatively matches the known quasi-polynomial Statistical Query lower bound for the non-testable setting. The analysis of our algorithm hinges on a novel sandwiching polynomial approximation to the sign function with {\em multiplicative error} that may be of broader interest. 
\end{abstract}

\setcounter{page}{0}

\thispagestyle{empty}

\newpage

\section{Introduction} \label{sec:intro}

This work focuses on the distribution-specific 
learning of (general) halfspaces with Massart noise~\cite{Massart2006} 
in the testable framework introduced in~\cite{RV23}. Before we state our results, we provide the necessary background and motivation. 

\paragraph{Halfspaces and Their Efficient Learnability}

A {\em halfspace} is any Boolean-valued function 
$f: \R^d \to \{\pm 1\}$ of the form $f(\x) = \sign(\vec w^\ast \cdot \x - t^\ast)$, 
for a weight vector $\vec w^\ast \in \R^d$ and a threshold 
$t^\ast \in \R$. The algorithmic task of learning halfspaces from labeled
examples is one of the most basic and extensively studied 
problems in machine learning~\cite{Rosenblatt:58, Novikoff:62,
MinskyPapert:68,  FreundSchapire:97, Vapnik:98, CristianiniShaweTaylor:00}. 
While halfspaces are efficiently PAC learnable in the distribution-free PAC 
model, in the realizable (aka noise-free) setting~\cite{Valiant:84} (see, 
e.g.,~\cite{MT:94}), the algorithmic problem becomes computationally intractable in the 
presence of label noise, both in the adversarial~\cite{Daniely16,Tiegel23} and in semi-random 
label noise models~\cite{DK22-SQ-Massart, NasserT22, DKMR22}.

To circumvent the aforementioned computational limitations, a long line of work has 
developed efficient noise-tolerant learning algorithms for halfspaces in 
challenging noise models under natural distributional assumptions. These include 
both the adversarial label noise setting~\cite{KKMS:08,KLS09,ABL17,YanZ17, DKS18a,DKTZ20c,DKTZ22} 
and semi-random settings (namely, the Massart and Tsybakov 
noise models)~\cite{AwasthiBHU15,ZhangLC17,YanZ17, DKTZ20,ZhangSA20,DKTZ20b, DKKTZ21Tsybakov, DKKTZ22}.

\paragraph{Testable Learning}
A fundamental limitation of the aforementioned noise-tolerant learners 
is that they provide no guarantees if the underlying distributional 
assumptions---on the joint distribution, incorporating the 
marginal on the examples and the label noise itself---are not satisfied. 
This drawback has motivated the definition of the {\em testable learning framework}~\cite{RV23} that, informally speaking, aims to 
``test'' the underlying distributional assumptions so as to 
provide useful guarantees whenever 
the algorithm outputs a candidate hypothesis. In more detail, 
the testable learning framework 
involves the design of a tester-learner pair satisfying the following properties: 
(1) if the tester accepts, the learner outputs a hypothesis and a certificate that it achieves near-optimal error, 
and (2) it is highly unlikely that the tester rejects if the data satisfies the underlying 
assumptions. 

Since the initial work of~\cite{RV23}, there has been a flurry of research activity 
on the complexity of testable learning (for halfspaces and other concept classes) 
in the agnostic model; 
see, e.g.,~\cite{GKK23,DKKL23,GKSV23,GKSV24,DKLZ24,slot2024testably, KSV24, GSSV24,GKSV25} and references therein. To facilitate the subsequent discussion, 
we provide the definition of the testable learning 
framework from~\cite{GKSV25}, which is the natural generalization 
of the original definition~\cite{RV23} to the semi-random noise 
setting we consider.

\begin{definition}[Testable Learning, see~\cite{RV23, GKSV25}]
\label{def:testable-learning}
Let $\mathcal{H} \subseteq \{\,\R^d \to \{\pm 1\}\,\}$ be a concept class, 
$m : (0,1)\times(0,1)\to \Z_+$
and $\mathcal{D}_{\x, y}$ be a family of distributions over $\R^d\times \{\pm1\}$.
We say that an algorithm $\mathcal{A}$ is a {\em testable learner for $\mathcal{H}$ 
with respect to the distributional assumptions induced by $\mathcal{D}_{\x, y}$} 
if it satisfies the following condition.  
On input $\epsilon,\delta\in(0,1)$ and a dataset $S$ of $m = m(\eps,\delta)$ 
i.i.d.\ samples from a distribution $D_{\x, y}$ over $\R^d\times \{\pm 1\}$, 
$\mathcal{A}$ either outputs \emph{Reject} or \emph{(Accept, $h$)} 
for some hypothesis $h:\mathbb{R}^d\to\{\pm 1\}$, satisfying the following:
\begin{enumerate}[leftmargin=*, nosep] 
    \item (Soundness).
    The probability that $\mathcal{A}$ accepts and outputs an $h$ for which  
    $\Pr_{(\x,y)\sim D}\!\big[\,y \neq h(\x)\,\big] >\opt + \eps$,
    where $\opt \eqdef \min_{f\in \mathcal{H}} \Pr_{(\x,y)\sim D_{\x, y}}\!\big[\,y \neq f(\x)\,\big]$, is at most $\delta$.
    \item (Completeness). If $D_{\x, y} \in \mathcal{D}_{\x, y}$, then 
    $\mathcal{A}$ accepts with probability at least $1-\delta$.
\end{enumerate}
\end{definition}

As was pointed out in~\cite{RV23}, 
starting with an algorithm that achieves $\delta=1/3$, 
the success probability can be amplified 
to $1-\delta$ (by standard repetition) 
at the cost of an $O(\log(1/\delta))$ factor increase
in the sample complexity. Hence, throughout this work, we will focus on developing efficient 
testable learners that achieve $\delta = 1/3$.
The distribution class $\mathcal{D}_{\x, y}$ defines the distributional assumptions under which the testable learner should accept. We note that the completeness condition in Definition~\ref{def:testable-learning} 
concerns the joint distribution on $(\x, y) \in \R^d\times \{\pm 1\}$---not just 
the marginal distribution on $\x \in \R^d$ 
(as was the case in the original definition~\cite{RV23}). 
This generalization is required when we are interested 
in also testing properties of the label noise model itself. 

With this motivation, \cite{GKSV25} studied 
testable learning (as per Definition~\ref{def:testable-learning}) 
for {\em homogeneous} halfspaces with Massart noise under the Gaussian distribution.
Concretely, this corresponds to Definition~\ref{def:testable-learning} when 
(i) the concept class $\mathcal{H}$ is the family of all {\em homogeneous} 
halfspaces on $\R^d$ (i.e., halfspaces with zero threshold); and 
(ii) $\mathcal{D}_{\x, y}$ is the family of distributions 
over $(\x, y) \in \R^d\times \{\pm 1\}$ whose marginal on $\x$ is the 
standard Gaussian on $\R^d$, and the distribution of $y \mid \x$ is generated 
by adding Massart noise~\cite{Massart2006} defined as follows.

\begin{definition}[Massart Noise] \label{def:Massart}
Let $f: \R^d \to \{\pm1\}$, $D_{\x}$ be a distribution on $\R^d$, and 
$\eta(\cdot): \R^d \to[0,\eta]$ be a noise function for some 
parameter $\eta<1/2$. 
An $\eta$-Massart noise oracle is a distribution on $(\x, y) \in \R^d \times\{\pm1\}$ 
such that $\x\sim D_{\x}$ and the label $y$ satisfies: 
with probability $1-\eta(\x)$, $y=f(\x)$; and $y=-f(\x)$ otherwise. 
\end{definition}

The main result of~\cite{GKSV25} is a testable learner for homogeneous 
halfspaces in this setting with sample and computational complexity 
$\poly(d, 1/\eps)$, where the noise rate upper bound $\eta = 1/2-\beta$ 
and $\beta>0$ is a universal constant\footnote{While we suppress the dependence on $\beta$ in this discussion, we note that the 
complexity of their algorithm is $(d/\eps)^{\poly(1/\beta)}$. 
In Appendix~\ref{app:sq-lb}, we prove an SQ lower bound giving evidence 
that such a dependence is necessary for efficient algorithms.}.

In this work, we continue this line of investigation. Our goal is to understand
the complexity of learning {\em general} (i.e., not necessarily homogeneous) halfspaces in this framework. 
While at first glance it might seem that the distinction between the 
homogeneous and the general cases is innocuous, it is well-known that there 
can be a substantial gap in the computational complexity of distribution-specific learning in the two cases without the testable requirement. 
In particular, while there exists a $\poly(d/\eps)$ time learner 
for Massart homogeneous halfspaces under the Gaussian 
distribution~\cite{DKTZ20}, the complexity of learning general halfspaces 
in the same setting is {\em quasi-polynomial}, namely 
$d^{\Theta(\log(1/\eps))}$~\cite{DKKTZ22}. Specifically, \cite{DKKTZ22} gave 
an algorithm with complexity $d^{O(\log(1/\eps))}$ and a matching 
SQ lower bound of $d^{\Omega(\log(1/\eps))}$. 
Since testable learning is by definition harder than its non-testable counterpart, this SQ lower bound is 
hence inherited in our scenario. 
On the other hand, no non-trivial upper bound is known 
for general halfspaces in the testable setting (even under the earlier model of \cite{RV23} without the testable requirement on the noise model).
\footnote{
One can of 
course directly use known upper bounds of $d^{O(1/\eps^2)}$ for the agnostic setting~\cite{RV23, GKK23}.}

Motivated by this gap in our understanding, we aim to characterize the complexity of testable learning for general Massart halfspaces. 
Specifically, we study the following question:
\vspace{-0.1cm}
\begin{center}
{\em What is the complexity of testable learning of {\em general} halfspaces\\ 
with Massart noise under the Gaussian distribution?  }
\end{center}
\vspace{-0.1cm}
As our main result, we essentially resolve this question by developing a 
testable learner with complexity $d^{\polylog(1/\eps)}$---qualitatively 
matching the guarantees for the non-testable setting. In more detail, the 
complexity of our algorithm is $d^{\polylog(\min\{1/\gamma, 1/\eps\})}\poly(1/\eps)$, 
where $\gamma$ is the ``bias'' of the target halfspace (see 
Definition~\ref{def:bias}). Since homogeneous halfspaces correspond to 
$\gamma=1/2$, our algorithmic result can be viewed as a generalization of the 
upper bound in~\cite{GKSV25}. A detailed description of our results is given in the 
following subsection. 

\subsection{Our Results} \label{ssec:results}

It turns out that the complexity of testably learning Massart halfspaces depends on the ``bias'' of the target halfspace. Hence, 
to state our main algorithmic result, we require the following definition.

\begin{definition}[$\gamma$-Biased halfspaces] \label{def:bias}
For $d \in \mathds{Z}_+$ and
$\gamma\in(0,1/2]$, define the hypothesis class
\[
\mathcal{H}_{d,\gamma}
\eqdef
\Big\{
f(\x)=\sign(\bv\cdot \x-t), \x,\vec v \in \R^d, t \in \R : \|\bv\|_2=1,
\min_{i=\pm 1} \Pr_{\cN^d}[f(\x)=i] \ge \gamma
\Big\}.
\]
We refer to $\mathcal{H}_{d,\gamma}$ as the class of $\gamma$-Biased halfspaces.
\end{definition}

\noindent Our main result is the following: 

\vspace{-0.2cm}
\begin{theorem}[Testably Learning $\gamma$-Biased Massart Halfspaces]\label{thm:testable-learning-gamma}
Fix parameters $\eta\in[0,1/2), \beta \eqdef 1-2\eta$ and  $\gamma\in(0,1/2]$.
Let $\mathcal{D}_\gamma$ be the class of distributions 
over $\R^d\times\{\pm1\}$
whose $\x$-marginal is the standard Gaussian $\cN^d$ 
and satisfy the $\eta$-Massart noise condition
with respect to some halfspace in $\mathcal{H}_{d,\gamma}$.
There exists an algorithm that, given $\eps,\delta, \eta$ and $\gamma$, uses

$$
N = d^{\tilde{O}(\beta^{-2}) \polylog(\min\{1/\eps, 1/\gamma \})} \, \poly(1/\eps)\log(1/\delta)
$$
samples, runs in $\poly(N,d)$ time,
and testably learns the class $\cH_{d,\gamma}$ with respect to $\mathcal D_{\gamma}$.
\end{theorem}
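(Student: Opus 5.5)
Our plan follows the tester-learner template of~\cite{GKSV25}, with the homogeneous-case polynomial approximations replaced by sandwiching polynomials for $\sign(z-t)$ that have \emph{multiplicative} error, which the abstract says we construct. The reduction is as follows. First, if $\gamma \le \eps$, then the constant hypotheses $\pm 1$ already achieve error at most $\opt+\eps$ after empirically estimating label frequencies. We may therefore assume $\gamma>\eps$, so the degree parameter is governed by $\log(1/\gamma)$, giving the $\polylog(\min\{1/\eps,1/\gamma\})$ dependence.

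\textbf{Step 1 (moment tests).} The tester draws $N$ samples and checks that all empirical moments of the $\x$-marginal up to degree $k=\tilde O(\beta^{-2})\polylog(1/\gamma)$ match those of $\cN^d$ up to additive $d^{-O(k)}$. This takes $d^{O(k)}$ samples, and under completeness it passes with high probability by standard concentration.

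\textbf{Step 2 (learner).} We run a degree-$k$ polynomial-threshold learner, namely $L_1$ polynomial regression as in~\cite{KKMS:08}, or a convex surrogate suited to Massart noise. This produces a hypothesis $h$ together with its empirical error. Soundness with respect to every $f\in\cH_{d,\gamma}$ (not only the target) follows from the sandwiching argument. Suppose there are polynomials $p_\ell\le \sign(\bv\cdot\x-t)\le p_u$ of degree $k$ whose gap satisfies $\E_{\cN}[p_u-p_\ell]\le \eps$. Moment matching then transfers this gap to the empirical distribution, so the regression loss certifies error at most $\opt+\eps$.

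\textbf{Step 3 (Massart-specific improvement).} A degree of $\polylog$ rather than $1/\eps^2$ is possible only because of the Massart structure. We will additionally test a conditional-label condition. For each candidate direction and threshold, in the spirit of~\cite{GKSV25}, we verify via moment tests of the reweighted distribution $y\,\x^{\otimes j}$ that the correlation of $y$ with low-degree polynomials behaves as it would under $\beta$-Massart noise. This lets us bound the error of $h$ by a $\beta$-weighted quantity. Approximation error is then needed only relative to the mass $\Pr[\sign(\bv\cdot\x - t)=\pm1]\ge\gamma$ of each side, which is why multiplicative error suffices. Concretely, we need polynomials $p_\ell\le\sign(z-t)\le p_u$ with $|p_u-p_\ell|\le \eps'\cdot(\text{local density})$, for degree $\polylog(1/\gamma)\cdot\tilde O(\beta^{-2})$.

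\textbf{Main obstacle.} The hardest step is constructing these multiplicative-error sandwiching polynomials. The natural first attempt is to compose a Chebyshev-type approximation of $\sign$ on $[-R,R]$, with $R=O(\sqrt{\log(1/\gamma)})$, with a damping factor that is large near the threshold $t$ and small in the tails. The Gaussian tail beyond $|t|+R$ is then controlled by moment bounds (hypercontractivity), and degree $\polylog(1/\gamma)$ suffices because the relative error only needs to be a constant fraction of the $\gamma$ mass. We would then plug these polynomials into the soundness argument of~\cite{GKSV25}, where the $\beta^{-2}$ factor in the degree comes from amplifying the constant relative error to $\beta$-accuracy. The total cost is $d^{O(k)}\poly(1/\eps)\log(1/\delta)$, with the $\log(1/\delta)$ factor obtained by repetition.
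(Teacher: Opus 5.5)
There is a genuine gap. The proposal never specifies a certificate that actually uses the Massart structure, and it applies multiplicative sandwiching to the wrong set.

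\textbf{The Step~3 test.} As stated it is neither implementable nor complete. Testing ``for each candidate direction and threshold'' needs a cover of all competitors. There is also no reference value against which to test $\E[y\,\x^{\otimes j}]$, since $\E[y\mid\x]=(1-2\eta(\x))f^*(\x)$ with $\eta(\cdot)$ arbitrary. What every $\eta$-Massart distribution does satisfy is the one-sided inequality $\E[(yf^*(\x)-\beta)\,p(\x)^2]\ge 0$, and a tester can only check it against its own hypothesis. The paper tests $\E[HH^{\top}(yh(\x)-\beta+\eps)\mid\x\in S]\succeq 0$. This is exactly why the upper sandwiching polynomial must be nonnegative, hence a sum of squares.

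\textbf{The set being approximated.} It must be the disagreement region $\{f\neq h\}$, not a side of $f$. That region is a wedge, to which univariate sandwiching of $\sign(z-t)$ does not apply. Its mass can also be far below $\gamma$, so error measured ``relative to the mass $\ge\gamma$ of each side'' can swamp the certified advantage.

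\textbf{The missing idea is slicing.} Cut space along $\w$ into slabs of width $\Delta=\eps^2$ on which $h$ is constant. Conditionally on a slab, the disagreement is then a single halfspace in $\w^{\perp}$ whose threshold drifts by at most $\frac{|\bv\cdot\w|}{\|\bv^{\perp\w}\|}\Delta$. On slabs with conditional Gaussian disagreement at least $\gamma^{O(1)}$, multiplicative sandwiching at threshold $O(\sqrt{\log(1/\gamma)})$ with $\alpha\asymp\beta$ certifies an $\Omega(\beta\gamma)$ advantage for $h$. This uses the PSD test together with orthogonal moment matching \emph{conditioned on the slab}. Slabs with smaller disagreement are controlled by moment-based bias transfer. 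A case analysis on the angle between $\bv$ and $\w$ (near-perpendicular, near-parallel, approximately parallel, the last using only the degree-$0$ test), glued by slice-mass tests, then shows the advantage dominates. Your global moment tests in Step~1 do not give the per-slab conditional moments this argument needs.

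\textbf{Other steps that fail as written.}
\begin{itemize}
\item Step~2 cannot work at degree $\polylog$. Sandwiching halfspaces with additive error $\eps$ needs degree $\Theta(1/\eps^2)$, which only recovers the agnostic $d^{O(1/\eps^2)}$ bound.
\item An $L_1$-regression PTF is not close enough to the Massart target for a test of the above type to pass in completeness. The paper instead runs the proper learner of~\cite{DKKTZ22} with accuracy $\eps'=\eps^{C}/3^{l}$. Then $\Pr[h\neq f^*]\le\eps'/\beta$ is small enough to survive the $3^{l/2}$ hypercontractive blow-up in each slab.
\item Your opening reduction is false. For $\gamma\le\eps$ the class $\cH_{d,\gamma}$ still contains balanced (e.g.\ homogeneous) halfspaces, so constants do not compete with it. The correct move is to replace $\gamma$ by $\max(\gamma,\eps)$.
\item Read pointwise, the requirement $|p_u-p_\ell|\le\eps'\cdot(\text{local density})$ is impossible: $p_u-p_\ell$ is either a constant at least $1$ or an unbounded polynomial. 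What is needed, and achievable, is $\E_{z\sim\cN}[p_u-p_\ell]\le\alpha\Pr[z\ge t]$.
\item Your Chebyshev-plus-damping sketch does not explain how to keep the error near the origin of order $\alpha\Pr[z\ge t]$ while staying one-sided, which is precisely where mollifier-type constructions fail. The paper achieves it with the normalized bump $(T_m(x)/(mx))^k$, integrated over a window and corrected by $-(Cx/B)^{mk}$, at degree $O((t+1)^6\log^2(1/\alpha)/\alpha^2)$.
\end{itemize}
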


A few comments are in order regarding the statement of \Cref{thm:testable-learning-gamma}.
First, we note that for the special case of homogeneous halfspaces---or more broadly for halfspaces
where $\gamma$ is a positive universal constant---our algorithm has complexity 
$d^{\tilde{O}(\beta^{-2})}$. For the class of general halfspaces, its complexity is 
quasi-polynomial, namely $d^{\tilde{O}(\beta^{-2}) \polylog(1/\eps)}$.

As shown in~\cite{DKKTZ22}, a $d^{\log(\min\{1/\eps, 1/\gamma \})}$ complexity dependence 
is required for Statistical Query (SQ) algorithms, even in the non-testable setting.
This gives evidence that the quasi-polynomial dependence in our upper bound is inherent. 
Moreover, we show in Appendix~\ref{app:sq-lb} that any SQ algorithm 
for the testable setting requires complexity 
$d^{\tilde{\Omega}(\beta^{-2})}$ even for near-homogeneous halfspaces. 
This gives formal evidence 
that the exponential dependence on $1/\beta^2$ may be necessary for efficient algorithms, and additionally provides a separation between 
the testable and non-testable versions of the problem (as a function of $\beta$).

A key technical ingredient required for the analysis of our algorithm is the following new result
on sandwiching polynomial approximations for the sign function under the Gaussian distribution, 
which may be of broader interest. 

\begin{restatable}[Multiplicative Sandwiching Polynomial Approximation to the Sign Function]
{theorem}{MultSandwichSign}
\label{lem:structural}
Let $t\in \R$, and
\(
h(x)\;\eqdef\;\Ind(x\ge t).
\)
Fix an accuracy parameter $\alpha\in(0,1/2)$.
There exist polynomials $p_-,p_+:\R\to\R$ of degree at most
$\deg(p_\pm)=O({(|t|+1)^6\log^2(1/\alpha)}/{\alpha^2})$
 such that:
\begin{enumerate}
    \item $p_-(x)\le h(x)\le p_+(x)$ for all $x\in\R$
    \item 
    \(
    \E_{x\sim \cN(0,1)}\big[p_+(x)-p_-(x)\big] \le \alpha\E_{x\sim \cN(0,1)}[h(x)].
    \)
\end{enumerate}
\end{restatable}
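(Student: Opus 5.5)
Reduce to $t\ge 0$ by symmetry. For $t<0$ we have $\E[h]\ge 1/2$, so we use $1-h(x)=\Ind(-x> -t)$ and standard additive sandwiching with error $\alpha/2$. Additive sandwiching of a Gaussian threshold needs degree $O(\log(1/\alpha)/\alpha^2)$ or better, and boundary points have measure zero. So the content is the case $t\ge 0$ large. There $\E[h]=\bar\Phi(t)\approx \phi(t)/t$, so we need additive error $\alpha\bar\Phi(t)\approx \alpha e^{-t^2/2}/(t+1)$, which is exponentially small in $t^2$. The plan is to \emph{not} approximate in the plain Gaussian $L_1$ sense. Instead we approximate a reweighted function and multiply back by a weight that is itself a polynomial concentrated near $t$.

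\textbf{Step 1 (localizing weight).} Find a nonnegative polynomial $w(x)$ of degree $O((t+1)^2\log(1/\alpha))$ with the following properties:
\begin{itemize}
\item $w\ge 1$ on $[t,\infty)$, or at least $w\ge 1$ on $[t,t+L]$ with $L=O(\log(1/\alpha)/(t+1))$, the tail beyond $t+L$ being handled separately;
\item $\E[w(x)\Ind(x<t)]$ and $\E[w]$ are both $O(\bar\Phi(t)/\alpha)$, or ideally $O(\bar\Phi(t))$.
\end{itemize}
A natural candidate is $w(x)=(x/t)^{2k}$, or $(1+ (x-t)/\ldots)$, with $k\approx t^2$. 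The Gaussian moment $\E[x^{2k}]\approx (2k/e)^k$ with $k=t^2/2$ gives $t^{2k}e^{-t^2/2}$, so $\E[(x/t)^{2k}]\approx e^{-t^2/2}\cdot\poly(t)$, comparable to $\bar\Phi(t)$ up to $\poly(t)$ factors. Effectively, $x^{2k}\phi(x)$ is a bump centered at $\sqrt{2k}=t$ with width $O(1)$. To kill the $\poly(t)$ loss, I would tune $k$ slightly ($\sqrt{2k}=t+O(1)$) and use an extra factor. Under the tilted measure $\mu(x)\propto x^{2k}\phi(x)$, the event $\{x\ge t\}$ has constant probability.

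\textbf{Step 2 (approximate under the tilted measure).} Under $\mu$, which is roughly Gaussian of width $\Theta(1)$ centered near $t$ with sub-Gaussian tails, construct ordinary additive sandwiching polynomials $q_\pm$ for $\Ind(x\ge t)$ with $\E_\mu[q_+-q_-]\le \alpha'$. We would use the standard construction: approximate a smoothed step by Chebyshev/Jackson polynomials on an interval of length $O(t+\sqrt{\log(1/\alpha)})$, and control the growth outside the interval by tail bounds on $\mu$. This step is where the $(|t|+1)^{O(1)}\log^2(1/\alpha)/\alpha^2$ degree arises, because the polynomial must stay controlled on a window of length $\sim t$ where $\mu$'s tails decay only like $\phi$ times $x^{2k}$.

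Then set $p_\pm=w\,q_\pm$ with $w=x^{2k}/c\ge 0$. The sandwich $p_-\le h\le p_+$ is not immediate, since $wq_+\ge w h\ne h$. The fix is to apply the construction to $g=h/w$ rather than to $h$, so that $h=w\,g$ and nonnegativity of $w$ preserves the inequalities. The function $g=\Ind(x\ge t)(c/x^{2k})$ is bounded by $1$ after normalizing $c=t^{2k}$, and it is a decreasing step. We then need polynomials $q_-\le g\le q_+$ with $\E_\mu[q_+-q_-]\le\alpha\,\mu(x\ge t)$. Near $x=0$, $w$ vanishes, so $p_-\le 0$ is automatic if $q_-\le 0$ there. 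Since $\E[w(q_+-q_-)]=Z_\mu\E_\mu[q_+-q_-]$ with $Z_\mu\approx\bar\Phi(t)\poly(t)$, we choose the accuracy $\alpha'=\alpha/\poly(t)$, which is absorbed into the degree bound.

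\textbf{Main obstacle.} The hardest step is Step 2: sandwiching the nonsmooth, rapidly decaying $g$ in $L_1(\mu)$ with polynomial growth controlled against $\mu$'s tails. I would handle it with a one-sided Jackson-type construction: sandwich the step by Lipschitz ramps, approximate $x^{-2k}$ on $[t,t+L]$ where it is analytic and slowly varying, and add a correction term $\pm C(x-a)^{2m}/R^{2m}$ to guarantee the inequalities outside the approximation window. The degree bookkeeping for this gives the stated $(|t|+1)^6\log^2(1/\alpha)/\alpha^2$.
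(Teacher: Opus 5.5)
Your Step 1 reduction is valid, and it is a genuinely different idea from the paper's. Since $w(x)=(x/t)^{2k}\ge 0$ and $h(0)=0$ for $t>0$, any polynomials $q_-\le g\le q_+$ with $g=h/w$ give $wq_-\le h\le wq_+$. With $2k\approx t^2$ one has $\E_{\cN}[w]=O(t)\,\E_{\cN}[h]$, so it suffices to get $\E_\mu[q_+-q_-]\lesssim\alpha/t$ for $\mu\propto w\phi$. Note that $g\,d\mu\propto h\,\phi\,dx$, so the tilt does not change the shape of the target. What it buys is that the required accuracy becomes polynomial ($\alpha/t$ in $L^1(\mu)$), instead of $e^{-\Omega(t^2)}$ near the origin; that exponential requirement is what forces the paper's bump power $k\gtrsim t^2$.

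However, the proposal stops exactly where the theorem's content begins. Step 2 is only gestured at (``one-sided Jackson-type construction'', ``add a correction term''): it has to construct $q_-\le g\le q_+$ with $\E_\mu[q_+-q_-]\lesssim\alpha/t$ while controlling their growth in the tails of $\mu$. The final degree $(|t|+1)^6\log^2(1/\alpha)/\alpha^2$ is asserted, not derived. This is not an off-the-shelf fact. Known additive sandwiching results concern thresholds under $\cN(0,1)$, not $g=\Ind(x\ge t)(t/x)^{2k}$ under $\mu$. The paper spends its entire construction on exactly this kind of one-sided approximation: the Chebyshev bump $(T_m(x)/(mx))^k$, the sliding-window integral, the corrections $-(Cx/B)^{mk}-C2^{-k/4}$, the reflection $p_+(x)=1-p_-(2t-x)$, and a region-by-region error analysis.

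Several statements in your sketch are also wrong, so it cannot be filled in as written.
\begin{enumerate}
\item Since $w$ is even, $\mu\propto x^{2k}\phi(x)$ is bimodal, not ``centered near $t$''. Half its mass sits in an $O(1)$-neighborhood of $-t$, where $g=0$. So $q_\pm$ must be within $O(\alpha/t)$ of $0$ at distance $2t$ from the jump. This must survive multiplication by whatever your approximant of $(t/x)^{2k}$, built on a short window near $t$, does at $-t$; you do not control that.
\item With $2k\approx t^2$, $(t/x)^{2k}\approx e^{-t(x-t)}$ varies on scale $1/t$, the same scale as the resolution you need, so it is not ``slowly varying''.
\item Because $\mu$ has width $\Theta(1)$ around $t$, the region beyond $t+L$ with $L=O(\log(1/\alpha)/t)$ carries constant $\mu$-mass. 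So it cannot be ``handled separately'' by a correction term $C(x-a)^{2m}/R^{2m}$ that takes over outside $[t,t+L]$. Approximant plus correction must stay within $O(\alpha/t)$ of $g$ in $L^1(\mu)$ on a window of length $\Omega(1)$ to the right of $t$.
\item Multiplying a sandwich $s_\pm$ of the step by an approximant $r$ of $(t/x)^{2k}$ does not preserve one-sidedness. $s_-$ is necessarily negative somewhere, and $r$ can change sign or overshoot $(t/x)^{2k}$ off its window. You would need sign control (e.g., $r$ a square) plus further corrections, each with its own error accounting.
\end{enumerate}
Until $q_\pm$ are actually constructed and their $\mu$-error bounded, what you have is a reduction, not a proof.
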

Within the TCS community, the study of sandwiching polynomial approximations has been a crucial component in the context of 
pseudorandomness and testable learning \cite{DGJ+:10,kane2010k,GKK23,slot2024testably}. 
The novelty of
\Cref{lem:structural} is that it obtains a multiplicative approximation 
with near-optimal degree, as opposed to the additive approximations achieved by previous work.

Within the approximation theory literature, the problem (without the sandwiching constraint) has received substantial attention 
 under the banner of weighted polynomial approximation theory \cite{ditzian1997jackson,freud1977markov,kroo1995weighted,levin1990l}. Particularly, results of such flavor are known as Jackson-type theorems for Freud weights of the form $\exp(-Q(x))$, where $Q$ is some polynomial. 
For the specific setting of $L^1$ polynomial approximation of $\Ind(x \geq t)$ under the Gaussian weight, the error bound is shown to be $\ell^{-1/2} ( \exp(-t^2/2) + \exp(-\ell/2) )$ \cite[Theorem 1.2]{ditzian1997jackson}, suggesting that a degree of $\ell = \Theta(t^2)$ suffices for multiplicative approximation without the sandwiching requirement. 

It is plausible conjecture that $O(t^2)$ remains the optimal 
degree for sandwiching polynomials as well. Such an improvement would allow us to reduce the parameter $\ell$ used in \Cref{alg:testable-massart-gamma} to $\Theta_\beta(\log(1/\gamma))$ and subsequently bring the sample complexity of \Cref{thm:testable-learning-gamma} down to \\
$d^{{O}_\beta(\log(\min\{1/\epsilon, 1/\gamma\}))}$, effectively matching the complexity of the non-testable learning algorithm 
(up to the dependence on $\beta$).

\paragraph{Implication for Non-testable Learning of Massart Halfspaces}
Note that the non-testable learner of \cite{DKKTZ22} 
requires the bias parameter $\gamma$ to be given as input.
In particular, its runtime increases to $d^{ \polylog(1/\eps) }$ 
when $\gamma$ is not known apriori. 
The tester underlying \Cref{thm:testable-learning-gamma} 
has the following interesting implication for the non-testable setting:
it can be used together with any learner for Massart halfspaces 
(satisfying the guarantee of \Cref{fact:general-massart}) 
to obtain a ``bias agnostic'' learner with runtime $d^{\polylog(1/\gamma)}$
when the bias $\gamma$ of the optimal halfspace is unknown.
The formal statement and analysis is deferred to \Cref{app:unknowngamma}.


\subsection{Technical Overview} 
\label{ssec:techniques}


Our starting point is the prior work of~\cite{DKKTZ22} 
for learning general Massart halfspaces under the Gaussian distribution 
in the non-testable setting. They gave an algorithm for this task 
with error $\opt + \eps$ and complexity 
$
d^{\log(\min\{1/\gamma,\,1/\eps\})}\,\poly(1/\eps)
$, 
where $\gamma$ is the bias of the target halfspace.
Our testable algorithm begins by using this procedure 
as a subroutine (with appropriate parameters)
to obtain a candidate halfspace
$h(\x) = \sign(\w \cdot \x - t).$
The main remaining task is to efficiently \emph{certify the optimality} 
of $h$ under a possibly adversarial distribution $D$,
while ensuring that we do not reject when $D$ 
satisfies our distributional assumptions.
Equivalently, our goal is to show that there is no other $\gamma$-biased halfspace $f$ whose error under $D$
is smaller than that of $h$ by more than $\poly(\eps)$.

Essentially, we would like to perform a test that certifies that $\E[h(\x)y\Ind(f(\x)\neq h(\x))]$ is at least $(1-2\eta)\pr[f(\x)\neq h(\x)]$ minus a small error. 
This inequality holds under Massart noise when $h$ is an optimal 
or near-optimal classifier, since for an optimal classifier $h$ 
we have $\E[h(\x)y\mid \x] > 1-2\eta$ for every $\x$. 
However, directly implementing such a test is computationally intensive, 
as it requires a cover over all competing classifiers.
As is standard, we instead replace the indicator of the region of disagreement by low-degree polynomials. 
In particular, we aim to verify that $\E[h(\x)y p^2(\x)]$ 
is approximately at least $(1-2\eta)\E[p^2(\x)]$.
We call any test that verifies the above for low-degree polynomials $p$ 
a polynomial non-negativity test.
If the polynomial $p$ closely approximates the disagreement region, 
then this test approximately certifies our goal. 
The difficulty is that the region $\Ind(f(\x)\neq h(\x))$ 
is an intersection of halfspaces, which may require high-degree 
polynomials to approximate or lead to a complicated analysis.

To address this, we partition the space into sufficiently 
fine stripes orthogonal to $\w$ on which $h$ is constant, 
and we certify the above property conditioned on each stripe. 
Within a single stripe, the disagreement region is described 
by a single halfspace, making it easier to analyze. 
To combine the tests across all stripes, we additionally verify 
that the probability mass of each stripe matches 
that of the corresponding Gaussian stripe. 
Finally, to justify the use of the polynomial test within each stripe, 
we check that the first few moments of $\x$ in each stripe 
approximately match the corresponding Gaussian moments. 
Overall, our certification procedure consists of (1) a stripe mass test, 
(2) a moment-matching test, and (3) a polynomial non-negativity test.

Now let us analyze what these tests yield for a single stripe. 
For this intuitive explanation, we assume for simplicity that
the stripe is infinitely thin. Assuming that $h$ is positive on the stripe, 
we would like to show that for $g(\x)=\Ind(f(\x)=-1)$ the quantity 
$\E[g(\x)y]$ is positive. This can be achieved by constructing {\em sandwiching 
polynomials} for $g$. In particular, suppose that we can find low-degree 
polynomials $p_+$ and $p_-$ such that $p_+ \ge g \ge p_-$ 
and such that the expectations of $p_+$ and $p_-$ are close under the 
Gaussian distribution. Since $p_+ \ge 0$, it can be written 
as a sum of squares of polynomials, and therefore 
by our polynomial non-negativity test $\E[p_+(\x)y]$ is relatively large. 
On the other hand, by our moment-matching test, 
the quantity $\E[p_+(\x)-p_-(\x)]$ is small, 
which in turn bounds the difference between $\E[p_+(\x)y]$ and 
$\E[g(\x)y]$. Thus, as long as we can find such $p_+$ and $p_-$ 
with $\E[p_+(\x)-p_-(\x)]$ being a small multiple of $\E[p_+(\x)]$, 
or equivalently $\E[g(\x)]$, this approach suffices.

Moreover, we note that the halfspace induced by $f$ on each stripe 
is not necessarily $\gamma$-biased: its bias depends both 
on the position of the stripe and on the angle between 
the normal vectors defining $f$ and $h$.
However, by an error-accounting argument, we can show 
that only stripes whose bias is close to $\gamma$ require certification. 
Indeed, highly biased stripes contribute only a small probability mass 
to the region of disagreement under the Gaussian distribution. 
Furthermore, since we approximately match moments 
under the distribution $D$, we can show that this small contribution 
is preserved (see the proof of \Cref{lem:near-parallel}). This 
allows us to safely ignore such stripes in the certification process. 
For details, we refer the reader to Appendix~\ref{sec:soundness-app}.

Next we discuss our structural result for sandwiching polynomials (\Cref{lem:structural}). The optimal degree of such polynomials has 
been studied extensively in the context of fooling Polynomial Threshold Functions with bounded independence; see, e.g., \cite{DGJ+:10,diakonikolas2010bounded,kane2010k}. Moreover, 
such polynomials have been leveraged 
as important technical tools in the prior literature of testable 
learning~\cite{GKK23,KSV24,slot2024testably}. 
These results typically aim for small \emph{additive} error, i.e., the 
$L^1$ (or $L^2$) norm of $p_+ - p_-$ is at most $\eps$ for \emph{all} 
threshold functions. Importantly, achieving such an additive error bound 
is known to require polynomials of degree $\Theta(1/\eps^2)$ for linear 
threshold functions. In our setting, this would translate to degree 
$\Omega(1/\gamma^2)$ for $\gamma$-biased halfspaces, 
leading to a sample complexity of $d^{\Omega(1/\gamma^2)}$\footnote{For general halfspaces, this gives $d^{O(1/\eps^2)}$, which does not improve on the complexity of the agnostic setting.}. 
We circumvent this obstacle by constructing 
sandwiching polynomials with \emph{multiplicative} error guarantees. 
Specifically, for a given linear threshold function $g$,
we only require $\E[p_+ - p_-]$ to be bounded from above by $\alpha \E[g]$, where we eventually set $\alpha$ to be a moderately small 
universal constant.
We show that such sandwiching polynomials exist 
with degree $\poly(t/\alpha)$, where $t$ is the threshold defining $g$. 
In our application  of this structural result, 
it suffices to set $t \approx \sqrt{ \log(1/\gamma) }$, 
which leads to our quasi-polynomial complexity.

Towards showing \Cref{lem:structural}, the approach 
used by \cite{diakonikolas2010bounded,kane2010k} 
of first mollifying the threshold function $g$ into a smooth function  
and then applying Taylor expansion, runs into difficulties. 
In particular, even when the requirement is only to achieve 
$\alpha$-multiplicative approximation, the approximation error 
needs to be about $\exp(-t)$ for points close to the origin 
(since the Gaussian pdf is some constant around the origin) 
when $g$ has a large threshold $t$.
However, known mollifiers can only guarantee a sub-gaussian accuracy 
of $\exp( - a^{ 2-c } )$, where $c>0$ is some constant, 
for points at distance $a$ from the threshold.
This implies that the approximation error of the mollifier, 
even before any Taylor expansion, 
will be at least $\exp(-t^{2-c}) \gg \exp(-t)$.

Instead, we take a detour from the mollification based design paradigm, 
and make use of \emph{Chebyshev polynomials},
which are nearly optimal for producing high-degree polynomials 
that remain bounded over a large interval, 
to construct the approximation polynomials directly.
It is worth nothing that our approach bears some high-level 
similarity with the work of \cite{DGJ+:10}, 
which also makes use of Chebyshev's polynomial approximation theorem 
as a black-box to prove existence of additive sandwiching polynomials. 
Importantly, our construction has the advantage of being fully explicit, 
making it handy to tune various design parameters in order to fit precisely 
the needs of multiplicative sandwiching polynomials.

We now briefly sketch our construction.
Consider the $m$-th order Chebyshev polynomial $T_m$.
For odd $m$, the polynomial $T_m(x)/x$ is large near $0$, and decays like 
$1/|x|$ away from $0$ (at least within the interval $[-1,1]$), and is 
bounded by $|x|^m$ outside of that interval. By taking a sufficiently high 
power $k = \poly(t)$ 
of this function, we obtain a ``bump'' polynomial that (1) is large at $0$, 
(2) decays as $|x|^k$ elsewhere in $[-1,1]$, 
and (3) grows moderately as $|x|^{mk}$ outside $[-1,1]$. 
By appropriately scaling and translating these polynomials, 
we can place such a bump near the relevant threshold 
and ensure that it is small over a sufficiently wide range. 
By convolving this bump with an interval, we obtain an approximation to a 
threshold function, 
and with additional care we convert this approximation into sandwiching 
polynomials. 
A careful error analysis yields the desired guarantees; for details, see 
Appendix~\ref{sec:sandwiching-result}.


\subsection{Comparison to Prior Techniques}
\vspace{-0.2cm}
We briefly compare the techniques employed by the prior work of~\cite{GKSV25} to ours, 
and illustrate the main bottlenecks their approach would face in the general halfspace case.

Firstly, as it is common in the testable learning literature, 
both approaches contain standard procedures (slice mass test, moment matching test) 
to verify the Gaussianity of the data marginal distribution 
conditioned on different stripe regions along the direction of the learned halfspace. 
For simplicity, we will assume that the data marginal is exactly Gaussian in the rest of the discussion.

When it comes to test the noise assumption, while inspired by different technical ideas, 
the end goal of both works is to certify an upper bound on the disadvantage 
of the learned halfspace $h$ compared to the optimal halfspace $h^*$. 
For convenience, define $I(\x) = \Ind( h^*(\x) \neq h(\x) )$ 
as the indicator of the disagreement region between $h^*$ and $h$.
As such, the disadvantage of $h$ can be written as 
$\E[ I(\x)  \Ind\{ y \neq h(\x) \}]$, 
and we always have the upper bound 
$\E[ I(\x)  \Ind\{ y \neq h(\x) \}] \leq \E[  (\eta + \eps) I(\x) ]$ 
under the Massart noise assumption.

Following the recurring theme of testable learning, both works first 
certify the inequality for degree-$k$ non-negative polynomials, and then 
employ polynomial approximation results for LTFs to transfer the guarantee 
to the indicator function $I(x)$ up to some approximation error term.

Yet known polynomial approximation results only allow 
for an approximation error on the order 
of  $\min(1 / \sqrt{k}, 1/t^2)$ when the linear threshold function 
has threshold $t$ (see e.g., Corollary B.8 of \cite{GKSV25}).  
For general halfspaces, this is insufficient 
when the optimal halfspace has a uniform
bias of $t = \sqrt{\log(1/\gamma)}$ 
across all stripes.\footnote{Note that this is impossible if the optimal halfspace is restricted to be homogeneous, 
hence clearing the obstacles for their analysis of homogeneous halfspaces.} 
Indeed, even if we take the degree $k$ to be $\polylog(1/\gamma)$, 
the approximation error will be on the order 
of $1 / \polylog(1/\gamma)$---far greater than the desired bound 
of $\E[  (\eta + \eps) I(x) ]$, 
which is on the order of $\poly(\gamma)$ under a Gaussian marginal.

Fortunately, our new multiplicative polynomial approximation result 
allows us to circumvent this issue. In particular, 
our approximation error is on the order of $C  \E[ I(\x)  ]$ 
for some constant $C$ of our choice. By taking $C$ to be 
a small constant multiple of $(\eta + \eps)$, 
we can still guarantee that 
$\E[  I(\x)  \Ind\{ y \neq h(\x) \}] \leq (1 + o(1)) \E[  (\eta + \eps) I(\x) ]$, 
which turns out to be sufficient for completing the soundness analysis.


\section{Preliminaries} \label{sec:prelims}
%
For $n \in \Z_+$, let $[n] := \{1, \ldots, n\}$.
We use small boldface characters for vectors
and capital bold characters for matrices.  
For $\bx \in \R^d$ and $i \in [d]$, $\bx_i$ denotes the
$i$-th coordinate of $\bx$, and $\|\bx\|_2 := (\littlesum_{i=1}^d \bx_i^2)^{1/2}$ denotes the
$\ell_2$-norm of $\bx$. {Throughout this text, we will often omit the subscript and simply write $\|\x\|$ for the $\ell_2$-norm of $\bx$.}
We will use $\bx \cdot \by $ for the inner product of $\bx, \by \in \R^d$
and $ \theta(\bx, \by)$ for the angle between $\bx, \by$. 
For vectors $\x, \vec v\in \R^d$ we denote by
$\x^{\vec v}$ the projection of $\x$ onto the line spanned by $\vec v$ and by $\x^{\perp \vec  v}$ we denote the projection of $\x$ to the orthogonal complement of $\vec v$. 
We also denote by $\vec v^{\perp}$ the orthogonal complement of $\vec v$.
We slightly abuse notation and denote {by}
$\vec e_i$ the $i$-th standard basis vector in $\R^d$.  
For a matrix $\vec M\in\R^{n\times m}$, we denote by $\|\vec M\|_2,\|\vec M\|_F$ to be the operator norm and Frobenius norm respectively. 

We use the standard asymptotic notation, where 
$\wt{O}(\cdot)$ is used to omit polylogarithmic factors.
{Furthermore, we use $a\lesssim b$ to denote that there exists an absolute universal constant $C>0$ (independent of the variables or parameters on which $a$ and $b$ depend) such that $a\le Cb$, $\gtrsim$ is defined similarly.
We use the notation $g(t)\le \poly(t)$ for a quantity $t\ge1$ to indicate   that there exists constants $c,C>0$ such that $g(t)\le Ct^c$. Similarly we use $g(t)\ge \poly(t)$ for a quantity $t<1$ to denote  that there exists constants $c,C>0$ such that $g(t)\ge Ct^c$.
}We refer the reader to Appendix~\ref{sec:omitted-facts} for additional preliminaries.


\section{Algorithm Description and Analysis} \label{sec:alg}
In this section, we present our testable learning algorithm (see Algorithm~\ref{alg:testable-massart-gamma} for the detailed pseudocode). 
As described in \Cref{ssec:techniques}, our algorithm first runs 
a proper learner to obtain a candidate halfspace, and then performs a sequence of tests 
to verify that the accuracy of the resulting halfspace is near-optimal.

\begin{algorithm}[H]
    \centering
    \fbox{\parbox{6in}{
        {\bf Input:}
        Accuracy $\eps\in(0,1)$, 
        constant $\eta\in (0,1/2)$, bias parameter $\gamma\in(0,1/2]$,
        and sample access to a distribution $D$ over $\mathbb{R}^d\times\{\pm1\}$.\\
        {\bf Output:}
        Either \emph{Accept} or \emph{Reject}
        and a hypothesis $h$, such that the probability of \emph{Accept} and 
        $\Pr_{(\x,y)\sim D}[h(\x)\neq y] > \opt_\gamma+\eps$, where $\opt_{\gamma}\eqdef \min_{f\in \cH_{d,\gamma}} \Pr_{(\x,y)\sim D}[f(\x)\neq y]$ is at most $1/3$.

\medskip

        \begin{enumerate}[leftmargin=*]
            \item\label{line:init-inputs}
            Set $\gamma \gets \max(\gamma,\eps)$, $\beta \gets 1-2\eta$, $\gamma\gets \min(\gamma,\beta) $, $\eps\gets \min(\eps,\beta/2)$.
            
            \item \label{line:init}
             Set  $l\gets C\log^3(1/\gamma)\log^2(1/\beta)/\beta^2$, $\eps'\gets \eps^C/3^{l}$, $N\gets d^{Cl}/\eps^{2C}$, $\Delta\gets \eps^{2}$, $\tau_p\gets\eps^{C}/d^l$, and $\tau_m\gets\eps^C/d^l$
            for a sufficiently large universal constant $C\in \Z_+$.

            \item \label{line:run-prev-alg}
            Run the algorithm of \cite{DKKTZ22} (see \Cref{fact:general-massart}) with parameters $\eps',\delta, \gamma$ on $N$ i.i.d.\ samples from $D$ 
            and obtain a halfspace $h(\x)\eqdef \mathrm{sgn}(\w\cdot \x + t)$.
            \item \label{line:slices} Let \(n\eqdef 2\left\lceil C\sqrt{\log(1/\eps)}/\Delta\right\rceil\) and define breakpoints \(s_i\) for \(i\in[n]\) by \(s_1=-C\sqrt{\log(1/\eps)}\) and \(s_i=s_{i-1}+\Delta\) for \(2\le i\le n\).
Define slices \(S_i\) for \(i\in[n+1]\) as \(S_1\eqdef\{\x\in\mathbb{R}^d:\ \w\cdot\x\in(-\infty,s_1]\}\), \(S_{n+1}\eqdef\{\x\in\mathbb{R}^d:\ \w\cdot\x\in(s_n,\infty)\}\), and \(S_i\eqdef\{\x\in\mathbb{R}^d:\ \w\cdot\x\in(s_{i-1},s_i]\}\) for \(2\le i\le n\).
\item Compute an orthonormal basis of $\w^{\perp}$ and denote by $\vec U\in\R^{d\times(d-1)}$ the matrix whose columns form a basis of $\w^\perp$. \label{line:change-of-basis}
                \item Draw $N$ i.i.d.\ samples from $D$ and denote by $\widehat{D}$ the empirical distribution of these samples.\label{line:empirical}
        
            \item  For each slice $S\in \{S_i: i\in [n+1]\}$:
            \begin{enumerate}[leftmargin=*, nosep]

                \item {\it \textbf{Slice mass test}:}\label{line:mass-test}
                Compute
                $\widehat{p}_S= \Pr_{\x\sim \widehat{D}_\x}[\x\in S]$ and if
                $\abs{\widehat{p}_S-\Pr_{\x\sim \cN^d}[\x\in S]}\geq \tau_p$, then \emph{Reject}.

                \item {\it \textbf{Orthogonal moment matching test}:}\label{line:moment-test}
                For all $\alpha\in \N^{d-1}, 1\leq \abs{\alpha}\leq l$ compute
                $\widehat{m}_{\alpha,S}=
                \E_{\x\sim \widehat{D}_\x}[He_{\alpha}(\vec U^{\top} \x)\mid \x\in S]$
                and if
                $\abs{\widehat{m}_{\alpha,S}-\E_{\x\sim\cN^d}[He_{\alpha}(\vec U^{\top} \x)\mid \x\in S]}\geq \tau_m$, then \emph{Reject}.
                
                \item {\it \textbf{Non-negativity certificate}:}\label{line:negativity-test}
                Let $H(\cdot)$ be the vector of all $(d-1)$-dimensional Hermite polynomials of  of total degree $\le l$,
                and define
                \[
                \widehat{\vec M}\eqdef \E_{(\x,y)\sim \widehat{D}}[H(\vec U^{\top} \x)H^{\top}(\vec U^{\top} \x) \left( yh(\x)- \beta+\eps \right) \mid \x \in S ].
                \]
                If $\widehat{\vec M} \not \succeq \vec 0$, then \emph{Reject}. 
            \end{enumerate}

            \item {\it Accept} and output $h$.
        \end{enumerate}
    }}
  \vspace{0.3cm}
    \caption{Testable Learner for $\gamma$-Biased halfspaces.}
    \label{alg:testable-massart-gamma}
\end{algorithm}

\paragraph{Parameter Description} 
The algorithm takes as input the target accuracy $\epsilon$,
Massart noise bound $\eta < \tfrac{1}{2}$, and bias parameter $\gamma$.
It then sets internal parameters, the noise bias $\beta:=1-2\eta$, the polynomial degree $l$, slice width $\Delta$,
sample size $N$, and tolerance levels $\tau_p$ and $\tau_m$ for estimating matching probabilities
and moments, respectively.

\subsection{Proof of Correctness} \label{ssec:analysis}

In the rest of this section, we prove that our algorithm 
satisfies the guarantee of \Cref{def:testable-learning}.

We start with the completeness part; 
that is, if the distributional assumptions are satisfied, 
then our tests accept with high probability. 
The proof amounts to establishing certain structural properties 
of the Massart noise oracle and some standard Gaussian concentration results. 
We refer the reader to Appendix~\ref{sec:proofofcompletness} for the full proof.

\begin{proposition}[Completeness]\label{lem:completeness}
Let $\eta \in [0, 1/2)$ and define $\beta\eqdef 1-2\eta$. 
Let $D$ be a distribution over $\R^d\times \{\pm 1\}$ whose $\x$-marginal is $\cN^d$ and  that satisfies the $\eta$-Massart noise condition. 
Then \Cref{alg:testable-massart-gamma} using $N\geq d^{\log^3( 1/(\beta\max(\eps,\gamma)))\log^2(1/\beta)/\beta^2}\poly(1/\eps)$ samples runs in $\poly(N,d)$ time and returns a halfspace $h$ such that
with probability at least $2/3$ it holds that \[
\Pr_{(\x,y)\sim D}[h(\x)\neq y]
\le \opt_{\gamma}
 + \eps,\;\; \opt_{\gamma}\eqdef \min_{f\in\mathcal{H}_{d,\gamma}} \Pr_{(\x,y)\sim D}[f(\x)\neq y]\;.
\]
\end{proposition}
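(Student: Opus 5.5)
The plan is to prove two things under the stated assumptions: (i) the candidate $h$ from Line~\ref{line:run-prev-alg} is near-optimal and in fact close to the Massart target; (ii) all three tests in each slice pass with high probability. A union bound then gives overall success probability $2/3$.

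\textbf{Step 1: $h$ is close to the target.} Let $f^*$ be the halfspace with respect to which $D$ is $\eta$-Massart; $f^*$ is not assumed to be $\gamma$-biased. By Massart noise, $f^*$ is Bayes optimal, and $\E[y f^*(\x)\mid \x]\ge \beta$ for every $\x$. Hence $\opt\le\opt_\gamma$, and for any hypothesis $g$,
\[
\Pr[g\neq y]-\opt\;\ge\;\beta\,\Pr[g(\x)\neq f^*(\x)].
\]
I will invoke \Cref{fact:general-massart} with accuracy $\eps'$ and bias parameter $\gamma$ (after Line~\ref{line:init-inputs} we have $\gamma\ge\eps$). This yields, with probability $\ge 5/6$, error at most $\opt+\eps'\le \opt_\gamma+\eps'$, which gives the first claim. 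The same bound also gives $\Pr_{\cN^d}[A]\le \eps'/\beta$, where $A\eqdef\{h\neq f^*\}$.

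The point I expect to be the main obstacle is that the guarantee of \Cref{fact:general-massart} must hold when $f^*$ itself has bias below $\gamma$. I would first check whether the DKKTZ22 guarantee is stated against the Bayes-optimal halfspace with complexity depending only on $\max(\gamma,\eps)$. If it is only stated against $\opt_\gamma$, I would handle the case $\mathrm{bias}(f^*)<\gamma$ separately. In that case $f^*$ is $\gamma$-close to a constant, and I would need a short argument that the output is still $O(\eps'/\beta)$-close to $f^*$. Everything below uses only $\Pr[A]\le\eps'/\beta$ (up to constants).

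\textbf{Step 2: mass and moment tests.} Under $\cN^d$, the empirical slice mass concentrates to within $\tau_p$ by Hoeffding. For the moment test, condition on $\x\in S$. Because $\vec U^\top\x$ is independent of $\w\cdot\x$, conditionally on $S$ it is exactly $\cN^{d-1}$. So $\E[He_\alpha(\vec U^\top\x)^2\mid S]=O(1)$ for normalized Hermite polynomials, and the number of samples in $S$ is $\gtrsim N\Pr[S]$. Every slice satisfies $\Pr[S]\ge\Delta\,\phi(C\sqrt{\log(1/\eps)})=\poly(\eps)$. Chebyshev's inequality plus a union bound over the $d^{O(l)}$ multi-indices and $n+1$ slices then gives deviation below $\tau_m$, since $N=d^{Cl}/\eps^{2C}$.

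\textbf{Step 3: non-negativity certificate.} The same independence gives $\E[HH^\top\mid S]=\vec I$. The population matrix is $\vec M=\E[HH^\top(\E[yh\mid\x]-\beta+\eps)\mid S]$. Off $A$ we have $\E[yh\mid\x]\ge\beta$, and on $A$ we have $\E[yh\mid\x]\ge -1$. Hence $\vec M\succeq \eps\vec I-2\E[HH^\top\Ind_A\mid S]$. For a unit vector $\bv$, Cauchy--Schwarz and Gaussian hypercontractivity for degree-$l$ polynomials give
\[
\E[(\bv\cdot H)^2\Ind_A\mid S]\le \sqrt{\E[(\bv\cdot H)^4\mid S]}\,\sqrt{\Pr[A\mid S]}\le 3^{l}\sqrt{\eps'/(\beta\Pr[S])}.
\]
This is at most $\eps/4$ by the choice $\eps'=\eps^C/3^l$ and $\Pr[S]\ge\poly(\eps)$, so $\vec M\succeq(\eps/2)\vec I$. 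Finally, each entry of $\widehat{\vec M}$ concentrates to within $\eps^C/d^{l}$ by Chebyshev; the conditional second moments of products of Hermite polynomials are bounded by $9^l$, again via hypercontractivity. Then $\|\widehat{\vec M}-\vec M\|_F\le d^{O(l)}\cdot\eps^{C}/d^{Cl/2}<\eps/2$, so $\widehat{\vec M}\succeq 0$ and the algorithm accepts. The running time is clearly $\poly(N,d)$.
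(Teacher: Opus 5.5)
You follow the paper's route. The learner's guarantee gives $\Pr[h\neq f^*]\le \eps'/\beta$. Hoeffding and Chebyshev handle the mass and moment tests, using that $\vec U^{\top}\x\sim\cN^{d-1}$ conditionally on each slice. Cauchy--Schwarz with hypercontractivity then controls the effect of the disagreement set $A$ on the certificate; your direct bound $\vec M\succeq \eps\vec I-2\E[HH^{\top}\Ind_A\mid S]$ is a cleaner packaging of the paper's comparison with $\vec M^\star$.

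\textbf{The gap: the final inequality of Step~3 fails.} You correctly bound $\sqrt{\E[(\bv\cdot H)^4\mid S]}=\|\bv\cdot H\|_{L^4}^2$ by $3^{l}$. But with the algorithm's $\eps'=\eps^C/3^l$, your bound evaluates to
\[
3^{l}\sqrt{\frac{\eps^{C}\,3^{-l}}{\beta\Pr[S]}}\;=\;3^{l/2}\,\frac{\eps^{C/2}}{\sqrt{\beta\Pr[S]}},
\]
which is at most $\eps/4$ only when $l=O(C\log(1/\eps))$. However, $l=C\log^3(1/\gamma)\log^2(1/\beta)/\beta^2$ is of order $C\log^3(1/\eps)$ or larger when, e.g., $\gamma\approx\eps$. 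So no choice of the universal constant absorbs the factor $3^{l/2}$. Other H\"older exponents do not help either: the $l$-dependent factor $(2r-1)^{l}\,3^{-l(1-1/r)}$ is $e^{c_r l}$ with $c_r>0$ for every $r>1$.

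The paper reaches $\eps'=\eps^C/3^l$ only because it writes $\sqrt{\E[p^4]}\le 3^{l/2}$, using $\|p\|_{L^4}$ where $\|p\|_{L^4}^2$ is needed. So the same hole is present there, masked by that slip. The simple repair is to call the learner with $\eps'\le \eps^{C}9^{-l}$. Then your bound becomes $\eps^{C/2}/\sqrt{\beta\Pr[S]}\le \eps/4$. The extra $9^{O(l)}$ in the learner's $\poly(1/\eps')$ sample cost is absorbed into $d^{O(l)}$, so the stated sample bound is unchanged. Note that this changes the parameter of Line~\ref{line:init}. Keeping the algorithm as written would require a finer argument exploiting the geometry of $A$, not a generic moment bound on an arbitrary set of conditional mass $3^{-l}\poly(\eps)$.

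\textbf{Step~1: the obstacle is real, but the fallback is unavailable.} If the Massart target $f^*$ has bias below the $\gamma$ passed to the learner, \Cref{fact:general-massart} guarantees nothing. Nothing then forces $h$ to be $O(\eps'/\beta)$-close to $f^*$, and acceptance can genuinely fail. For instance, suppose $h$ is parallel to $f^*$ and some slice $S_i$ lies strictly between their thresholds. Then $h\neq f^*$ on all of $S_i$, so $\E[yh\mid\x]\le-\beta$ there, and already the degree-$0$ entry of $\widehat{\vec M}$ on $S_i$ is negative with high probability.

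The paper's proof simply invokes the Fact. It thereby tacitly assumes $f^*\in\cH_{d,\gamma}$ for the $\gamma$ passed to the learner, which is what completeness for $\mathcal{D}_\gamma$ in \Cref{thm:testable-learning-gamma} requires. You should state that hypothesis explicitly rather than try to prove closeness in the low-bias case.
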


We now turn to the soundness guarantee: with high probability, our algorithm accepts and attains error close to optimal.
First, we state our guarantee.

\begin{proposition}[Soundness against $\gamma$-Biased halfspaces]\label{prop:soundness-gamma}
Let $D$ be a distribution over $\R^d\times\{\pm1\}$.
The probability that \Cref{alg:testable-massart-gamma} accepts and  outputs a hypothesis $h$ such that 
$$
\Pr_{(\x,y)\sim D}[h(\x)\neq y]
> \opt_{\gamma}
 + \eps,\;\; \opt_{\gamma}\eqdef \min_{f\in\mathcal{H}_{d,\gamma}} \Pr_{(\x,y)\sim D}[f(\x)\neq y]\;,
$$is at most $1/3$.
\end{proposition}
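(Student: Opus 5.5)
We argue by contrapositive. Suppose all three tests pass on every slice. We show deterministically, up to an empirical-to-population step, that for every $f\in\mathcal{H}_{d,\gamma}$,
\[
\Pr_D[h\neq y]\le \Pr_D[f\neq y]+\eps .
\]

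\textbf{Step 1: from empirical to population tests.} The tests are run on $\widehat D$. We first transfer them to $D$, or we compare directly on $\widehat D$ and use uniform convergence of halfspace errors; the VC dimension is $d+1$, and $N$ is huge. Either way, it suffices to prove the error comparison under the empirical distribution on which the tests passed, up to an additive $\eps/3$ from uniform convergence. So we treat $D$ as a distribution that passes all three tests exactly.

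\textbf{Step 2: decompose by slices.} Write
\[
\Pr[h\ne y]-\Pr[f\ne y]=\E\bigl[\Ind(f\ne h)\, (-y h(\x))\bigr].
\]
Decompose this over slices $S_i$. The outer slices $S_1$ and $S_{n+1}$ have Gaussian mass $\le \eps^{C'}$. By the slice mass test, their mass under $D$ is at most that plus $\tau_p$, so they contribute negligibly. Inside a thin slice, $h$ is constant, except for the single slice containing the threshold, whose mass is $O(\Delta)=O(\eps^2)$. On a slice with $h\equiv s$, the disagreement region $\{f\ne h\}$ is $\{s(\bv\cdot\x -t')<0\}$. Write $\bv=a\w+b\bu$ with $\bu\perp\w$. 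Across the slice, $a\,\w\cdot\x$ varies by at most $|a|\Delta$, so the disagreement region is sandwiched between two halfspaces $g_{\mathrm{in}}\le \Ind(f\ne h)\le g_{\mathrm{out}}$ in $\bu\cdot\vec U^\top\x$. Their thresholds differ by $|a|\Delta/|b|$. Case split on $|b|$. If $|b|$ is tiny (near-parallel), then $f$ is essentially constant on each slice. The disagreement is then confined to $O(1)$ boundary slices plus a region controlled by the mass test; this is where we invoke the idea described for \Cref{lem:near-parallel}. Otherwise, we work with the one-dimensional threshold $\Ind(\bu\cdot \z\ge t_i)$ in the orthogonal coordinates $\z=\vec U^\top\x$.

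\textbf{Step 3: the per-slice certificate (the heart of the argument).} Split slices according to whether the Gaussian-conditional mass $q_i=\E_{\cN}[g\mid S_i]$ is below $\eps^{C''}$ or not. For small $q_i$, use moment matching together with an upper sandwich polynomial: $\E_D[g\mid S_i]\le \E_D[p_+\mid S_i]\approx\E_{\cN}[p_+\mid S_i]\le 2q_i$. Summed with slice masses, this contributes $O(\eps)$. For the remaining slices, $|t_i|\lesssim\sqrt{\log(1/\gamma)}$; this is where $\gamma\ge\eps$ and the bias assumption are used, and we must also absorb stripes whose threshold is large by the same small-mass bound. Apply \Cref{lem:structural} with $\alpha=c\beta$, which gives degree $O(\log^3(1/\gamma)\log^2(1/\beta)/\beta^2)\le l$.

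Since $p_+\ge0$ is univariate, it is a sum of squares of polynomials of degree $\le l/2$, expressible in the Hermite basis $H$. The non-negativity certificate $\widehat{\vec M}\succeq0$ then gives
\[
\E[p_+ (yh-\beta+\eps)\mid S_i]\ge 0 .
\]
Hence, using $|yh-\beta+\eps|\le 2$,
\[
\E\bigl[g\,(yh)\mid S_i\bigr]\ge \E\bigl[p_+ (yh)\mid S_i\bigr]-2\,\E\bigl[(p_+-p_-)\mid S_i\bigr]\ge(\beta-\eps)\E[p_+\mid S_i]-2\,\E[p_+-p_-\mid S_i].
\]
By moment matching with tolerance $\tau_m$, the expectation $\E_D[p_+-p_-\mid S_i]$ is within $\tau_m\cdot(\text{coefficient norm}\le 3^l\poly)$ of its Gaussian value, which is $\le\alpha q_i$. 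Also $\E_D[p_+\mid S_i]\ge q_i-\text{tiny}$. With $\alpha=c\beta$, this gives $\E[g\,yh\mid S_i]\ge 0$ up to errors $\eps^{C}$, and the same holds for $g_{\mathrm{out}}$ versus the true disagreement (their Gaussian mass gap is $O(\Delta/|b|)$-small relative to the slice). Summing over slices, weighted by slice masses (correct up to $\tau_p$), yields $\E[\Ind(f\ne h)(-yh)]\le O(\eps)$, which contradicts the assumption of rejection failure.

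\textbf{Main obstacle.} The main obstacle is the error accounting in Step 3. The multiplicative guarantee of \Cref{lem:structural} must beat the $(\beta-\eps)$ margin uniformly. The additive moment-matching errors, amplified by the coefficient size $3^{l}$, must be absorbed by the choice $\tau_m=\eps^C/d^l$ and $\eps'=\eps^C/3^l$. And the thin-slice sandwiching must be handled for slices where the induced threshold is large or $|b|$ is small. I would first try handling those slices purely via the small-mass argument with $p_+$, and resort to the near-parallel analysis only when $|b|\lesssim\Delta$.
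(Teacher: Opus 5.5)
The gap is in Step 3, where you dispose of the slices with small Gaussian disagreement $q_i$. Your reduction to $\widehat D$, the removal of the tail and threshold slices, and the per-slice certificate (essentially \Cref{lem:advantage-lemma}) are fine.

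The test degree $l=C\log^3(1/\gamma)\log^2(1/\beta)/\beta^2$ depends only on $\gamma$. After Line~\ref{line:init-inputs} we have $\gamma\ge\eps$, and possibly $\gamma\gg\eps$; for example, $\gamma$ constant gives $l=O_\beta(1)$.

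\begin{itemize}
\item Your bound $\E_D[g\mid S_i]\lesssim q_i$ for $q_i<\eps^{C''}$ needs a multiplicative sandwich at a threshold $\gtrsim\sqrt{\log(1/\eps)}$. By \Cref{lem:structural} that is degree $\gtrsim\log^3(1/\eps)$, which is never tested.
\item In fact degree-$l$ moment matching cannot certify $O(\eps)$ disagreement on such slices at all. With $l=O(1)$ matched moments, tail mass beyond $\sqrt{\log(1/\eps)}$ can be polylogarithmically small in $1/\eps$, far above $\eps$. What the tests do certify, via \Cref{cl:probability-transfer}, is a bound of order $\gamma^{\Theta(1)}$.
\item Your claim that $q_i\ge\eps^{C''}$ forces $|t_i|\lesssim\sqrt{\log(1/\gamma)}$ is backwards. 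It only gives $|t_i|\lesssim\sqrt{\log(1/\eps)}$, and $\gamma\ge\eps$ makes that the larger quantity.
\item The $\gamma$-bias of $f$ is a global property. It does not bound the induced threshold slice by slice.
\end{itemize}

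So slices with $\eps^{C''}\lesssim q_i\lesssim\gamma^{C'}$ can be neither certified nor shown negligible. As far as the tests can tell, their total disadvantage may be $\gamma^{\Theta(1)}\gg\eps$.

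The missing idea is compensation. You throw away the positive term in your per-slice inequality and keep only ``$\ge 0$ up to $\eps^{C}$''. The proof needs that term: the uncertified slices cost up to $\gamma^{C_2}$, and this must be paid for by the certified advantage $\gtrsim\beta q_i$ on slices with $q_i\ge\gamma^{C_1}$. Showing this advantage is large enough is where the $\gamma$-bias of $f$ enters, and it needs the angle case split.

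\begin{itemize}
\item When $|\bv\cdot\w|\le 1/2$, a constant-width band around $\w\cdot\x=0$ has induced threshold $O(\sqrt{\log(1/\gamma)})$. This yields total advantage $\Omega(\beta\gamma^{C_1})$, which beats $\gamma^{C_2}$.
\item When $|\bv\cdot\w|\ge 1/2$, the certified band around $z_0=t/\cos\theta$ (where $f$ is unbiased) has mass only about $\gamma^{C}\tan\theta$. This can be as small as $\gamma^{C}\eps^2$, so a total bound of $\gamma^{C_2}$ on the leftover is useless. One must show the uncertified disagreement is itself $O(\gamma^{C''}\tan\theta)$. The paper does this using that the induced threshold drifts linearly, by $\Delta/\tan\theta$ per slice, and summing Markov bounds over the slices.
\item Only in the near-parallel regime $\|\bv^{\perp\w}\|\le\eps^2$ do the leftover slices truly contribute $O(\eps)$. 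There induced thresholds are $\gtrsim 1/\eps$, so second moments suffice. Even then, the work is done by the degree-$0$ non-negativity test on fully disagreeing slices, not by confining the disagreement to $O(1)$ boundary slices as your Step 2 suggests.
\end{itemize}
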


Before continuing with the soundness proof, we isolate the following key technical lemma.
Informally, it says that our tests are strong enough to certify a \emph{local advantage} of the returned classifier $h$ against any competitor halfspace $f$ on every slice where $f$ is not too biased under the Gaussian.
More precisely, consider a non-tail  slice $S$ on which $h$ is constant, i.e., $S=\{\x : \w\cdot \x\in[a,b]\}$ with the threshold of $h$ not in $[a,b]$, and suppose that $h$ passes the moment-matching and non-negativity tests on $S$.
We call the sets $\{\x : \w \cdot \x = z\}$ where $z\in [a,b]$ the fibers of $S$. Note that any halfspace $f:\R^d\to \R$ restricted to a fiber still defines a halfspace on the subspace $\w^\perp$; we refer to the threshold of this halfspace as the induced threshold on that fiber.
Then for any competitor halfspace $f$, if the disagreement region is non-negligible  (at least $\gamma$) under the Gaussian distribution, and the induced threshold  does not vary much across fibers of $S$, the lemma implies that $h$ beats $f$ on that slice by an advantage of at least $\Omega\!\big((1-2\eta)\gamma\big)$. 

\begin{lemma}[$\gamma$-biased slices lead to $\Omega(\gamma)$ advantage]\label{lem:advantage-lemma}
There exists a sufficiently large universal constant $C\in \Z_+$ such that the following holds. 
Let $\eps,\eta\in(0,1/2), \gamma\in (0,1/2]$ and $\vec v,\w\in \R^d$ be unit vectors. 
Suppose that
$\eps<\min\{\gamma,\beta/2\}$
where $\beta\eqdef 1-2\eta$. 
Let $h(\x)=\sign(\w\cdot \x-\tau)$ 
be a halfspace and let $S=\{\x:\w \cdot \x \in [a,b]\}, a,b\in \R, a<b, \tau\not \in [a,b]$. 
Let $D$ be a distribution over $\R^d \times \{\pm 1\}$.
Let $f(\x)= \sign(\vec v\cdot \x-t)$ be another halfspace.
Assume that:
\begin{enumerate}[leftmargin=*, label=(\roman*)]
    \item The halfspace $h$ and the  distribution $D$ pass the tests on Lines \ref{line:negativity-test} and \ref{line:moment-test}  restricted to the slice $S$ with degree $l=C\log^3(1/\gamma)\log^2(1/\beta)/\beta^2$.
    \item \textbf{Non-trivial Disagreement:}
    $\Pr_{\x\sim\cN^d}[f(\x)\neq h(\x)\mid \x\in S]\ge \gamma$.
    \item \textbf{Non-trivial Angle: }$\frac{|\bv\cdot \bw|}{\|\bv^{\perp \bw}\|}\Delta\leq \beta\gamma/C$, $\Delta\eqdef b-a$.
\end{enumerate}
Then it holds that
$\pr_{(\x,y)\sim D}[h(\x)\neq y]-\pr_{(\x,y)\sim D}[f(\x)\neq y]\lesssim -\beta\gamma$.
\end{lemma}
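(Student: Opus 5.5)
Throughout I read the conclusion as a statement conditional on $\x\in S$, since all the hypotheses are about the slice. Assume without loss of generality that $h\equiv +1$ on $S$, i.e.\ $\tau<a$; the case $\tau>b$ is symmetric with the roles of $f=\pm1$ swapped. Let $g(\x)\eqdef\Ind(f(\x)\neq h(\x))=\Ind(\bv\cdot\x<t)$. On the disagreement region, $h$ errs exactly when $f$ is correct. Hence
\[
\pr_D[h(\x)\neq y\mid S]-\pr_D[f(\x)\neq y\mid S]=-\E_D[y\,h(\x)\,g(\x)\mid S],
\]
and it suffices to show $\E_D[yh(\x)g(\x)\mid S]\gtrsim\beta\gamma$.

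\textbf{Step 1: reduce to sandwiching in $\w^\perp$.} Write $\bv\cdot\x=(\bv\cdot\w)z+\|\bv^{\perp\w}\|\,\bu\cdot\x^{\perp\w}$, where $z=\w\cdot\x$ and $\bu$ is the unit vector along $\bv^{\perp\w}$. On the fiber at $z$, $g=\Ind(\bu\cdot\x^{\perp\w}<t_z)$ with $t_z=(t-(\bv\cdot\w)z)/\|\bv^{\perp\w}\|$. As $z$ ranges over $[a,b]$, $t_z$ varies by at most $|\bv\cdot\w|\Delta/\|\bv^{\perp\w}\|\le\beta\gamma/C$ by (iii). Set $t_{\min},t_{\max}$ to be the extreme induced thresholds and $g_\pm\eqdef\Ind(\bu\cdot\x^{\perp\w}<t_{\max/\min})$. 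These depend only on $\vec U^\top\x$ and satisfy $g_-\le g\le g_+$ on $S$.

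Under $\cN^d$ conditioned on $S$, $\x^{\perp\w}$ is still standard Gaussian. So $\E[g_+]-\E[g_-]\le\phi(0)\,\beta\gamma/C$, and by (ii) $\E[g_-]\ge\gamma-\beta\gamma/C\ge\gamma/2$. The same argument applied to the complements shows that both $\Phi(t_{\min})$ and $1-\Phi(t_{\max})$ can be handled. Concretely, if the mass of $g$ is close to $1$, I sandwich the complement instead and take $p_+=1-q_-$ and $p_-=1-q_+$. This keeps the error multiplicative in the smaller of the two masses, which is at most the mass of $g$. In either case the relevant threshold satisfies $|t|\lesssim\sqrt{\log(1/\gamma)}$.

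\textbf{Step 2: build the polynomials.} Apply \Cref{lem:structural} (reflected as needed) with $\alpha=c\beta$ for a small constant $c$. This gives univariate polynomials $p_-\le g_-$ and $p_+\ge g_+$ of degree $O(\log^3(1/\gamma)\log^2(1/\beta)/\beta^2)\le l$, evaluated at $\bu\cdot\vec U^\top\x$. Their Gaussian guarantee is
\[
\E[p_+-p_-]\le 2\alpha\E[g_+]+(\E g_+-\E g_-)\le 2c\beta\E[g_+]+O(\beta\gamma/C).
\]
Since $p_+\ge0$ is a univariate polynomial, it is a sum of two squares $q_1^2+q_2^2$ of polynomials of degree $\le l/2$.

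\textbf{Step 3: chain the inequalities under $D$.} Using $|yh|\le1$ and $p_-\le g\le p_+$,
\[
\E_D[yhg\mid S]\ge\E_D[yhp_+\mid S]-\E_D[p_+-p_-\mid S].
\]
The non-negativity test applied to $q_1,q_2$ (expanded in the Hermite basis $H$) gives $\E_D[yhp_+\mid S]\ge(\beta-\eps)\E_D[p_+\mid S]$.

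The moment-matching test transfers $\E_D[p_\pm\mid S]$ to the Gaussian values, up to $\tau_m$ times the $\ell_1$ norm of the Hermite coefficients of $p_\pm$. This yields
\[
\E_D[yhg\mid S]\ge\tfrac{\beta}{2}\E[g_+]-2c\beta\E[g_+]-O(\beta\gamma/C)-\mathrm{err}\ge\tfrac{\beta}{4}\cdot\tfrac{\gamma}{2}-O(\beta\gamma/C)-\mathrm{err}\gtrsim\beta\gamma,
\]
for $c=1/8$ and $C$ large, provided $\mathrm{err}\ll\beta\gamma$.

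\textbf{Main obstacle.} I expect the hardest part to be controlling $\mathrm{err}$, i.e.\ bounding the Hermite coefficients of $p_\pm$ after the change of variables to $\bu\cdot\vec U^\top\x$. The explicit Chebyshev construction should give coefficient bounds of order $2^{O(l\log l)}$ (coefficients of scaled and shifted $T_m$ powers). Since $\tau_m=\eps^C/d^l$ and $\eps<\gamma$, I would need $2^{O(l\log l)}\le d^{l}\cdot\poly(1/\eps)$. I would first try to verify this directly from the construction in Appendix~\ref{sec:sandwiching-result}. If that bound turns out too weak, I would instead use the explicit form of the bump polynomials to get a tighter estimate, or absorb the excess into the constant $C$ in $l$.
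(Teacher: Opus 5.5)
Your architecture matches the paper's. You sandwich $g$ on the slice between the two extreme induced thresholds, apply \Cref{lem:structural} with $\alpha\asymp\beta$, use univariate nonnegativity to get an SOS certificate for the non-negativity test, and transfer $\E[p_+]$ and $\E[p_+-p_-]$ from $\cN^d$ to $D$ by moment matching. Your pairing $p_-\le g_-\le g\le g_+\le p_+$ is the correct one; the paper's write-up has the indices swapped.

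\textbf{The gap.} It is the step you flagged, and the route you propose does not close it. You plan to bound the Hermite coefficients of the Chebyshev construction by $2^{O(l\log l)}$ and then require $2^{O(l\log l)}\le d^{l}\poly(1/\eps)$. This fails whenever $d$ is small compared to $l$: for constant $d$, $l\asymp\log^3(1/\gamma)\log^2(1/\beta)/\beta^2$ is unbounded. Enlarging $C$ in $l$ does not help, since it inflates both sides. As written, nothing controls $\mathrm{err}$, and that is exactly where the multiplicative guarantee has to survive the passage from $\cN^d$ to $D$.

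\textbf{The missing idea: never bound individual coefficients.}
\begin{enumerate}
\item By Cauchy--Schwarz and Parseval (\Cref{cl:orth-moment-transfer-poly}), for any $p$ of degree at most $l$ depending only on $\x^{\perp\w}$, the transfer error is at most $\tau_m d^{l}\|p\|_{L^2}$. This bound already absorbs the change of variables to $\vec U^\top\x$.
\item Gaussian hypercontractivity gives $\|p\|_{L^2}\le 3^{l}\|p\|_{L^1}$ (\Cref{cl:l1-to-l2}).
\item Both polynomials you transfer, $p_+$ and $p_+-p_-$, are pointwise nonnegative. So their $L^1$ norms equal their Gaussian means, each $O(\E_{\cN^d}[g_+])$ by the sandwiching bound.
\end{enumerate}
Hence $\mathrm{err}\lesssim\tau_m(3d)^{l}\,\E_{\cN^d}[g_+]$. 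For $\tau_m\le\eps^{C}/d^{2l}$ and $d\ge 3$ (the value the paper uses in this step), this is at most $\eps^{C}\E_{\cN^d}[g_+]$, negligible next to $\tfrac{\beta}{4}\E_{\cN^d}[g_+]$.

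Your fallback of exploiting the explicit construction also works, but through $L^2$ norms rather than coefficients. With $B=C_B\sqrt{mk}$, $C_B$ large, and $|t|+1\lesssim\sqrt{mk}$, the growth bound $|p_\pm(x)|\lesssim 1+\bigl(C(|x|+|t|+1)/B\bigr)^{mk}$ gives $\|p_\pm\|_{L^2}=O(1)$. Then $\mathrm{err}\lesssim\tau_m d^{l}=\eps^{C}$ even with the algorithm's $\tau_m$, and this is $\ll\beta\gamma$ because $\eps<\min\{\gamma,\beta/2\}$.

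\textbf{A smaller point.} Passing to the complement does not reduce $|t|$. So your claim that the relevant threshold is always $O(\sqrt{\log(1/\gamma)})$ is unjustified when $\E[g]$ is close to $1$. There the induced thresholds can be arbitrarily large, and \Cref{lem:structural} would give degree far above $l$. You need a separate truncation on the large-mass side; for example, the constant $1$ is an admissible upper polynomial there. The paper's proof, which only bounds $t_1$ from above, glosses over the same point.
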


\begin{proof}%
Since $\tau\notin[a,b]$, the sign of $\w\cdot\x-\tau$ does not change for $\x\in S$, and therefore $h$ is constant on $S$.
Without loss of generality, assume that $h(\x)=-1$ for all $\x\in S$.

Consider the indicator $g(\x)=\Ind(f(\x)=1)$. Note that $g$ restricted to $S$ denotes precisely the disagreement region between $h$ and $f$.
Let $\vec b\eqdef \bv^{\perp \bw}/\|\bv^{\perp \bw}\|$ and write, 
\[
g(\x)=\Ind\!\Big(\vec b\cdot \x^{\perp \bw}\ \ge\ s(\bw\cdot \x)\Big),
\qquad
s(\bw\cdot \x)\eqdef \frac{t-(\bv\cdot \bw)(\bw\cdot \x)}{\|\bv^{\perp \bw}\|}.
\]
Let $t_1\eqdef\min_{\x\in S}s(\bw\cdot \x)$, $t_2\eqdef\max_{\x\in S}s(\bw\cdot \x)$ and define the halfspaces
\[
g_1(\x)\eqdef \Ind(\vec b\cdot \x^{\perp \bw}\ge t_1)
\text{ and } 
g_2(\x)\eqdef \Ind(\vec b\cdot \x^{\perp \bw}\ge t_2).
\]
Notice that $g_1\ge g\ge g_2$ on $S$.

We first show that the thresholds $t_1,t_2$ are  $O(\sqrt{\log(1/\gamma)})$, 
since otherwise the disagreement region will have tiny mass under the Gaussian distribution 
(which contradicts the lemma assumption).
Since $\Pr_{\x\sim\cN^d}[g(\x)=1\mid \x\in S]\ge\gamma$ and $g_1\ge g$ on $S$, we have that $\Pr_{\x\sim\cN^d}[g_1(\x)=1\mid \x\in S]\ge\gamma$, which implies that $t_1= O(\sqrt{\log(1/\gamma)})$. 
Moreover, by assumption~(iii)  $t_2-t_1=\frac{|\bv\cdot \bw|}{\|\bv^{\perp \bw}\|}\Delta\le \gamma<1/2$, hence  we also have that  $t_2\leq O(\sqrt{\log(1/\gamma)})$.

We apply \Cref{lem:structural} with accuracy parameter $\alpha\eqdef (1-2\eta-\eps)/C>0$ (for a sufficiently large absolute constant $C>0$) to the univariate threshold function $\Ind(z\ge t_1)$.
This yields polynomials $q^{(1)}_{\pm}:\R\to\R$ of degree at most $l=C\log^3(1/\gamma)\log^2(1/\beta)/\beta^2$ that sandwich $\Ind(z\ge t_1)$ (i.e., 
$q^{(1)}_{-}(z)\le \Ind(z\ge t_1)\le q^{(1)}_{+}(z)$ for all $z\in\R$) and $\E_{z\sim \cN}[q^{(1)}_{+}(z)- q^{(1)}_{-}(z)]\leq \alpha \pr_{z\sim \cN}[z\geq t_1]$.
Correspondingly, the multivariate polynomials $p_\pm^{(1)}(\x)= q_{\pm}^{(1)}(\vec b\cdot \x)$ will sandwich $g_1$, and the expectations will be relatively close under the Gaussian distribution: $\E_{\x \sim \cN^d}[p_+^{(1)}(\x)-p_-^{(1)}(\x)]\leq \alpha\E_{\x \sim \cN^d}[g_1(\x)]$.
Next, we construct the univariate sandwiching polynomials $q_{\pm}^{(2)}$ for $\Ind(z\ge t_2)$
and 
the multivariate sandwiching polynomials $p_{\pm}^{(2)}$ for $g_2$ similarly.
Note that $q_+^{(1)}$ and $q_+^{(2)}$ are positive univariate polynomials and therefore sums of squares; hence, the same holds for $p_+^{(1)}$ and $p_+^{(2)}$ since they are obtained by composing $q_+^{(1)}$ and $q_+^{(2)}$ with a simple linear projection respectively.
We can moreover show that the two bounding polynomial $p_+^{(2)},p_-^{(1)}$ are close, $\E_{\x\sim \cN^d}[p_+^{(2)}(\x)-p_-^{(1)}(\x) \mid S] \lesssim \alpha\E_{\x\sim \cN^d}[g(\x)\mid S] +\frac{|\vec v\cdot \w|}{\|\vec v^{\perp \w}\|}\,\Delta$. We formalize this as \Cref{claim:closeness_bounding_polynomials}; its statement and proof appear in \Cref{sec:soundness-app}.

Before we finish the proof, we first certify that the expectation of the polynomials $p_+^{(2)}(\x)$ and $p_-^{(1)}(\x)$ are nearly the same under $D_\x$ and the Gaussian. 
Since $p_+^{(2)}(\x)$ and $p_-^{(1)}(\x)$ depend only on $\x^{\perp\w}$ and have degree $\deg\le l$,
\Cref{cl:orth-moment-transfer-poly} of \Cref{sec:omitted-facts}
lets us replace conditional expectations under $D_\x$ by those under $\cN^d$ at an additive cost
$\tau_m\cdot d^{l}(\|p_+^{(2)}\|_2+\|p_+^{(2)}-p_-^{(1)}\|_2)$.
Moreover, by \Cref{cl:l1-to-l2} of \Cref{sec:omitted-facts} this cost is at most 
$ 3\tau_m\cdot d^{l}\cdot 3^{l}\cdot \E_{\cN^d}[g(\x)\mid \x\in S]$,
which is at most $\eps^{C}\E_{\cN^d}[g(\x)\mid \x\in S]\le \eps^{C}\gamma$ for our choice $\tau_m\leq \eps^C/d^{2l}$ and sufficiently large $C$.

Now we will use the fact that the region $g$ is well approximated by polynomials, and that for these polynomials we have essentially verified their error behavior via the non-negativity test in Line~\ref{line:negativity-test}. 
Specifically,                       

\begin{align*}
&\E_{(\x,y)\sim D}\!\big[-g(\x)y\mid \x \in S \big]\\
&=
-\E_{(\x,y)\sim D}\!\big[p_+^{(2)}(\x)y\mid \x \in S \big]
+
\E_{(\x,y)\sim D}\!\big[y(p_{+}^{(2)}(\x)-g(\x))\mid \x \in S\big]
\\
&\ge
-\E_{(\x,y)\sim D}\!\big[p_+^{(2)}(\x)y\mid \x \in S \big]
+
\E_{(\x,y)\sim D}\!\big[p_{+}^{(2)}(\x)-p^{(1)}_-(\x)\mid \x \in S\big]
\\
&\geq
(1-2\eta-\eps)\,
\E_{x\sim D_{\x}}\!\big[p_+^{(2)}(\x)\mid \x \in S \big]
\;-\;
\E_{(\x,y)\sim D}\!\big[p_{+}^{(2)}(\x)-p^{(1)}_-(\x)\mid \x \in S\big]
\\
&\ge
(1-2\eta-\eps)\,
\E_{\x\sim \cN^d}\!\big[g(\x)\mid \x\in S\big]
\;-\;
O\left(\alpha\,\E_{\x\sim \cN^d}\!\big[g(\x)\mid \x \in S\big]
+
\frac{|\bv\cdot\bw|}{\|\bv^{\perp \bw}\|}\,\Delta\right) -\eps^C\gamma
\\
&\gtrsim (1-2\eta-\eps)\gamma-\eps^C\gamma\gtrsim (1-2\eta)\gamma\;,
\end{align*}
where in the first inequality we used that $p_+^{(2)}$ and $p_-^{(1)}$ sandwich $g$, 
and in the second inequality we used that we did not reject the non-negativity test for $S$. 
In the third inequality, we used \Cref{claim:closeness_bounding_polynomials} 
and the fact that we match moments of degree at most $l$.
In the fifth inequality, we used the fact that 
$\frac{|\bv\cdot \bw|}{\|\bv^{\perp \bw}\|}\Delta\leq (1-2\eta-\eps)\gamma/C$ for a sufficiently large absolute constant $C$, our choice of $\alpha$ and the moment matching error. 
The  last inequality uses that $\eps<\min\{\gamma,(1-2\eta)/2\}$.

Finally, note that 
\begin{align*}
 &\E_{(\x,y)\sim D}[-g(\x)y\mid \x \in S]
 \\
 &=   \pr_{(\x,y)\sim D}[f(\x)=1,y=-1\mid \x \in S] -\pr_{(\x,y)\sim D}[f(\x)=1,y=1\mid \x \in S]\\
 &=   \pr_{(\x,y)\sim D}[f(\x)\neq h(\x), y\neq f(\x)\mid \x \in S] -\pr_{(\x,y)\sim D}[f(\x)\neq h(\x),y\neq h(\x)\mid \x \in S]\\
 &=   \pr_{(\x,y)\sim D}[ y\neq f(\x)\mid \x \in S] -\pr_{(\x,y)\sim D}[y\neq h(\x)\mid \x \in S] \;.
\end{align*}
If instead $h(\x)=1$ for all $\x\in S$, the same argument applies by renaming labels.
\end{proof}

\paragraph{Proof Sketch of \Cref{prop:soundness-gamma}}
We provide some intuition about the proof using the schematic in
Figure~\ref{fig:case1}.
The figure depicts our learned halfspace $h$, the ``stripes'' (slices) on which the algorithm runs its tests, and an arbitrary competing halfspace $f$.
The region where $h$ and $f$ disagree is shaded in light blue.

Recall from \Cref{lem:advantage-lemma} that, since \Cref{alg:testable-massart-gamma} uses degree
$l=\poly\log(1/\gamma)/(1-2\eta)^2$, whenever within a slice the disagreement between $h$ and $f$ is at least $\gamma^C$ we can certify an
$\Omega(\gamma^C)$ \emph{advantage} for $h$ over $f$ on that slice.
In particular, in the portion of the space marked as the \emph{Advantage Region} (near the intersection of $f$ with the level
$|y|=\Theta(\sqrt{\log(1/\gamma)})$ in the figure), we obtain that $h$ achieves smaller error than $f$ by $\Omega(\gamma)$.

On the remaining slices our tests may not certify a pointwise advantage.
Nevertheless, we show that the total mass of the disagreement region outside the Advantage Region is small, and hence the 
\emph{disadvantage} of $h$ (i.e., how much smaller the error of $f$ can be than that of $h$) contributed by these slices is also small.
This relies on structural/anti-concentration arguments like \Cref{cl:probability-transfer} of \Cref{sec:soundness-app} (e.g., showing that if a halfspace exhibits a large conditional bias on a slice then it must also
exhibit a corresponding bias under $D$).

We need to make this argument for all possible relative configurations of $f$ and $h$ among $\gamma$-biased halfspaces.
In particular the configuration shown in \Cref{fig:case1} the Advantage Region is relatively large (a constant fraction of the space). 
If $f$ looks more parallel to $h$ the advantage region shrinks as $f$ passes faster from the $|y|=\Theta(\sqrt{\log(1/\gamma)})$ level sets. 
However, as the advantage region shrinks so does the disagreement region. 
To see this consider rotating $f$ clockwise around its depicted vector $v$. 
Moreover, if $f$ is nearly parallel to $h$ the non-trivial angle condition of \Cref{lem:advantage-lemma} does not hold 
In fact $f$ 
is not well approximated 
by its piecewise-constant version on slices parallel to $h$ then for which we ally our polynomial approximation result.
However, in this one does not need to apply any polynomial approximation result, since all of the variation from the orthogonal direction of $f$ vanishes and we need to only consider the rate that $h$ is better than $1/2$ in each slice (tested by the degree $0$ instantiation of the non-negativity test). 
Our proof of soundness follows exactly this case analysis and is deferred to Appendix~\ref{sec:soundness-app}. 

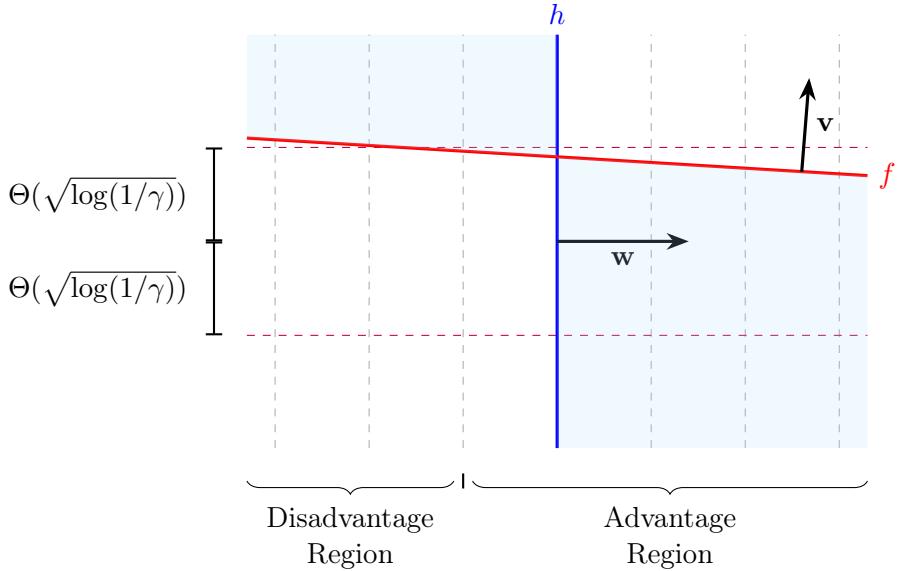
\begin{figure}[H]
  \centering
\usetikzlibrary{decorations.pathreplacing}

\begin{tikzpicture}[scale=1.25,>=Stealth]
  \tikzset{
    slice/.style={gray!65,dashed},
    hline/.style={very thick,blue},
    fline/.style={very thick,red},
    wvec/.style={very thick,->},
    vvec/.style={very thick,->},
  }

  \foreach \x in {-3,-2,-1,1,2,3} {
    \draw[slice] (\x,-2.2) -- (\x,2.2);
  }

  \draw[purple,dashed] (-3.3,1) -- (3.3,1);
  \draw[purple,dashed] (-3.3,-1) -- (3.3,-1);

  \draw[thick,|-|] (-3.65,0) -- (-3.65,1)
    node[midway,left=6pt] {$\Theta(\sqrt{\log(1/\gamma)})$};
  \draw[thick,|-|] (-3.65,0) -- (-3.65,-1)
    node[midway,left=6pt] {$\Theta(\sqrt{\log(1/\gamma)})$};

  \draw[hline] (0,-2.2) -- (0,2.2) node[above] {$h$};

  \draw[wvec] (0,0) -- (1.4,0) node[midway,below] {$\mathbf{w}$};

  \draw[fline] (-3.3,1.10) -- (0,0.90) -- (3.3,0.70) node[right] {$f$};

  \coordinate (P1) at (2.6,0.74);
  \draw[vvec] (P1) -- ++(0.08,1.00) node[midway,right] {$\mathbf{v}$};

\path[fill=cyan!30,opacity=0.18,draw=none]
  (0,-2.2) -- (3.3,-2.2) -- (3.3,0.70) -- (0,0.90) -- cycle;

\path[fill=cyan!30,opacity=0.18,draw=none]
  (-3.3,2.2) -- (0,2.2) -- (0,0.90) -- (-3.3,1.10) -- cycle;

\def\braceY{-2.55}
\def\braceGap{0.10} %

\draw[decorate,decoration={brace,mirror,amplitude=5pt}]
  (-3.3,\braceY) -- (-1-\braceGap,\braceY)
  node[midway,below=6pt,align=center] {Disadvantage\\Region};

\draw[decorate,decoration={brace,mirror,amplitude=5pt}]
  (-1+\braceGap,\braceY) -- (3.3,\braceY)
  node[midway,below=6pt,align=center] {Advantage\\Region};

\draw[thick] (-1,\braceY-0.08) -- (-1,\braceY+0.08);

\end{tikzpicture}

  \caption{Illustration of a learned halfspace $h$, a competing halfspace $f$, the slices on which the tests are performed, and their disagreement region (shaded).}
  \label{fig:case1}
\end{figure}

\section{Conclusions} \label{sec:conclusions}
In this work, we gave the first algorithm for testable learning of general Massart 
halfspaces under the Gaussian distribution. The complexity of our algorithm qualitatively 
matches known Statistical Query lower bounds, even for the non-testable setting. In the 
process, we established a novel upper bound on the degree of sandwiching polynomial 
approximations for the sign function with a multiplicative error guarantee. A natural 
open question is to understand the complexity of testable learning with Massart noise 
under more general input distributions. This goal appears attainable for homogeneous halfspaces, where efficient algorithms are known (in the non-testable setting) 
under a range of structured distributions; see, e.g.,~\cite{DKTZ20}. 
For the case of general halfspaces, the only known upper bounds~\cite{DKKTZ22} 
rely on the Gaussian assumption. Hence, further progress for the general case 
requires a deeper understanding of the non-testable setting.

\newpage 

\bibliographystyle{alphaabbr}

\bibliography{colt2026/mybib}

\newpage

\appendix

\section*{Appendix}
The Appendix is structured as follows:  \Cref{sec:omitted-facts} includes additional preliminaries required in subsequent technical sections. 
\Cref{sec:proofofcompletness} contains the proof of the  Completeness guarantee (\Cref{lem:completeness}).
\Cref{sec:soundness-app} contains the proof of the Soundness guarantee (\Cref{prop:soundness-gamma}).
 \Cref{sec:sandwiching-result} contains the full proof of our sandwiching result (\Cref{lem:structural}).
Finally, \Cref{app:sq-lb} contains a full proof of our SQ lower bound result.

\section{Omitted Facts and Preliminaries} \label{sec:omitted-facts}
\noindent {\bf Probability Notation}
We use $\E_{x\sim D}[x]$ for the expectation of the random variable $x$ according to the
distribution $D$ and $\pr[\mathcal{E}]$ for the probability of event $\mathcal{E}$.
For simplicity
of notation, we may omit the distribution when it is clear from the context.  For $(\x,y)$
distributed according to $D$, we denote {by} $D_\x$ to be the distribution of $\x$ and {by} $D_y$ to be the
distribution of $y$.
Let $\normal( \boldsymbol\mu, \vec \Sigma)$ denote the $d$-dimensional Gaussian distribution with mean $\boldsymbol\mu\in  \R^d$ and covariance $\vec \Sigma\in \R^{d\times d}$.
We denote by $\cN^d$ the high dimensional standard normal $\cN(\vec 0,\vec I)$ and we omit the exponent when $d=1$.
We denote by $\phi_d(\cdot)$ the pdf of the $d$-dimensional {standard normal} and we use  $\phi(\cdot)$ for the pdf of the {$1$-dimensional} standard normal.
For a function $f:\mathbb{R}^d\to\mathbb{R}$, define $\|f\|_{L^2(\mathcal{N}^d)}:=\left(\mathbb{E}_{\mathbf{x}\sim\mathcal{N}^d}\!\left[f(\mathbf{x})^2\right]\right)^{1/2}$. When the underlying measure is clear, we simply write $\|f\|_{L^2}$.

\paragraph{Hermite Polynomials}
We will use the following notion of normalized Hermite polynomials.
\begin{definition}[Normalized Hermite Polynomial]\label{def:Hermite-poly}
For $k\in\N$,
we define the $k$-th \emph{probabilist's} Hermite polynomials
$p_k:\R\to \R$
as
$p_k(t)=(-1)^k e^{t^2/2}\cdot\frac{d^k}{dt^k}e^{-t^2/2}$.
We define the $k$-th \emph{normalized} Hermite polynomial 
$He_k:\R\to \R$
as
$He_k(t)=\mathrm{\textit{p}}_k(t)/\sqrt{k!}$.
\end{definition}
Let $d\in\N$ and let $\alpha=(\alpha_1,\dots,\alpha_d)\in\N^d$ be a multi-index.  
We define the \emph{multivariate normalized Hermite polynomial} $He_\alpha:\R^d\to\R$ by
$He_\alpha(\x)=\prod_{i=1}^d He_{\alpha_i}(\x_i)$.
Fix $k\in\N$ and let 
$
\mathcal P_{<k}:=\{p:\R^d\to\R \text{ polynomial with total degree }<k\}.
$
Then the family $\{He_\alpha:\alpha\in\N^d,\ |\alpha|<k\}$ forms an orthonormal basis of $\mathcal P_{<k}$ under the standard Gaussian measure on $\R^d$.

Now we state some important facts that we will use in the technical sections. We start with some algorithmic facts. First we state the guarantee of \cite{DKKTZ22} about learning general halfspaces with Massart noise.

\begin{fact}[Learning General Halfspaces with  Massart Noise under Gaussian Marginals] 
\label{fact:general-massart}
Fix parameters $\eta \in [0, 1/2), \beta\eqdef1-2\eta$ and  $\gamma\in(0,1/2]$. 
Let $\D$ be a distribution on $\R^d\times\{\pm1\}$ whose $\x$-marginal is $\cN^d$, that satisfies
the $\eta$-Massart noise condition with respect to a halfspace in $\cH_{d,\gamma}$. There exists an algorithm that draws $N = d^{O(\log( 1/(\beta\max(\gamma,\eps)) ))}
\poly(1/\eps)\log(1/\delta)$ samples from $\D$, runs in time $\poly(N, d)$, and computes a halfspace $h$ such that
with high probability it holds \(\pr_{(\x, y) \sim \D}[h(\x) \neq y] \leq \opt  + \eps\).
\end{fact}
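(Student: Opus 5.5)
Invoking \Cref{fact:general-massart} is not itself something to prove; it is the guarantee of \cite{DKKTZ22}, which we use as a black box. So the plan is to justify the stated form from the cited paper, matching parameters. I would cite \cite{DKKTZ22} directly for the core claim: under Gaussian marginals and $\eta$-Massart noise with respect to a $\gamma$-biased halfspace, their algorithm outputs a halfspace with error $\opt+\eps$. The only work is to verify that the sample complexity and runtime match the stated form $d^{O(\log(1/(\beta\max(\gamma,\eps))))}\poly(1/\eps)\log(1/\delta)$. It is also worth checking that the output is a proper halfspace $h(\x)=\sign(\w\cdot\x+t)$, as Line~\ref{line:run-prev-alg} of \Cref{alg:testable-massart-gamma} requires.

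First, I would recall the structure of the \cite{DKKTZ22} algorithm. It uses low-degree moment/Hermite estimates, of degree $O(\log(1/\gamma'))$ for an effective bias $\gamma'$, to find a direction correlated with the target normal vector. It then refines this direction with a localized homogeneous-Massart-style procedure and tunes the threshold by a one-dimensional search. Next, I would handle the bias parameter. If $\gamma<\eps$, any halfspace with bias below $\eps$ is $\eps$-close to a constant function. Replacing $\gamma$ by $\max(\gamma,\eps)$ therefore costs only $O(\eps)$ in error, since we also compare against constant hypotheses. This explains the $\max(\gamma,\eps)$ in the exponent.

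Then I would handle the dependence on $\beta$. The moment-based direction-finding step needs signal of size about $\beta\gamma$ to dominate sampling error. This leads to the $\log(1/(\beta\gamma))$ degree and the corresponding $d^{O(\cdot)}$ sample count. Finally, the confidence $\delta$ is boosted by standard repetition together with hypothesis selection on a fresh validation sample, which gives the $\log(1/\delta)$ factor.

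The main obstacle is bookkeeping rather than new ideas. The cited theorem may be stated with $\eta$ a constant, or with $\gamma$ assumed known and $\gamma\ge\eps$. I would need to trace through the cited analysis to confirm that the degree scales as $\log(1/(\beta\max(\gamma,\eps)))$ and not worse in $\beta$. If that tracing fails, the fallback is to use a weaker black-box dependence such as $d^{\poly(1/\beta)\log(1/\gamma)}$. This weaker bound is still absorbed by the $d^{\tilde O(\beta^{-2})\polylog}$ budget in \Cref{thm:testable-learning-gamma}.
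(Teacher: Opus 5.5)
Your proposal matches the paper, which gives no proof of this Fact and imports it as a black-box guarantee of \cite{DKKTZ22}; citing that result and checking the parameters, including the $\max(\gamma,\eps)$ and the $\log(1/\delta)$ factor from confidence boosting, is all there is. One correction to your fallback: a bound of the form $d^{\poly(1/\beta)\log(1/\gamma)}$ fits inside the $d^{\tilde O(\beta^{-2})\polylog}$ budget of \Cref{thm:testable-learning-gamma} only if the polynomial in $1/\beta$ has degree at most about $2$. Otherwise the $\beta$-dependence of the main theorem would change, so this fallback is not automatically harmless.
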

Next we show that passing the Orthogonal Moment Matching test implies that, on any slice \(S\), the conditional expectation of any degree-\(l\) polynomial depending only on \(\x^{\perp \w}\) is close to its Gaussian counterpart.
\begin{claim}[Moment matching error]
\label{cl:orth-moment-transfer-poly}
Let $d\in \Z_+, d\geq 3$ and let $\w\in \R^d$ be a unit vector.
Fix a slice $S=\{\w\cdot\x\in[a,b]\}$ and a degree parameter $l\in \Z_+$.
Assume that the Orthogonal Moment Matching test (Line~\ref{line:moment-test}) passes on $S$ with tolerance $\tau_m$.
Then for every polynomial $p:\R^{d}\to\R$ of total degree at most $l$ that depends only on $\x^{\perp \w}$
(i.e., $p(\x)= p(\x^{\perp \w})$),
\[
\left|
\E_{\x\sim D_\x}\!\big[p(\x)\mid \x\in S\big]
-
\E_{\x\sim \cN^d}\!\big[p(\x)\mid \x\in S\big]
\right|
\;\le\;
\tau_m\cdot d^{l}\cdot \|p\|_{L^2}.
\]
\end{claim}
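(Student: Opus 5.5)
The plan is to expand $p$ in the orthonormal Hermite basis of the coordinates $\vec U^\top \x$ and then use the moment test one coefficient at a time. Since $p$ depends only on $\x^{\perp\w}$ and $\vec U$ has orthonormal columns spanning $\w^\perp$, we have $\x^{\perp \w} = \vec U\vec U^\top \x$. Hence $p(\x) = r(\vec U^\top \x)$ for the polynomial $r(\z) \eqdef p(\vec U \z)$ on $\R^{d-1}$, which has total degree at most $l$. Expanding in the normalized Hermite basis gives
\[
r = \sum_{|\alpha|\le l} c_\alpha He_\alpha,
\]
with $c_\alpha = \E_{\z\sim\cN^{d-1}}[r(\z)He_\alpha(\z)]$. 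Under $\x\sim\cN^d$ the vector $\vec U^\top\x$ is distributed as $\cN^{d-1}$, so by Parseval $\sum_\alpha c_\alpha^2 = \|r\|_{L^2(\cN^{d-1})}^2 = \|p\|_{L^2}^2$.

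Next we take the difference of conditional expectations term by term. The constant term $c_0 He_0$ contributes identically under both distributions, so it cancels. For every $1\le|\alpha|\le l$, passing the Orthogonal Moment Matching test on $S$ gives
\[
\bigl|\E_{D_\x}[He_\alpha(\vec U^\top\x)\mid S]-\E_{\cN^d}[He_\alpha(\vec U^\top\x)\mid S]\bigr|\le\tau_m .
\]
Here $D_\x$ is the distribution on which the test is run; in the analysis this is the empirical distribution $\widehat D$, or $D$ itself up to sampling error absorbed into $\tau_m$. It follows that the total difference is at most $\tau_m\sum_{1\le|\alpha|\le l}|c_\alpha|$.

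Finally, we apply Cauchy--Schwarz:
\[
\sum_{\alpha}|c_\alpha|\le \sqrt{M}\,\|p\|_{L^2},
\]
where $M=\binom{d-1+l}{l}$ is the number of multi-indices in $\N^{d-1}$ with $|\alpha|\le l$. For $d\ge 3$ and $l\ge1$ one checks $\binom{d-1+l}{l}\le d^{l}$, for instance by induction or from $\binom{d-1+l}{l}=\prod_{i=1}^l\frac{d-1+i}{i}$. This gives $\sqrt M\le d^{l/2}\le d^l$, which yields the stated bound $\tau_m d^l\|p\|_{L^2}$.

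There is no real obstacle here. The only points needing care are the identification $p(\x)=r(\vec U^\top\x)$, which ensures that the $L^2$ norms agree under the Gaussian, and the counting bound on the number of monomials.
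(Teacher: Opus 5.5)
Your proposal is correct and follows the paper's proof essentially step for step: you pass to the induced $(d-1)$-variate polynomial $p(\vec U\z)$, expand it in the normalized Hermite basis, apply the test to each coefficient with $1\le|\alpha|\le l$, and finish with Cauchy--Schwarz plus Parseval. The one difference is that your count $\binom{d-1+l}{l}\le d^{l}$ is sharper than the paper's $d^{2l}$. This only tightens $\sqrt{M}$ to $d^{l/2}$, and you still arrive at the stated bound $\tau_m d^{l}\|p\|_{L^2}$.
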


\begin{proof}
Let $\vec u^{(1)},\dots,\vec u^{(d-1)}$  be an orthonormal basis of $\w^{\perp}$ and define the matrix $\vec U=[\vec u^{(1)} \cdots \vec u^{(d-1)}]\in \R^{d\times (d-1)}$.
Then $\vec U^{\top} \vec U=\vec I_{d-1}$ and $\vec U\vec U^{\top}=\Pi_{\w^{\perp}}=\vec I-\w\w^{\perp}$. 
We can write $\x^{\perp \w}= \vec U\vec U^{\top}\x$. 
Let $p:\R^{d}\to \R$ be a polynomial of degree at most $l$ such that $p(\x)=p(\x^{\perp \w})$. 
Then we define the induced $(d-1)$-variate polynomial $\widetilde{p}(\z)\eqdef p(\vec U\z), \z \in\R^{d-1}$.
Moreover, if $\z=\vec U^{\top}\x$, then $p(\x)=p(\x^{\perp \w})=p(\vec U\vec U^{\top}\x)=\widetilde{p}(\vec U^{\top}\x)$.

Write the Hermite expansion of $\widetilde p$ (over $\cN^{d-1}$) as
\(
\widetilde p(\z)=\sum_{|\alpha|\le l} a_\alpha\, He_{\alpha}(\z),
\)) $\alpha\in \N^{d-1}$.
If $\x\sim \cN^d$, the random vector $[\w \x, U^{\top}\x]$ is a standard Gaussian vector in $\R^{d}$ with independent coordinates. 
Hence $U^{\top}\x\sim \cN^{d-1}$ and is indepeendent of the event $\x \in S$. 
Therefore, for all $\alpha\in \N^{d-1}$
\(
\E_{\x\sim\cN^d}[He_{\alpha}(\vec U^{\top}\x)\mid \x\in S]=\E_{\z\sim\cN^{d-1}}[He_{\alpha}(\z)].
\)
Thus,
\[
\E_{D_\x}[p(\x)\mid S]-\E_{\cN^d}[p(\x)\mid S]
=\sum_{1\le |\alpha|\le l} a_\alpha\,
\Big(\E_{D_\x}[He_{\alpha}(\vec U^{\top}\x)\mid S]-\E_{\cN^d}[He_{\alpha}(\vec U^{\top}\x)\mid S]\Big).
\]
Taking absolute values and using the test guarantee gives us
\[
\left|\E_{D_\x}[p\mid S]-\E_{\cN^d}[p\mid S]\right|
\le \tau_m\sum_{1\le|\alpha|\le l}|a_\alpha|.
\]
Let $M=\#\{\alpha:1\le|\alpha|\le l\}=\binom{d-1+l}{l}-1\leq d^{2l}$, where the inequality holds for $d\geq 3$.
By Cauchy--Schwarz,
\(
\sum |a_\alpha|\le \sqrt{M}\,\|a\|_2.
\)
Finally, by orthonormality of the Hermite basis (Parseval),
\(
\|a\|_2=\|\widetilde p\|_{L^2(\cN^{d-1})}=\|p\|_{L^2}.
\)
Thus
\[
\left|\E_{D_\x}[p\mid S]-\E_{\cN^d}[p\mid S]\right|
\le \tau_m\sqrt{M}\,\|p\|_{L^2}
\le \tau_m\cdot d^{l}\cdot \|p\|_{L^2},
\]
as claimed.
\end{proof}

We show the following fact, which shows that matching moments ensures that halfspaces that are not too biased under the Gaussian distribution are also not too biased under $D$.
A version of this lemma, stated in terms of monomial moments (rather than Hermite moments), was proved in \cite{KSV24}.
\begin{fact}[Matching $\log(1/\eps)$ moments implies bias preservation up to $\eps$]
There exist sufficiently large absolute constants $C_1,C_2>0$ such that the following holds.
Fix $d\in\Z_{>0}$ and $\eps\in(0,1)$, let $k=\lceil \log(1/\eps)\rceil$ and $T=\eps^{-C_1}$.
Let $D$ be a distribution over $\R^d$ such that for every multi-index $\alpha\in\N^d$ with
$1\le |\alpha|\le k$,
\[
\big|\E_{\x\sim D}[He_{\alpha}(\x)]-\E_{\x\sim \cN^d}[He_{\alpha}(\x)]\big|
\;\le\; \frac{\eps}{d^{C_2k}}.
\]
Then for any halfspace $f(\x)=\sign(\vec v \cdot \x-\tau)$ and $y\in\{\pm1\}$, if
$\Pr_{\x\sim \cN^d}[f(\x)=y]\le C_1/T$, then $\Pr_{\x\sim D}[f(\x)=y]\le \eps$.
\end{fact}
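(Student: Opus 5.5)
We want to show: if $D$ matches Hermite moments up to degree $k=\lceil\log(1/\eps)\rceil$ within $\eps/d^{C_2k}$, and $\Pr_{\cN^d}[f(\x)=y]\le C_1/T=C_1\eps^{C_1}$, then $\Pr_{D}[f(\x)=y]\le\eps$. The plan is to upper bound $\Ind(f(\x)=y)$ by a single nonnegative polynomial $p$ of degree at most $k$ in the one variable $z=\vec v\cdot\x$. We need $\E_{\cN}[p]$ to be small, and the coefficients of $p$ in the Hermite basis to be controlled. Then moment matching transfers the bound from $\cN^d$ to $D$.

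\textbf{Step 1: reduce to a tail event.} Without loss of generality, $y=1$, so the event is $\{z\ge\tau\}$. The hypothesis $\Pr_{z\sim\cN}[z\ge\tau]\le C_1\eps^{C_1}$ forces $\tau\ge c\sqrt{C_1\log(1/\eps)}$. With $C_1$ large, this makes $\tau$ a large multiple of $\sqrt{k}$.

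\textbf{Step 2: the majorant.} Take $p(z)=(z/\tau)^{2m}$ with $2m\le k$ the largest even integer. This is a valid choice because $p\ge0$ everywhere and $p\ge1$ on $\{z\ge\tau\}$. Its Gaussian mean is $\E[p]=(2m-1)!!/\tau^{2m}\le(2m/\tau^2)^{m}$. Since $\tau^2\ge c'C_1\log(1/\eps)\ge c'C_1 k/2$, we get $2m/\tau^2\le 4/(c'C_1)\le e^{-2}$ once $C_1$ is large. Hence $\E[p]\le e^{-2m}\lesssim\eps^{2}$, with $m\approx\log(1/\eps)/2$. This is the one place where the required size of $C_1$ is pinned down; adjusting the ratio of $m$ to $\log(1/\eps)$ is only routine bookkeeping.

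\textbf{Step 3: moment transfer.} Expand $p(\vec v\cdot\x)=\sum_{|\alpha|\le k}a_\alpha He_\alpha(\x)$. Since $He_0=1$, we get
\[
\E_D[p]-\E_{\cN^d}[p]=\sum_{1\le|\alpha|\le k}a_\alpha\big(\E_D[He_\alpha]-\E_{\cN^d}[He_\alpha]\big).
\]
This mirrors the argument of \Cref{cl:orth-moment-transfer-poly}. By Cauchy--Schwarz and Parseval, the difference is at most
\[
\frac{\eps}{d^{C_2k}}\cdot d^{k}\,\|p\|_{L^2(\cN^d)},
\qquad
\|p\|_{L^2}^2=\frac{(4m-1)!!}{\tau^{4m}}\le(4m/\tau^2)^{2m}\le1
\]
for $C_1$ large. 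Therefore $\Pr_D[z\ge\tau]\le\E_D[p]\le\E_{\cN}[p]+\eps d^{-(C_2-1)k}\le\eps^2+\eps/2\le\eps$ for $\eps$ small. Large $\eps$ is handled trivially, since $C_1/T<\eps$ is then impossible or the claim is vacuous.

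\textbf{Main obstacle.} The hardest part is Step 2: the degree of $p$ must fit within $k=\lceil\log(1/\eps)\rceil$ while keeping both $\E_{\cN}[p]$ and $\|p\|_{L^2}$ small. The monomial majorant only succeeds because the threshold is a large constant multiple of $\sqrt{\log(1/\eps)}$, which is exactly what the choice $T=\eps^{-C_1}$ provides.
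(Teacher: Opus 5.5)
Your argument is essentially the paper's. Both use Markov's inequality with an even moment of $\vec v\cdot\x$ of degree at most $k$, bound the Gaussian moment by a double factorial, and transfer to $D$ via Hermite expansion, Cauchy--Schwarz and Parseval exactly as in \Cref{cl:orth-moment-transfer-poly}; your normalization by $\tau$ and choice of the largest even degree $2m\le k$ are cosmetic.

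Two slips need fixing.

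First, $2m/\tau^2\le e^{-2}$ only gives $\E[p]\le e^{-2m}\approx\eps$, not $\eps^2$. The degree is capped at $k$, so you cannot increase $m$; what you increase is $C_1$. For example, $2m/\tau^2\le e^{-4}$ gives $\E[p]\lesssim\eps^2$.

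Second, your large-$\eps$ remark is backwards. When $C_1\eps^{C_1}\ge\eps$, it is the \emph{hypothesis} that becomes vacuous, and the statement actually fails, already for $D=\cN^d$. So you must assume $\eps$ is below a small constant, as the paper does implicitly.
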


\begin{proof}
Assume that $y=1$ from the symmetry of $\cN^d$ (if $\x\sim \cN^d$ then $-\x\sim \cN^d$) the same proof works for $y=-1$.
It suffices to bound $\Pr_{\x\sim D}[\vec v\cdot\x\ge \tau]$. From $\Pr_{g\sim\cN(0,1)}[g\ge \tau]\le C_1/T$ and a standard Gaussian tail lower bound we get $\tau \ge \tfrac12\sqrt{\log(T/2)}$.
Take $k$ even. Markov's inequality gives
\[
\Pr_{\x\sim D}[\vec v\cdot\x\ge \tau]\le \frac{\E_{\x\sim D}[(\vec v\cdot\x)^k]}{\tau^k}.
\]
By our moment matching hypothesis and 
\Cref{cl:orth-moment-transfer-poly} we have 
\[
\E_{\x\sim D}[(\vec v\cdot\x)^k]\le \E_{\x\sim \cN^d}[(\vec v\cdot\x)^k] \;+\; d^k \cdot \frac{\eps}{d^{C_2k}}\sqrt{\E_{\x\sim \cN^d}[(\vec v\cdot\x)^{2k}]}.
\]
Since $\vec v\cdot\x\sim \cN(0,1)$ under $\cN^d$, we have
$\E_{\cN^d}[(\vec v\cdot\x)^k]=(k-1)!!\le k^{k/2}$.
Combining and using $\tau\ge \tfrac12\sqrt{\log(T/2)}$ with $T=\eps^{-C_1}$ and $k=\Theta(\log(1/\eps))$,
choosing $C_1,C_2$ sufficiently large makes the RHS at most $\eps$.
\end{proof}
Now we state standard Gaussian inequalities that relate \(L^q\), \(L^2\), and \(L^1\) norms of low-degree polynomials.
\begin{fact}[Gaussian Hypercontractivity; see, e.g., \cite{ODonnellbook}]
\label{fact:hypercontractivity}
Let $p:\R^d\to \R$ be a polynomial of degree at most $m$ which has zero mean and variance one under the gaussian distribution.
 For every real number $q \ge 2$, we have
$
\|p\|_{L^q} 
=
(q - 1)^{\,\frac{m}{2}}\,\|p\|_{L^2}\;.$
\end{fact}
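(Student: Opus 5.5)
The plan is to derive the bound (read as an inequality $\|p\|_{L^q}\le (q-1)^{m/2}\|p\|_{L^2}$; the ``$=$'' in the statement should be ``$\le$'', and the mean-zero, variance-one normalization plays no role) from the hypercontractivity of the Ornstein--Uhlenbeck semigroup. For $\rho\in[0,1]$ define $U_\rho f(\x)\eqdef \E_{\z\sim\cN^d}[f(\rho\x+\sqrt{1-\rho^2}\,\z)]$. By \Cref{def:Hermite-poly} and the generating-function identity for Hermite polynomials, $U_\rho He_\alpha=\rho^{|\alpha|}He_\alpha$. The key analytic input will be the Gaussian hypercontractive inequality of Nelson and Gross: $\|U_\rho f\|_{L^q}\le \|f\|_{L^2}$ for every $q\ge 2$ and every $0\le\rho\le 1/\sqrt{q-1}$.

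Given that inequality, the polynomial statement follows by a short spectral argument. Expand $p=\sum_{k=0}^m p_k$, where $p_k$ is the degree-$k$ homogeneous Hermite component of $p$. Set $\rho=1/\sqrt{q-1}$ and $g\eqdef\sum_{k\le m}\rho^{-k}p_k$, so that $U_\rho g=p$. Then
\[
\|p\|_{L^q}=\|U_\rho g\|_{L^q}\le\|g\|_{L^2}=\Big(\sum_{k\le m}\rho^{-2k}\|p_k\|_{L^2}^2\Big)^{1/2}\le\rho^{-m}\|p\|_{L^2}=(q-1)^{m/2}\|p\|_{L^2}.
\]
The middle equality and the last inequality use Parseval in the orthonormal Hermite basis, together with $\rho\le 1$.

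The main obstacle is proving the hypercontractive inequality itself; I would cite it, for instance from \cite{ODonnellbook}. If a self-contained route is wanted, I would take one of two options:
\begin{enumerate}[leftmargin=*]
    \item Prove the two-point (Bonami) inequality $\|T_\rho f\|_q\le\|f\|_2$ on the uniform measure on $\{\pm1\}$ by a direct calculus check. Tensorize it to $\{\pm1\}^n$ via Minkowski's integral inequality. Then pass to the Gaussian by applying it to $f((x_1+\dots+x_n)/\sqrt n,\dots)$ and letting $n\to\infty$ using the central limit theorem. Polynomial growth of $p$ gives the uniform integrability needed for the $L^q$ and $L^2$ norms to converge.
    \item Use Gross's equivalence between hypercontractivity and the Gaussian log-Sobolev inequality. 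Differentiate $t\mapsto\|U_{e^{-t}}f\|_{q(t)}$ with $q(t)=1+e^{2t}$, and show that the derivative is nonpositive by log-Sobolev.
\end{enumerate}
Either route is standard. For the present paper only the stated consequence for degree-$m$ polynomials is needed, so the citation suffices.
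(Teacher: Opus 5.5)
Your proof is correct, and it is the standard textbook argument: Ornstein--Uhlenbeck hypercontractivity with $\rho=1/\sqrt{q-1}$, applied to $g=\sum_k\rho^{-k}p_k$ so that $U_\rho g=p$. This is the argument behind the citation to \cite{ODonnellbook}, which is all the paper gives for this fact. You are also right that ``$=$'' must be read as ``$\le$'' and that the mean-zero, unit-variance normalization is unnecessary; this is the form the paper actually uses, e.g.\ $\|p\|_{L^4}\le 3^{l/2}\|p\|_{L^2}$ for arbitrary degree-$l$ polynomials in \Cref{cl:l1-to-l2}.
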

\begin{fact}[$L^1\to L^2$ norm control for Gaussian low-degree polynomials]
\label{cl:l1-to-l2}
Let $p:\R^m\to\R$ be a polynomial of degree at most $l$. Under the standard Gaussian measure,
$\|p\|_{L^2}\;\le\; 3^{\,l}\,\|p\|_{L^1}$.
\end{fact}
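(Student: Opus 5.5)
The claim to prove is Fact~\ref{cl:l1-to-l2}: for a polynomial $p:\R^m\to\R$ of degree at most $l$, $\|p\|_{L^2}\le 3^l\|p\|_{L^1}$ under $\cN^m$. I will derive it from Gaussian hypercontractivity through a log-convexity (Hölder interpolation) argument. The interpolation turns a reverse norm bound at a higher exponent into one at a lower exponent.

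\textbf{Step 1: the $L^4$ bound.} I will use the standard hypercontractive inequality $\|p\|_{L^q}\le (q-1)^{l/2}\|p\|_{L^2}$ for $q\ge 2$. Fact~\ref{fact:hypercontractivity} states this as an equality and only for mean-zero polynomials, so I will rely on the general inequality instead. It holds for every polynomial of degree at most $l$, mean-zero or not, because it follows from the Bonami lemma applied to the Hermite expansion through the noise operator. With $q=4$ this gives
\[
\|p\|_{L^4}\le 3^{l/2}\|p\|_{L^2}.
\]

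\textbf{Step 2: interpolation.} Write $|p|^2=|p|^{2/3}\cdot|p|^{4/3}$ and apply Hölder with exponents $3/2$ and $3$. This yields
\[
\E[p^2]\le \E[|p|]^{2/3}\,\E[p^4]^{1/3},
\]
that is, $\|p\|_{L^2}^2\le \|p\|_{L^1}^{2/3}\|p\|_{L^4}^{4/3}$. Substituting the bound from Step 1 gives
\[
\|p\|_{L^2}^2\le \|p\|_{L^1}^{2/3}\,3^{2l/3}\,\|p\|_{L^2}^{4/3}.
\]

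\textbf{Step 3: rearrange.} If $p\equiv 0$ the claim is trivial. Otherwise divide by $\|p\|_{L^2}^{4/3}$ to get $\|p\|_{L^2}^{2/3}\le 3^{2l/3}\|p\|_{L^1}^{2/3}$, and raising both sides to the power $3/2$ gives $\|p\|_{L^2}\le 3^l\|p\|_{L^1}$.

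The only delicate point is the form of hypercontractivity used in Step 1. The version needed is the inequality for arbitrary polynomials of degree at most $l$, not the mean-zero variance-one equality in Fact~\ref{fact:hypercontractivity}. I will cite the general statement (for example from \cite{ODonnellbook}) rather than derive it from that fact.
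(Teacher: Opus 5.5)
Your proof is correct and follows the same route as the paper. Both combine the Hölder interpolation $\|p\|_{L^2}\le\|p\|_{L^1}^{1/3}\|p\|_{L^4}^{2/3}$ with the hypercontractive bound $\|p\|_{L^4}\le 3^{l/2}\|p\|_{L^2}$ and then rearrange. Your caveat is also well taken: the paper's hypercontractivity fact is misstated as an equality for mean-zero, unit-variance polynomials, whereas the argument (yours and the paper's) needs the general inequality for every polynomial of degree at most $l$.
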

\begin{proof}
By Hölder interpolation (log-convexity of $L^q$ norms), since $\,\frac12=\frac13\cdot 1+\frac23\cdot\frac14\,$ we have
$\|p\|_{L^2} \le \|p\|_{L^1}^{1/3}\,\|p\|_{L^4}^{2/3}$.
By Gaussian hypercontractivity ( \Cref{fact:hypercontractivity}) with $t=4$,
\[
\|p\|_{L^4} \le (4-1)^{l/2}\|p\|_{L^2} = 3^{l/2}\|p\|_{L^2}.
\]
Substituting,
\[
\|p\|_{L^2} \le \|p\|_{L^1}^{1/3}\,(3^{l/2}\|p\|_{L^2})^{2/3}
= 3^{l/3}\,\|p\|_{L^1}^{1/3}\,\|p\|_{L^2}^{2/3}.
\]
Rearrange to get $\|p\|_{L^2}^{1/3}\le 3^{l/3}\|p\|_{L^1}^{1/3}$, hence $\|p\|_{L^2}\le 3^l\|p\|_{L^1}$.
\end{proof}

\section{Proof of \Cref{lem:completeness} } 
\label{sec:proofofcompletness}
In this section, we provide the full proof of \Cref{lem:completeness}, 
restated below for convenience. 
\begin{proposition}[Completeness]\label{lem:completenes-appendix}
Fix a constant $\eta \in [0, 1/2)$ and define $\beta\eqdef 1-2\eta$. 
Let $D$ be a distribution over $\R^d\times \{\pm 1\}$ whose $\x$-marginal is $\cN^d$ and  that satisfies the $\eta$-Massart noise condition. 
Then \Cref{alg:testable-massart-gamma} using $N\geq d^{\log^3( 1/(\beta\max(\eps,\gamma)))\log^2(1/\beta)/\beta^2}\poly(1/\eps)$ samples runs in $\poly(N,d)$ time and returns a halfspace $h$ such that
with probability at least $2/3$ it holds that \[
\Pr_{(\x,y)\sim D}[h(\x)\neq y]
\le \opt_{\gamma}
 + \eps,\;\; \opt_{\gamma}\eqdef \min_{f\in\mathcal{H}_{d,\gamma}} \Pr_{(\x,y)\sim D}[f(\x)\neq y]\;.
\]
\end{proposition}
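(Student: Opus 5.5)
The proof has two parts. First, the hypothesis returned on Line~\ref{line:run-prev-alg} has error at most $\opt_\gamma+\eps'$ with high probability. Second, on a Gaussian-marginal Massart distribution, each of the three tests passes on every slice with high probability. A union bound then gives overall success probability at least $2/3$.

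\textbf{Step 1: the learner's guarantee.} This follows from \Cref{fact:general-massart}, run with accuracy $\eps'\le\eps$. The sample size $N=d^{Cl}/\eps^{2C}$ dominates the requirement $d^{O(\log(1/(\beta\max(\gamma,\eps))))}\poly(1/\eps')$, since $\log(1/\eps')=O(l)$. One point needs care: Line~\ref{line:init-inputs} modifies $\gamma$. I would handle this by comparing against the optimal $\gamma$-biased halfspace directly, or simply note that the fact applies with the Massart target.

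Two consequences are needed later.
\begin{itemize}
\item Let $f^*$ be the true Massart halfspace. For any $h$, $\err(h)-\err(f^*)=\E[(1-2\eta(\x))\Ind(h\ne f^*)]\ge\beta\Pr[h\ne f^*]$. Hence $\Pr_{\cN^d}[h\ne f^*]\le\eps'/\beta$.
\item Consequently $\E[yh(\x)\mid\x]\ge\beta$ everywhere except on a set $B$ of Gaussian mass at most $\eps'/\beta$.
\end{itemize}

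\textbf{Step 2: slice mass and moment tests.} With $N$ samples, each $\widehat p_S$ concentrates within $\tau_p=\eps^C/d^l$ by Hoeffding, using $N\gg d^{2l}/\eps^{2C}$. For the conditional Hermite moments $\widehat m_{\alpha,S}$, I would use Chebyshev with variance bounded via hypercontractivity (\Cref{fact:hypercontractivity}) by $O(l)^{l}$.

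Tail slices are a concern because their mass can be as small as about $\eps^{C^2/2}$, which is still $\poly(\eps)$. So the number of samples landing in any slice is at least $N\poly(\eps)$, which is plenty. A union bound over the $n+1=O(\sqrt{\log(1/\eps)}/\eps^2)$ slices and $d^{O(l)}$ multi-indices closes this step, provided $C$ is large.

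\textbf{Step 3: non-negativity certificate (the main obstacle).} I would show that the population matrix $\vec M_S=\E[HH^\top(yh-\beta+\eps)\mid S]$ satisfies
\[
\vec M_S\succeq \eps\,\E[HH^\top\mid S]-(\text{error from }B).
\]
Here $\E[HH^\top\mid S]=\vec I$, because $\vec U^\top\x$ is independent of the slice under the Gaussian. The error from $B$ has operator norm at most
\[
\sup_{\|a\|=1}2\,\E\big[(a\cdot H)^2\Ind_B\mid S\big]\le 2\,\Pr[B\mid S]^{1/2}\,\|a\cdot H\|_{L^4}^2\le 2\cdot 3^{l}\sqrt{\Pr[B\mid S]}.
\]
This in turn requires $\Pr[B\mid S]\le\eps'/(\beta\Pr[S])$, which is much smaller than $\eps^2 9^{-l}$. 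That is exactly why $\eps'=\eps^C/3^l$ was chosen.

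It remains to pass from $\vec M_S\succeq(\eps/2)\vec I$ to $\widehat{\vec M}_S\succeq 0$, which holds if $\|\widehat{\vec M}-\vec M\|_{op}<\eps/2$. The matrix has dimension $d^{O(l)}$, and its entries have $O(l)^{O(l)}$ variance by hypercontractivity. Entrywise Chebyshev with a Frobenius bound and a union bound then suffices with $N=d^{Cl}/\eps^{2C}$ samples.

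The delicate points are the dependence of the sample $\widehat D$ versus the learner's sample, which I would resolve by using fresh samples as the algorithm does and conditioning on $h$, and making the tail-slice conditioning mass large enough.
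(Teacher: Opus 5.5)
Your proposal follows the paper's proof step for step. You use the learner's guarantee to get $\Pr[h\neq f^*]\le\eps'/\beta$, Hoeffding and Chebyshev for the slice-mass and Hermite-moment tests, and a PSD certificate obtained by isolating the disagreement set $B$ and bounding its contribution with Cauchy--Schwarz and hypercontractivity. Your direct split into $B$ and $B^c$ keeps the $+\eps$ margin and uses $\E[HH^\top\mid S]=\vec I$ under $\cN^d$. That is slightly cleaner than the paper's detour through $\vec M^\star$ followed by ``simple rearrangements''.

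The one step that fails as written is the parameter check in Step 3, where you claim $\eps'=\eps^C/3^l$ suffices.

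\textbf{The $3^{l}$ factor.} Your bound $2\cdot 3^{l}\sqrt{\Pr[B\mid S]}$ is correct, since $\sqrt{\E[p^4]}=\|p\|_{L^4}^2\le 3^{l}$. So you need $\Pr[B\mid S]\lesssim\eps^2 9^{-l}$. But $\eps'=\eps^C/3^l$ only gives $\Pr[B\mid S]\le\eps^C 3^{-l}/(\beta\Pr[S])$. Hence $3^{l}\sqrt{\Pr[B\mid S]}\le 3^{l/2}\sqrt{\eps^C/(\beta\Pr[S])}$, which is not small once $3^{l}$ exceeds every fixed power of $1/\eps$. This does happen: for fixed $\eta\in(0,1/2)$ and input $\gamma\le\eps$, Line~\ref{line:init-inputs} sets $\gamma=\eps$, so $l\asymp\log^3(1/\eps)$.

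\textbf{The slice-mass exponent.} If the same constant $C$ is used literally, the extreme slices have mass roughly $\eps^{2+C^2/2}$, as you yourself noted. Then even $\eps^C/\Pr[S]$ is not small. The exponent of $\eps$ in $\eps'$ must therefore be chosen independently of, and much larger than, the slice constant.

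The paper's proof has the same problem, masked by writing $\sqrt{\E[p^4]}\le 3^{l/2}$ instead of $3^{l}$.

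\textbf{Fix.} Run the learner with $\eps'=\eps^{C'}9^{-l}$, where $C'$ is large compared to $C^2$, and enlarge the $1/\eps$ exponent in $N$ accordingly. This does not change the complexity. The term $\poly(1/\eps')=2^{O(l)}\poly(1/\eps)$ is absorbed into $d^{O(l)}\poly(1/\eps)$. The learner's $d$-exponent depends only on $\max(\gamma,\eps')=\gamma$. With this change your argument goes through.
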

\begin{proof}
Denote by $l$ the polynomial degree of the tests, $\eps'$ the prior algorithm accuracy parameter and $\Delta$ be the slice width as defined in line~\ref{line:init}.
Denote by $S_i$, $i\in [n+1]$, the slices defined as in line~\ref{line:slices}, and let $n+1$ be their number. Note that $n+1 = O(1/\Delta) = \poly(1/\eps)$.

First note that by applying \Cref{fact:general-massart}, Line~\ref{line:run-prev-alg} with probability at least $11/12$ returns a halfspace $h$ with error at most 
$\opt+\eps' \leq \opt + \eps$ in $\poly(N,d)$ time.

It therefore remains to show that, with high probability, none of our tests rejects $h$. We proceed by establishing several concentration bounds for the Gaussian distribution, which we state and prove in the following claims.
\begin{claim}[Slice mass concentration]\label{cl:slice-concetration}
There exists a large enough universal constant $C>0$ such that the following hold.
Let $\widehat{D}$ be the empirical distribution over $N\geq C d^{3l}\log(1/\delta)/\eps^3$ i.i.d.\ samples from $D_\x = \normal(\vec 0, \vec I)$. 
Then with probability $1-\delta$ it holds that $|\E_{\x\sim \widehat{D}_{\x}}[\x\in S_i]-\E_{\x\sim \cN^d}[\x\in S_i] |\leq \eps/{d^l}$ for all slices $S_i, i\in [n+1]$, where the randomness is over the i.i.d.\ samples drawn from $D$.
\end{claim}
\begin{proof}%
Fix a slice $S_i ,i \in [n+1]$. 
Let $\x^{(1)},\dots,\x^{(N)}\sim D_\x$ i.i.d.\ and define $Z_j := \Ind(\x_j \in S_i)\in\{0,1\}$.
Write $p := \E[Z_j] = \Pr_{\x\sim D_\x}[\x\in S_i]$.
Also $\E_{\x\sim \widehat{D}_{\x}}[\x\in S_i] = \frac{1}{N}\sum_{j=1}^N Z_j$.

By Hoeffding's inequality, for any $t>0$,
\[
\Pr\!\left[\left|\frac{1}{N}\sum_{j=1}^N Z_j - p\right| \ge t\right]
\le 2\exp(-2Nt^2).
\]
Taking $t=\eps/d^l$ and requiring
$
N \ge \frac{1}{2t^2}\log\frac{2}{\delta}
= \frac{d^{2l}}{2\eps^2}\log\frac{2}{\delta},
$
we get that with probability at least $1-\delta$,
\[
\left|\E_{\x\sim \widehat{D}_{\x}}[\x\in S_i]-\Pr_{x\sim D_x}[x\in S_i]\right|
\le \eps/d^l\;.
 \]
Finally, since the number of slices is $O(1/\Delta)=\poly(1/\eps)$, applying the union bound over the slices $S_i$ completes the proof of \Cref{cl:slice-concetration}.

\end{proof}
Denote by $\vec U$ the change of basis matrix computed in line \ref{line:change-of-basis}.
Note that when $D_{\x}$ is the standard Gaussian distribution, since $\vec U$ is orthonormal, $\x^{\w}$ and $\x^{\perp \w}$ are independent, and the slices $S_i$, $i \in [n+1]$, condition only on $\x^{\w}$, the distribution of the random vector $\z = \vec U^{\top}\x \in \R^{d-1}$ conditioned on $\x \in S_i$ is $\mathcal{N}^{d-1}$.

\begin{claim}[Gaussian Moment concentration]
\label{cl:moment-concetration}
There exists a large enough universal constant $C>0$ such that the following hold.
Let $\widehat{D}$ be the empirical distribution on $N\geq Cd^{2l} \log(1/(\delta \Delta))/(\eps^C)$ i.i.d.\ samples from $D_\x = \normal(\vec 0, \vec I)$.
Then with probability $1-\delta$  for all multi-indices $\alpha \in \N^{d-1}$ satisfying $1\leq \abs{\alpha}\leq l$,
and all slices $S_i, i\in [n+1]$,  it holds that
$|\E_{\x\sim \widehat{D}_{\x}}[He_{\alpha}(\vec U^{\top} \x)|\x\in S_i]-\E_{\x\sim \cN^d}[He_{\alpha}(\vec U^{\top} \x)|\x\in S_i] |\leq \eps/d^{l}$.
\end{claim}\begin{proof}
Fix $\alpha\in \N^{d-1}$ with $1\le |\alpha|\le l$ and some slice $S_i$.
Note that $\E_{\x\sim \cN^d}\!\left[He_{\alpha}(\vec U^{\top}\x)\mid \x\in S_i\right]=\E_{\z\sim \cN^{d-1}}\!\left[He_{\alpha}(\z)\right]$. 

Let $\mu_\alpha := \E_{\z\sim \cN^{d-1}}[He_{\alpha}(\z) ]$ and let
$\widehat\mu_\alpha := \E_{\x\sim \widehat D_\x}[He_{\alpha}(\vec U^{\top}\x) |\x\in S_i]$ be the empirical mean.
Using Chebyshev's inequality we have 
\[
\Pr\big[\,|\widehat\mu_\alpha-\mu_\alpha|\ge t\,\big]
\le \frac{1}{Nt^2}\mathrm{Var}(He_{\alpha}(\vec U^{\top}\x) |\x\in S_i)
\le \frac{1}{Nt^2}\E[He_{\alpha}^2 (\z)]/\pr[\x\in S_i]\leq  \frac{1}{Nt^2\poly(\eps)} \;,
\]
where we used the fact that for all slices as defined in Line \ref{line:slices} it holds that $\pr[\x\in S_i]\geq \poly(\eps)$ for all $i\in [n+1]$. 
Taking $t=\eps/d^l$ and $N \ge C d^{4l} \log(1/(\delta \Delta))/(\eps^C\delta) $
(for a large enough universal $C$) makes the above probability at most  $\delta/d^{2l}$.

Now the number of multi-indices $\alpha\in \N^d$ with $1\le |\alpha|\le l$
is at most  $d^{2l}$ and the number of slices is at most $\Delta^{-1}$, hence by the union bound with at most $Cd^{4l}/(\eps^C\delta)$ samples for a sufficiently large constant $C$ we can estimate all expectations simultaneously to error $\eps/d^{l}$.
Which completes the proof of \Cref{cl:moment-concetration}.
\end{proof}

\begin{claim}[Non-negativity certificate]\label{cl:non-negativity}
There exists a large enough universal constant $C>0$ such that the following hold.
Fix $S_i$ a set defined in Line \ref{line:slices} and 
denote by $\widehat{\vec M}_i$ the matrix computed in Line \ref{line:negativity-test} restricted to the slice $S_i$ defined as in Line~\ref{line:slices}
using $N\geq Cd^{2l}/(\eps^C\delta)$ i.i.d.\ samples from $D$. 
Then with probability $1-\delta$ it holds that
$\widehat{\vec M_i} \succeq 0$ for all $i\in [n+1]$.
\end{claim}
\begin{proof}%
Let $f:\R^d\to\{\pm 1\}$ be the (unknown) halfspace with respect to which $D$ satisfies the
$\eta$-Massart condition. 
Let $H(\z)\in\R^m$ be the vector of all Hermite polynomials $He_{\alpha}(\z), \alpha \in \N^{d-1} ,\z\in \R^{d-1}$ of total degree
$|\alpha|\le l$ (so $m=\binom{d-1+l}{l}=d^{O(l)}$). Define the following three matrices (all conditioned on $\x\in S_i$):

$$
\vec M^{\star} \eqdef \E_{(\x,y)\sim D}\!\left[H(\vec U^{\top}\x)H(\vec U^{\top}\x)^\top\,(y f(\x)-(1-2\eta))\mid \x\in S_i\right]\;,
$$
$$
\vec M \eqdef \E_{(\x,y)\sim D}\!\left[H(\vec U^{\top}\x)H(\vec U^{\top}\x)^\top\,(y h(\x)-(1-2\eta))\mid \x\in S_i\right]\;,
$$
$$
\vec M' \eqdef \E_{(\x,y)\sim \widehat{D}}\!\left[H(\vec U^{\top}\x)H(\vec U^{\top}\x)^\top\,(y h(\x)-(1-2\eta))\mid \x\in S_i\right]\;.
$$
We first prove that $\vec M^\star\succeq 0$.
For notational simplicity denote by $\z\in \R^{d-1}$ the random vector $\z= \vec U^{\top} \x$.
Fix any $\vec a\in\R^m$ and let $p(\z)\eqdef \vec a^\top H(\z)$.
Then
\begin{align*}
\vec a^\top \vec M^\star \vec a
&=\E_{(\x,y)\sim D}\!\left[p(\z)^2\,(y f(\x)-(1-2\eta))\mid \x\in S_i\right]\\
&=\E_{(\x,y)\sim D}\!\left[p(\z)^2\cdot \E[y f(\x)-(1-2\eta) \mid \x]\mid \x\in S_i\right].  
\end{align*}

Under $\eta$-Massart noise w.r.t.\ $f$, we have $\E[y\mid \x]=(1-2\eta(\x))f(\x)$ with
$\eta(\x)\le \eta$, hence $\E[y f(\x)\mid \x]=1-2\eta(\x)\ge 1-2\eta$ and therefore
\[
\E_{(\x,y)\sim D}[y f(\x)-(1-2\eta) \mid \x] = (1-2\eta(\x))-(1-2\eta) \ge  0.
\]
Thus $\vec a^\top \vec M^\star \vec a\ge 0$ for all $\vec a$, i.e. $\vec M^\star\succeq 0$.

Now we prove that $\vec M$ is close to $\vec M^\star$ in operator norm.
Let $B=\{\x: h(\x)\neq f(\x)\}$. Since $h,f\in\{\pm1\}$, we have $|h(\x)-f(\x)|\le 2\Ind(\x\in B)$.
Fix any unit vector $\vec a\in\R^m$ and let $p(\z)=\vec a^\top H(\z)$.
\begin{align*}
\left|\vec a^\top(\vec M-\vec M^\star)\vec a\right|
\le 2\, \E_{(\x,y)\sim D}\!\left[p^2(\z) \Ind(\x\in B)\mid \x\in S_i\right].
\end{align*}
By Cauchy--Schwarz (under the conditional law given $\x\in S_i$),
\[
\E_{(\x,y)\sim D}\!\left[p^2(\z) \Ind(\x\in B)\mid \x\in S_i\right]
\le \sqrt{\E_{(\x,y)\sim D}[p(\z)^4\mid \x\in S_i]}\cdot \sqrt{\Pr_{(\x,y)\sim D}[\x\in B\mid \x\in S_i]}.
\]
Next since $\z$ is independent of $\w\cdot\x$ and $S_i$ is of the form $\{\x:\w\cdot \x \in [a,b]\}$, for some $a<b$ we have 
\[
\E_{(\x,y)\sim D}[p(\z)^4\mid \x\in S_i] =\E_{\z\sim \cN^{d-1}}[p(\z)^4]\;.
\]
Since $p$ has total degree at most $l$, Gaussian hypercontractivity gives
$\|p\|_{L^4} \le (4-1)^{l/2}\|p\|_{L^2} = 3^{l/2}\|p\|_{L^2}$.
Moreover, $\E[p(\z)^2]=\|\vec a\|_2^2=1$ by orthonormality of the Hermite basis
under the Gaussian measure.
Thus,
\(
\sqrt{\E[p(\z)^4\mid \x\in S_i]}\le 3^{l/2}\;.
\)
Also $\Pr[\x\in B\mid \x\in S_i]\le \Pr[\x\in B]/\Pr[\x\in S_i]$. Hence,
$$\left|\vec a^\top(\vec M-\vec M^\star)\vec a\right|\leq 3^{l/2} \sqrt{\frac{\Pr[\x\in B]}{\Pr[\x\in S_i]}}\;.$$
Finally, since $h$ is $\eps'$-optimal and $D$ is $\eta$-Massart w.r.t.\ $f$, the excess-error identity
implies $\Pr[\x\in B]\le \eps'/(1-2\eta)$. Also by the slice construction and slice-mass test 
$\Pr[\x\in S_i]\ge \poly(\eps)$. Hence, choosing $\eps'\leq \eps^{C}/3^{l}$ for a sufficiently large constant $C>0$ implies $\|\vec M-\vec M^\star\|_{\op}\le \eps/3$.

Now we show that $\vec M'$ concentrates around $\vec M$ in Frobenius norm.
Now note that each element of the matrix $\vec M'$ is an empirical estimate of a quantity of the form
$\E_{(\x,y)\sim D}[He_{\alpha}(\z) (yh(\x)-(1-2\eta))\mid \x\in S_i]$ for $|\alpha|\le 2l$.
By the same argument as in \Cref{cl:moment-concetration}, we can estimate all entries using
using $d^{2l}/(\eps^C\delta)$ 
 with probability $1-\delta$ 
 that
\[
\|\vec M-\vec M'\|_F \le \eps/3.
\]
Combining the above
by the triangle inequality we have
\[
\|\vec M'-\vec M^\star\|_{\op}
\le \|\vec M'-\vec M\|_\op + \|\vec M-\vec M^\star\|_\op
\le 2\eps/3.
\]
Since $\vec M^\star\succeq \vec 0$, we have 
\[
\lambda_{\min}(\vec M') \ge \lambda_{\min}(\vec M^\star)-\|\vec M'-\vec M^\star\|_{\op}
\ge -\eps,
\]
after adjusting constants, i.e. $\vec M'\succeq -\eps \vec I$.

Finally note that from the above for a coefficient vector $\vec a$ we have 
$$\vec a^{\top} \vec M'\vec a \geq -\eps \|\vec a\|^2_2= -\eps \|\vec a^\top H(\z)\|^2_{L^2}\;,$$
where we used Parseval's identity.
This implies that $\widehat{\vec M}_i\succeq \vec 0$ after simple rearrangements, which concludes the proof of \Cref{cl:non-negativity}.
\end{proof}

Combining the above claims for parameter $\delta=1/12$ with the union bound (note that there are 3 tests and the event that \Cref{fact:general-massart} returns an $\opt+\eps'$ halfspace), we can conclude that \Cref{alg:testable-massart-gamma}
uses $N= d^{O(\log^{3}( 1/(\beta\max(\eps,\gamma)))\log^2(1/\beta)/\beta^2)}\poly(1/\eps)$ samples and $\poly(N, d)$ time, and outputs a classifier with error at most
$\opt + \eps$. This completes the proof of \Cref{lem:completeness}. 
\end{proof}

\section{Omitted Content from Soundness Proof}\label{sec:soundness-app}
In this section, we prove \Cref{prop:soundness-gamma}.
First, we prove certain  results that are integral to the analysis. 
We prove an important auxiliary lemma which shows that matching the first 
$\polylog(1/\gamma)$ 
moments of our distribution with those of the Gaussian essentially certifies that the two induce similar biases gaussian bias up to $\gamma$.
\begin{lemma}\label{cl:probability-transfer}
There exists a sufficiently large absolute constant $C>0$ such that the following holds.
Let $D_\x$ be a distribution over $\R^d$,  $\gamma\in(0,1/2]$, $\w \in \R^d$ a unit vector, and
$S=\{\x:\w\cdot \x\in[a,b]\}$ for some $a<b$.
Assume that for this set $S$, the distribution $D_\x$ matches
$l\eqdef C\log(1/\gamma)$ \emph{orthogonal (Hermite) moments} with error at most
$\gamma/d^{Cl}$ as in Line~\ref{line:moment-test}.
Let $s$ be a halfspace depending only on the orthogonal complement of $\w$, i.e.,
$s(\x)=s(\x^{\perp \w})$.
If $\Pr_{\x\sim \cN^d}\!\left[s(\x)=1 \mid \x\in S\right]\le \gamma^{C}$, then
$\Pr_{\x\sim D_{\x}}\!\left[s(\x)=1 \mid \x\in S\right]\le \gamma$.
\end{lemma}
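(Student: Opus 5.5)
We follow the Markov/moment argument used in the earlier fact on bias preservation, applied conditionally on the slice $S$. Write $s(\x)=\Ind(\vec u\cdot \x^{\perp\w}\ge \tau)$ for a unit vector $\vec u\in\w^\perp$ and $\tau\in\R$; a halfspace depending only on $\x^{\perp\w}$ can always be written this way, with the opposite orientation handled by replacing $\vec u$ by $-\vec u$. Under $\cN^d$, conditioning on $\x\in S$ does not change the law of $\x^{\perp\w}$, since it is independent of $\w\cdot\x$. Hence $\vec u\cdot\x$ conditioned on $S$ is exactly $\cN(0,1)$, and the hypothesis gives $\Pr_{z\sim\cN}[z\ge\tau]\le\gamma^C$. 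A standard Gaussian tail lower bound then yields $\tau\ge c\sqrt{C\log(1/\gamma)}$ for an absolute constant $c>0$. In particular $\tau>0$ once $C$ is large, so the event $\{s=1\}$ is contained in $\{(\vec u\cdot\x)^k\ge\tau^k\}$ for even $k$.

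Next choose an even $k\le l=C\log(1/\gamma)$, say $k\approx \log(1/\gamma)$ rounded up to an even integer. Apply Markov conditionally:
\[
\Pr_{D_\x}[s(\x)=1\mid S]\le \frac{\E_{D_\x}[(\vec u\cdot\x)^k\mid S]}{\tau^k}.
\]
The polynomial $p(\x)=(\vec u\cdot\x)^k$ has degree $k\le l$ and depends only on $\x^{\perp\w}$, so \Cref{cl:orth-moment-transfer-poly} applies with tolerance $\tau_m=\gamma/d^{Cl}$. It gives
\[
\E_{D_\x}[p\mid S]\le \E_{\cN^d}[p\mid S]+\frac{\gamma}{d^{Cl}}\,d^{l}\,\|p\|_{L^2} = (k-1)!!+\frac{\gamma}{d^{(C-1)l}}\sqrt{(2k-1)!!}.
\]
Since $\sqrt{(2k-1)!!}\le (2k)^{k/2}$ and $k\le l$, the error term is at most $1$ once $C$ is large and $d\ge 3$. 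Therefore $\E_{D_\x}[p\mid S]\le k^{k/2}+1\le 2k^{k/2}$.

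It remains to balance the parameters. We get
\[
\Pr_{D_\x}[s=1\mid S]\le 2\Big(\frac{\sqrt{k}}{\tau}\Big)^k\le 2\Big(\frac{\sqrt{k}}{c\sqrt{C\log(1/\gamma)}}\Big)^k.
\]
With $k=\Theta(\log(1/\gamma))$, the base is at most $1/(c\sqrt{C})\cdot O(1)\le e^{-2}$ for $C$ large. The bound is then $2e^{-2k}\le\gamma$ provided $k\ge\log(1/\gamma)$, with a little slack for the factor $2$ and the rounding to an even integer.

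The main obstacle is this balancing. We need $k$ large enough that $e^{-\Theta(k)}\le\gamma$ while keeping $\sqrt{k}/\tau$ bounded away below $1$. This forces $\tau^2\gg k\gtrsim\log(1/\gamma)$, which is exactly why the hypothesis needs $\gamma^C$ rather than $\gamma$. We also need $k\le l$ so that the moment test covers $p$. All three constraints are satisfiable by taking $k=2\lceil\log(1/\gamma)\rceil$ and $C$ a sufficiently large absolute constant. The edge case $\gamma$ close to $1/2$ is absorbed by choosing $k\ge 2$.
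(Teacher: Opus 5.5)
Your proof is correct and is essentially the paper's argument: the paper passes to the law of $\vec U^{\top}\x$ given $\x\in S$ and invokes \Cref{fact:moment-bias-transfer}, whose proof is exactly your Markov bound on the $k$-th moment of $\vec u\cdot\x$ via \Cref{cl:orth-moment-transfer-poly} plus a Gaussian tail lower bound on $\tau$, whereas you run that argument directly on the slice. One small correction: the error term $\gamma d^{-(C-1)l}\sqrt{(2k-1)!!}$ is not at most $1$ uniformly in $\gamma$ for a fixed absolute $C$ (at $d=3$, the factor $(2k)^{k/2}$ is superexponential in $\log(1/\gamma)$, while $d^{(C-1)l}/\gamma$ is only exponential), but it is at most $\gamma\, 2^{k/2} d^{-(C-1)l}\, k^{k/2}\le k^{k/2}$ because $\gamma\, 2^{k/2}\le 2$ for $k=2\lceil\log(1/\gamma)\rceil$, so your bound $\E_{D_\x}[p\mid S]\le 2k^{k/2}$ and everything after it still hold.
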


\begin{proof}%
Write $s(\x)=\sign(\bu\cdot \x^{\perp \w}-\tau)$ for some unit $\bu\perp \w$ and $\tau\in\R$.
Denote by $\vec U\in \R^{d\times (d-1) }$ the change of basis matrix computed in line \ref{line:change-of-basis}.
For notational simplicity denote by $\z\in \R^{d-1}$ the  vector $\z= \vec U^{\top} \x$.
By the orthogonal moment matching test (Line~\ref{line:moment-test}), for all multi-indices
$\alpha\in \N^{d-1}$ with $1\le |\alpha|\le l$,
\[
\Big|
\E_{\x\sim D_\x}\!\big[He_\alpha(\z) \mid \x \in S\big]
-
\E_{\x\sim \cN^d}\!\big[He_\alpha(\z) \mid  \x \in S\big]
\Big|
\;\le\; \gamma/d^{Cl}.
\]
Under $\cN^d$, $\w \cdot \x$ is independent of $\x^{\perp \w}$ and therefore independent of its $(d\!-\!1)$-dimensional representation $\z$, and $\z \sim \cN^{d-1}$; hence $\E_{\x \sim \cN^d}\!\left[He_\alpha(\z)\mid \x \in S\right]=\E_{\bz \sim \cN^{d-1}}\!\left[He_\alpha(\bz)\right]$.

Let $\widetilde D$ denote the distribution of $\z$ under $\x\sim D_\x$ conditioned on $\x\in S$.
Then $\widetilde D$ matches Hermite moments with $\cN^{d-1}$ up to degree $l$ to error $\gamma/d^l$.

We now apply the following fact, which shows that matching moments ensures that halfspaces that are not too biased under the Gaussian distribution are also not too biased under $D$.
A version of this lemma, stated in terms of monomial moments (rather than Hermite moments), was proved in \cite{KSV24}. We refer the reader to \Cref{sec:omitted-facts} for the proof of \Cref{fact:moment-bias-transfer}.
\begin{fact}[Matching $\log(1/\eps)$ moments implies bias preservation up to $\eps$]
\label{fact:moment-bias-transfer}
Fix $\eps>0, d\in \Z_+$.
There exist sufficiently large absolute constants $C_1,C_2>0$ such that the following holds.
Fix $d\in\Z_{>0}$ and $\eps\in(0,1)$, let $k=\lceil \log(1/\eps)\rceil$ and $T=\eps^{-C_1}$.
Let $D$ be a distribution over $\R^d$ such that for every multi-index $\alpha\in\N^d$ with
$1\le |\alpha|\le k$,
\[
\big|\E_{\x\sim D}[He_{\alpha}(\x)]-\E_{\x\sim \cN^d}[He_{\alpha}(\x)]\big|
\;\le\; \frac{\eps}{d^{C_2k}}.
\]
Then for any halfspace $f(\x)=\sign(\vec v \cdot \x-\tau)$ and $y\in\{\pm1\}$, if
$\Pr_{\x\sim \cN^d}[f(\x)=y]\le C_1/T$, then $\Pr_{\x\sim D}[f(\x)=y]\le \eps$.
\end{fact}

Let $\widetilde{s}:\R^{d-1}\to \R, \widetilde{s}(\z)= \sign(\widetilde{\vec u}\cdot \z -\tau), \widetilde{\vec u}=\vec U^{\top} \vec u$ be the induced $(d-1)$-dimensional representation of $s$. 
Applying \Cref{fact:moment-bias-transfer} to $\widetilde D$ in dimension $d-1$ yields:
if $\Pr_{\bz\sim \cN^{d-1}}[\widetilde{s}(\bz)=1]\le \gamma^{C}$ (for $C$ large enough), then
$\Pr_{\bz\sim \widetilde D}[\widetilde{s}(\bz)=1]\le \gamma$. 

Finally, since $s$ depends only on $\x^{\perp \w}$,
\[
\Pr_{\bz\sim \cN^{d-1}}[\widetilde{s}(\bz)=1]=\Pr_{\x\sim \cN^d}[s(\x)=1\mid \x\in S],
\text{ }
\Pr_{\bz\sim \widetilde D}[\widetilde{s}(\bz)=1]=\Pr_{\x\sim D_\x}[s(\x)=1\mid \x\in S],
\]
which concludes the proof. 
\end{proof}
Bellow we have the claim we deferred from the proof of \Cref{lem:advantage-lemma}.
\begin{claim}[Closeness of sandwiching polynomials with small bias difference]
\label{claim:closeness_bounding_polynomials}
It holds that 
$$\E_{\x\sim \cN^d}[p_+^{(2)}(\x)-p_-^{(1)}(\x) \mid S] \lesssim \alpha\E_{\x\sim \cN^d}[g(\x)\mid S] +\frac{|\vec v\cdot \w|}{\|\vec v^{\perp \w}\|}\,\Delta \;.$$
\end{claim}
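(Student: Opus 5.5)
We split the difference through $g_2$ and $g_1$:
\[
p_+^{(2)}-p_-^{(1)} = \big(p_+^{(2)}-g_2\big) + \big(g_2-g_1\big) + \big(g_1-p_-^{(1)}\big)
\le \big(p_+^{(2)}-p_-^{(2)}\big) + \big(p_+^{(1)}-p_-^{(1)}\big) + (g_2-g_1).
\]
This uses $p_-^{(2)}\le g_2$ and $p_+^{(1)}\ge g_1$. Since $g_2\le g_1$, the last term is nonpositive, and we may simply drop it. We then bound each sandwiching gap separately.

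Because the polynomials depend only on $\vec b\cdot\x^{\perp\w}$, and under $\cN^d$ this coordinate is a standard Gaussian independent of $\w\cdot\x$ (hence of the event $\x\in S$), the conditional expectations given $S$ equal the unconditional univariate ones. \Cref{lem:structural} then gives
\[
\E[p_+^{(i)}-p_-^{(i)}\mid S]\le \alpha\Pr_{z\sim\cN}[z\ge t_i], \qquad i=1,2.
\]
Since $t_2\ge t_1$, both terms are at most $\alpha\Pr_{z\sim\cN}[z\ge t_1]=\alpha\E[g_1\mid S]$.

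It remains to compare $\E[g_1\mid S]$ with $\E[g\mid S]$, and this is where the $\frac{|\vec v\cdot\w|}{\|\vec v^{\perp\w}\|}\Delta$ term enters. On each fiber, $g$ has a threshold $s(\w\cdot\x)\in[t_1,t_2]$, and by independence $\E[g\mid S]\ge \Pr_{z\sim\cN}[z\ge t_2]$. Hence
\[
\E[g_1\mid S]-\E[g\mid S]\le \Pr_{z\sim\cN}[t_1\le z<t_2]\le \frac{t_2-t_1}{\sqrt{2\pi}}.
\]
Here $t_2-t_1=\frac{|\vec v\cdot\w|}{\|\vec v^{\perp\w}\|}\Delta$, because $s$ is affine in $\w\cdot\x$ with slope $\frac{\vec v\cdot\w}{\|\vec v^{\perp\w}\|}$ over an interval of length $\Delta$. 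Combining these bounds and using $\alpha<1$ gives
\[
\E[p_+^{(2)}-p_-^{(1)}\mid S]\le 2\alpha\,\E[g\mid S]+\frac{2\alpha}{\sqrt{2\pi}}\cdot\frac{|\vec v\cdot\w|}{\|\vec v^{\perp\w}\|}\Delta,
\]
which is the claim.

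The only delicate point is the independence step. The slice $S$ conditions only on $\w\cdot\x$, while $\vec b\perp\w$, so conditioning on $S$ leaves the law of $\vec b\cdot\x$ equal to $\cN(0,1)$. This is what lets us apply \Cref{lem:structural} directly under the conditional measure. No multiplicative loss is incurred in converting $\Pr[z\ge t_1]$ to $\E[g\mid S]$, because that loss is absorbed into the additive angle term.
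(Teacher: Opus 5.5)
Your proof is correct and follows the paper's route. Like the paper, you split $p_+^{(2)}-p_-^{(1)}$ through $g_2$ and $g_1$, bound each sandwiching gap by $\alpha\Pr_{z\sim\cN}[z\ge t_i]$ via \Cref{lem:structural} (using that $\vec b\cdot\x$ is independent of $\w\cdot\x$), and convert to $\E[g\mid S]$ by Gaussian anti-concentration over $[t_1,t_2]$. Your bookkeeping is slightly cleaner: you drop the nonpositive term $g_2-g_1$ instead of bounding it, and you obtain the angle term with an extra factor of $\alpha$.
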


\begin{proof}%
Note that $g_1, g_2$ and their respective polynomials depend only on $\x^{\perp \w}$. Therefore, when taking expectation under the Gaussian, it suffices to take expectation over $\x^{\perp \w}$ alone.
We  decompose the error as follows:
\[
\E_{\x^{\perp \w}}\left[ p_{+}^{(2)} - p_{-}^{(1)} \right] = \E_{\x^{\perp \w}}\left[ p_{+}^{(2)} - g_{2} \right] + \E_{\x^{\perp \w}}\left[ g_{2} - g_{1} \right] + \E_{\x^{\perp \w}}\left[ g_{1} - p_{-}^{(1)} \right]
\]We analyze each term separately.
First note that 
$$\E_{\x^{\perp \w}}[g_2(\x^{\perp \w})-g_1(\x^{\perp \w})]\leq \E_{\x^{\perp \w}}[\Ind( t_1\leq \vec b\cdot \x^{\perp \w} \leq t_2)]\leq   \frac{|\vec v\cdot \w|}{\|\vec v^{\perp \w}\|}\,\Delta\;,$$
where we used the anti-concentration property of the standard one-dimensional Gaussian.
Note that by the definition of $p_+^{(2)}$  we have that 
\begin{align*}
\E_{\x^{\perp \w}}[p_{+}^{(2)} - g_{2}]\leq \E_{\x^{\perp \w}}[p_{+}^{(2)} - p_{-}^{(2)}]\leq \alpha\E_{\x^{\perp \w}}[g_2]\;.
\end{align*}
Also we have that
\begin{align*}
\E_{\x\sim \cN^d}[g_2(\x)-g(\x)\mid \x\in  S]\leq \E_{\x^{\perp \w}}[g_2(\x^{\perp \w})-g_1(\x^{\perp \w})]\leq  \frac{|\vec v\cdot \w|}{\|\vec v^{\perp \w}\|}\,\Delta\;.
\end{align*}
Therefore,
$$\E_{\x\sim \cN^{d}}\left[ p_{+}^{(2)}(\x) - g_{2}(\x)\mid \x \in S \right]
\leq \alpha\E_{\x\sim \cN^d}[g(\x)\mid \x\in S]+\frac{|\vec v\cdot \w|}{\|\vec v^{\perp \w}\|}\,\Delta
$$
The same argument holds also for $p_-^{(1)}$. 
This completes the proof of \Cref{claim:closeness_bounding_polynomials}.
\end{proof}
Before continuing with the proof of this result we make the following important remark.
\begin{remark}\label{rem:Generalization}
Note that Algorithm~\ref{alg:testable-massart-gamma} applies several tests (Lines~\ref{line:mass-test}, \ref{line:moment-test}, and \ref{line:negativity-test})
which compare certain expectations to their Gaussian counterparts (and, in the case of the non-negativity test, certify a required expectation inequality).
We say that a distribution satisfies these tests if the corresponding expectations satisfy the required bounds.

Consequently, upon acceptance, these tests are certified for the empirical distribution $\widehat{D}$ used by the tester, and not necessarily for the
population distribution $D$.
In the soundness proof we first use that $\widehat{D}$ satisfies the above properties to show that the returned hypothesis $h$ has nearly-optimal error
with respect to $\widehat{D}$.
We then apply a standard uniform convergence (VC) argument to transfer this guarantee from $\widehat{D}$ to $D$, completing the proof. 
\end{remark}

\subsection{Proof of \Cref{prop:soundness-gamma}}
\begin{proof}%
We follow the plan outlined in Remark~\ref{rem:Generalization}.
Namely, we first work with the empirical distribution $\widehat{D}$ used by the tester and show that, conditioned on acceptance, the returned hypothesis $h$
has nearly-optimal error with respect to $\widehat{D}$.
We then invoke a standard VC/uniform convergence bound to transfer this guarantee from $\widehat{D}$ to the population distribution $D$.

First denote by $\vec w\in \R^d, \|\vec w\|=1$ the defining vector of the returned halfspace $h$.
Denote by $\beta=1-2\eta$ the noise bias, by $l$ the degree, by $\Delta$ the slice width and by $\tau_p$ the slice mass accuracy used by the algorithm (Line \ref{line:init}).  Denote by $S_i, i\in[n+1]$ the slices defined in line \ref{line:slices}.

Consider a competitor $\gamma$-Biased halfspace $f=\sign(\vec v\cdot \x -t), \vec v\in \R^d,t\in \R$ with $\|\vec v\|=1$. 
We aim to prove that 
$\Pr_{(\x,y)\sim \widehat{D}}[h(\x)\neq y]
\le \Pr_{(\x,y)\sim \widehat{D}}[f(\x)\neq y] + \eps$.
Note that the threshold $h$ belongs to at most one slice, which we denote by $S_j$. 
Note that, since our distribution passes the slice mass test (line \ref{line:mass-test}), we have $\pr_{\x\sim \widehat{D}_{\x}}[S_1]+\pr_{\x\sim \widehat{D}_{\x}}[S_j]+\pr_{\x\sim \widehat{D}_{\x}}[S_{n+1}]\le \eps/2$, where $S_1$ and $S_{n+1}$ are the tail slices. Define $\mathcal{S}=\{S_i:i=2,\dots,j-1,j+1,\dots,n\}$ to be the set of all remaining slices defined in line \ref{line:slices}.
Therefore, it suffices to consider the remaining space and show that
$\Pr_{(\x,y)\sim \widehat{D}}[h(\x)\neq y, \x\in \bigcup_{S\in \cS} S]
\le \Pr_{(\x,y)\sim \widehat{D}}[f(\x)\neq y, \x\in \bigcup_{S\in \cS} S] + \eps/2$.

We split the proof into three cases according to the angle between the defining vectors of  $f$ and $h$. 
First assume that $\vec v^{\perp \w}\neq \vec 0$.
We may write $\vec v$ as $\vec v=\w \cos \theta  +\vec b\sin \theta $, where $\vec b$ unit vector such that  $\vec b\perp \w$.
Then $f(\x)= \sign(\vec b \cdot  \x - (t- \w \cdot \x \cos\theta)/\sin \theta  )$. 
For notational simplicity also define the per fiber threshold of $f$,  $s(z)\eqdef (t- z \cos\theta)/\sin \theta, z\in \R$.

\begin{lemma}[Closer to perpendicular] \label{lem:closer2perpendicular}
If $|\vec v \cdot \w |\leq 1/2$ and $\|\vec v^{\perp \w}\|\geq \eps^2$, then  $\Pr_{(\x,y)\sim \widehat{D}}[h(\x)\neq y]<  \Pr_{(\x,y)\sim \widehat{D}}[f(\x)\neq y]+\eps$.
\end{lemma}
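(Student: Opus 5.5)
We decompose the error difference slice by slice over $\cS$. For each $S\in\cS$, $h$ is constant on $S$, and the per-slice difference $\Pr_{\widehat D}[h\neq y, \x\in S]-\Pr_{\widehat D}[f\neq y,\x\in S]$ equals $\Pr_{\widehat D}[S]\cdot \E_{\widehat D}[g_S(\x) y h(\x)\mid S]\cdot(-1)$ up to sign conventions, where $g_S$ is the indicator of the disagreement region restricted to $S$. This is the identity at the end of the proof of \Cref{lem:advantage-lemma}. Since $|\bv\cdot\w|\le 1/2$, we have $\|\bv^{\perp\w}\|\ge \sqrt3/2$, so $\frac{|\bv\cdot\w|}{\|\bv^{\perp\w}\|}\Delta\le \Delta/\sqrt3=\eps^2/\sqrt3$. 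Because $\eps<\gamma$ and $\eps<\beta/2$ after Line~\ref{line:init-inputs}, condition (iii) of \Cref{lem:advantage-lemma} holds on every slice provided $\eps\le \beta\gamma/C'$. If this fails, we run the argument below with $\gamma$ replaced by a $\poly(\eps)$-level threshold, which is where the stated hypothesis $\|\bv^{\perp\w}\|\ge\eps^2$ enters.

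We classify the slices $S\in\cS$ by their Gaussian disagreement mass $\rho_S\eqdef\Pr_{\cN^d}[f\neq h\mid S]$. The threshold is $\gamma_0\eqdef \eps^{C}$, a small polynomial in $\eps$. Note that the lemma needs degree only $\polylog(1/\gamma_0)$; since $\gamma\ge\eps$, we have $l\gtrsim \log^3(1/\eps)$, which suffices, so the lemma can be invoked with bias parameter $\gamma_0$ in place of $\gamma$. The slices then fall into two groups.

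\begin{enumerate}
\item \textbf{Good slices} ($\rho_S\ge\gamma_0$). \Cref{lem:advantage-lemma}, applied with $\gamma_0$ and the tests certified for $\widehat D$, shows that $h$ has nonnegative advantage conditionally on $S$. Multiplying by $\Pr_{\widehat D}[S]$ gives a nonpositive contribution.
\item \textbf{Bad slices} ($\rho_S<\gamma_0$). On $S$, the disagreement region is $\{\vec b\cdot\x^{\perp\w}\ge s(\w\cdot\x)\}$ or its complement. It is sandwiched by the fiber-independent halfspace $g_1$ at threshold $t_1=\min_S s$. Moreover $\Pr_{\cN}[g_1\mid S]\le \rho_S+\eps^2$ by the anti-concentration step of \Cref{claim:closeness_bounding_polynomials}, so it is at most $2\gamma_0$ or so. \Cref{cl:probability-transfer}, which needs only $O(\log(1/\eps))$ matched orthogonal moments, then gives $\Pr_{\widehat D}[g_1\mid S]\le \eps^{2}$, say. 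Hence the disadvantage on $S$ is at most $\Pr_{\widehat D}[S]\cdot\eps^2$.
\end{enumerate}

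Summing over slices, the total disadvantage is at most $\eps^2\sum_S\Pr_{\widehat D}[S]\le\eps^2<\eps/2$. Adding the $\eps/2$ from $S_1,S_j,S_{n+1}$ yields the claim.

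The main obstacle is matching parameters. \Cref{lem:advantage-lemma} is stated with the same $\gamma$ in the degree $l$, the disagreement lower bound, and the advantage conclusion, and it requires $\eps<\gamma$. I would re-run its proof with $\alpha=\beta/C$ and a threshold $t_1=O(\sqrt{\log(1/\gamma_0)})$, checking that the degree from \Cref{lem:structural}, namely $(\log(1/\eps))^3\log^2(1/\beta)/\beta^2$, fits within $l$. Only a nonnegative advantage is needed here, not $\Omega(\beta\gamma)$. The additive errors $\eps^C\gamma_0$ from moment matching and $\eps^2/\sqrt3$ from the angle term must be dominated by $\beta\rho_S/C$; if the angle term $\eps^2$ is not small relative to $\gamma_0$, I would shrink the effective slice width in the analysis or absorb the slack into the bad-slice budget.
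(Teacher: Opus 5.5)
There is a genuine gap, at the parameter step you flagged as the main obstacle. You write that since $\gamma\ge\eps$ one has $l\gtrsim\log^3(1/\eps)$, but the inequality runs the other way. From $\gamma\ge\eps$ you get $l=C\log^3(1/\gamma)\log^2(1/\beta)/\beta^2\le C\log^3(1/\eps)\log^2(1/\beta)/\beta^2$. For, say, $\gamma=1/2$ the degree is $O_\beta(1)$ no matter how small $\eps$ is; this is exactly what the $d^{\polylog(\min\{1/\gamma,1/\eps\})}$ bound buys.

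At your threshold $\gamma_0=\eps^C$, neither of your two tools is then available.
\begin{itemize}
\item \emph{Good slices.} The fiber thresholds can be of order $\sqrt{\log(1/\eps)}$. The sandwiching polynomials from \Cref{lem:structural} then have degree of order $\log^3(1/\eps)\log^2(1/\beta)/\beta^2$, which exceeds $l$, so \Cref{lem:advantage-lemma} cannot be invoked.
\item \emph{Bad slices.} \Cref{cl:probability-transfer} at level $\eps^C$ needs $\Theta(\log(1/\eps))$ matched orthogonal moments, while the algorithm matches only $l$. This is not just a proof artifact. When $l\ll\log(1/\eps)$, a distribution matching $l$ Gaussian moments can place mass much larger than $\eps$ beyond a threshold of order $\sqrt{\log(1/\eps)}$. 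The available bound for a slice of Gaussian disagreement below $\gamma^{C}$ is a $\widehat D$-disagreement of order $\gamma^{C_2}$, which may be a constant.
\end{itemize}
So your leakage budget of $\eps^2$ per unit mass on bad slices is not certified by the tests. The scheme ``nonnegative advantage on good slices plus tiny leakage on bad ones'' only works when $\gamma\le\eps^{O(1)}$.

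The missing ingredient is the hypothesis $f\in\cH_{d,\gamma}$, which your argument never uses. That is a warning sign: tests whose degree depends only on $\gamma$ have no reason to certify $h$ against halfspaces far more biased than $\gamma$. The paper uses this hypothesis to obtain a \emph{strict} certified advantage that pays for the $\gamma^{C_2}$ leakage.
\begin{itemize}
\item $|\bv\cdot\w|\le 1/2$ forces $|\sin\theta|\ge\sqrt3/2$, and $\gamma$-bias forces $|t|=O(\sqrt{\log(1/\gamma)})$. Hence the fiber threshold $s(z)=(t-z\cos\theta)/\sin\theta$ is $O(\sqrt{\log(1/\gamma)})$ for all $|z|=O(1)$.
\item So every slice of the central band $\cS_1$ has Gaussian disagreement at least $\gamma^{C_1}$. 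At that level \Cref{lem:advantage-lemma} applies with the available degree and gives an advantage of order $\beta\gamma^{C_1}$ per slice.
\item The slice-mass test gives $\cS_1$ total $\widehat D$-mass $\Omega(1)$.
\item The remaining slices with disagreement at least $\gamma^{C_1'}$ contribute nonpositively. Those below contribute at most $\gamma^{C_2}$ in total, by \Cref{cl:probability-transfer}.
\item Taking $C_2\gg C_1$ and using $\gamma\le\beta$, the net error difference on $\bigcup_{S\in\cS}S$ is negative.
\end{itemize}
The only $\eps$-sized loss is then the $\eps/2$ from the excluded slices $S_1,S_j,S_{n+1}$.
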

\begin{proof}%
Note that for $\w \cdot \x = 0$, the threshold of $f$ is $s(0)=t/\sin\theta$. Since $|\cos\theta|\le 1/2$ implies $|\sin\theta|=\sqrt{1-\cos^2\theta}=\Omega(1)$, we obtain $|s(0)|=\frac{|t|}{|\sin\theta|}=O(t)$.
Thus, on the fiber $\w\cdot \x=\Delta t$, the threshold satisfies $|s(\Delta t)|=O(t)+|\Delta t|\cos\theta/\sin\theta$. Hence, for any $|\Delta t|=O(1)$ we have $|s(\Delta t)|=O(t)+O(1)$.

Denote by $\mathcal{S}_1$ the set of slices $S\in\mathcal{S}$ that intersect the region $|\w\cdot\x|\le \Delta t$.
Since the slice width is bounded by a constant, we have that for all $\w\cdot\x=z$ within these slices, $|s(z)|=O(t)+O(1)$.
Moreover, since $f$ is defined to have $\gamma$-bias under the Gaussian, we have $|t|\le \sqrt{\log(1/\gamma)}$.
Hence, for each slice $S\in\mathcal{S}_1$ we have $\pr_{\x\sim \cN^d}[f(\x)\neq h(\x)\mid \x\in S]\ge \gamma^{C_1}$ for a sufficiently large universal constant $C_1$, since $|s(\w\cdot\x)|=O(t)+O(1)=O(\sqrt{\log(1/\gamma)})$ (because $\gamma<1/2$) and $h(\x)$ is constant for all $\x\in S$ whenever $S\in\mathcal{S}_1$.
Also, since we assumed that $\|\vec v^{\perp \w}\|\ge \eps^2$, by the choice of slice width we have
$|\vec v\cdot \w|\Delta/\|\vec v^{\perp \w}\|\le (1-2\eta)\gamma^{C_1}/C_1$.
Therefore, since the algorithm accepts $h$ for a big enough degree parameter $l$, by \Cref{lem:advantage-lemma} it follows that for every $S\in\mathcal{S}_1$,
\[
\E_{(\x,y)\sim \widehat{D}}[h(\x)\neq y\mid \x\in S]-\E_{(\x,y)\sim \widehat{D}}[f(\x)\neq y\mid \x\in S]\le -(1-2\eta)\gamma^{C_1}.
\]

Note that since the union of all sets $S\in \cS_1$ forms a slice of constant width, say $C'$, across the direction $\w$, we have
$\pr_{\x\sim \cN^d}\!\left[\bigcup_{S\in \cS_1} S\right]=\Omega(1)$.
As a result, since we pass the slice mass test (Line \ref{line:mass-test}), the total mass of the slices in $\cS_1$ satisfies
\begin{align*}
    \sum_{S\in \cS_1} \pr[S]\ge \Omega(1)-\lceil C'/\Delta\rceil \tau_p = \Omega(1)\;,
\end{align*}
where we used the fact that $\tau_p<\Delta/C$ for a sufficiently large constant $C$.

As a result, the total advantage is garnered from slices in $\cS_1$ is
$$\sum_{S\in \cS_1}\left(\E_{(\x,y)\sim \widehat{D}}[h(\x)\neq y,\x\in S ]-\E_{(\x,y)\sim \widehat{D}}[f(\x)\neq y,\x\in S]\right)\leq -(1-2\eta)\gamma^{C_1}\;.$$

Since we choose the degree parameter $l=C\log^3(1/\gamma)\log^2(1/\beta)/\beta^2$ for a sufficiently large constant $C$, \Cref{lem:advantage-lemma}
applies to slices whose Gaussian-bias is at least $\gamma^{C_1'}$, for some constant $C_1'$ (as all other conditions of the lemma hold as before). Hence, for every slice $S$ with Gaussian-bias at least $\gamma^{C_1'}$, we have $\E_{(\x,y)\sim \widehat{D}}[h(\x)\neq y,\ \x\in S]-\E_{(\x,y)\sim \widehat{D}}[f(\x)\neq y,\ \x\in S]\leq 0$.

Let $\cS_2$ denote the set of slices with Gaussian-bias at most $\gamma^{C_1'}$. By \Cref{cl:probability-transfer}, for each such slice $S$ we have $\pr_{(\x,y)\sim \widehat{D}}[h(\x)\neq f(\x)\mid \x\in S]\le \gamma^{C_2}$ for some constant $C_2$ to be choosen later. Hence, for every $S\in \cS_2$,
$\E_{(\x,y)\sim \widehat{D}}[h(\x)\neq y,\ \x\in S]-\E_{(\x,y)\sim \widehat{D}}[f(\x)\neq y,\ \x\in S]\le \gamma^{C_2}$.

Now combining the above gives us that 
\begin{align*}
    \Pr_{(\x,y)\sim \widehat{D}}[h(\x)\neq y, \x\in \bigcup_{S\in \cS} S]
\le \Pr_{(\x,y)\sim \widehat{D}}[f(\x)\neq y, \x\in \bigcup_{S\in \cS} S] - (1-2\eta)\gamma^{C_1}+\gamma^{C_2}
\end{align*}
Since $\gamma<1-2\eta$, as set in line \ref{line:init-inputs}, and $1/2\geq \gamma\geq \eps$ for a sufficient difference between $C_2-C_1$ we have that 
\begin{align*}
    \Pr_{(\x,y)\sim \widehat{D}}[h(\x)\neq y, \x\in \bigcup_{S\in \cS} S]
\le \Pr_{(\x,y)\sim \widehat{D}}[f(\x)\neq y, \x\in \bigcup_{S\in \cS} S] - \eps^{C'}\;.
\end{align*}
for a constant $C'>1$, 
which concludes the proof of \Cref{lem:closer2perpendicular}.
\end{proof}
\begin{lemma}[Closer to parallel] \label{lem:closer2parallel}
If $|\vec v \cdot \w |\geq 1/2$ and $\|\vec v^{\perp \w}\|\geq \eps^2$, then  $\Pr_{(\x,y)\sim \widehat{D}}[h(\x)\neq y]<  \Pr_{(\x,y)\sim \widehat{D}}[f(\x)\neq y]+\eps$.
\end{lemma}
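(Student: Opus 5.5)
We follow the template of \Cref{lem:closer2perpendicular}: locate a region of slices where \Cref{lem:advantage-lemma} applies and yields an advantage of $\poly(\gamma)$, and show that the remaining slices contribute at most a smaller disadvantage. Write $\vec v=\w\cos\theta+\vec b\sin\theta$ with $|\cos\theta|\ge 1/2$ and $|\sin\theta|=\|\vec v^{\perp\w}\|\ge\eps^2$. The per-fiber threshold $s(z)=(t-z\cos\theta)/\sin\theta$ is then a steep linear function of $z$, with slope $|\cot\theta|\ge \Omega(1)$ and possibly as large as $\eps^{-2}$. Hence the band of fibers where $|s(z)|\le C\sqrt{\log(1/\gamma)}$ is the interval $I=\{z: |t-z\cos\theta|\le C|\sin\theta|\sqrt{\log(1/\gamma)}\}$. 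Its width is $\Theta(|\tan\theta|\sqrt{\log(1/\gamma)})$, and it is centered at $z_0=t/\cos\theta$ with $|z_0|\le 2|t|\le 2\sqrt{\log(1/\gamma)}$.

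\textbf{Step 1 (advantage region).} Consider the slices $S\in\cS$ inside $I$ on which $|s(z)|\le C_1\sqrt{\log(1/\gamma)}$ throughout. There $f$ disagrees with the constant $h$ with Gaussian conditional probability at least $\gamma^{C_1}$. Condition (iii) of \Cref{lem:advantage-lemma} is checked as follows: $|\cot\theta|\Delta\le \eps^{-2}\Delta$ and $\Delta=\eps^2$, so the per-slice threshold variation is at most a constant. I would therefore shrink the slice width, or equivalently apply the lemma to sub-slices, so that this ratio is at most $\beta\gamma^{C_1}/C$. This is the main obstacle. With $\Delta=\eps^2$ and $|\sin\theta|$ as small as $\eps^2$, the condition fails. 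The workaround I would try is to reduce to the nearly-parallel regime: when $|\tan\theta|\lesssim \beta\gamma^{C}$, within each slice $f$ is essentially constant up to a Gaussian band of mass $O(|\cot\theta|\Delta\cdot\phi)$. In that case one can instead use the degree-$0$ non-negativity test, which gives $\E_{\widehat D}[yh(\x)\mid S]\ge\beta-\eps$. This shows $h$ is no worse than any constant on $S$, and the slices where $f$ is non-constant have small total mass.

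\textbf{Step 2 (accounting).} On slices where $f$ is a constant equal to $h$ there is no contribution. Where $f$ equals $-h$ as a constant, the degree-$0$ test gives $h$ an advantage of $\beta-\eps>0$. On slices whose conditional Gaussian disagreement is at most $\gamma^{C_1'}$, \Cref{cl:probability-transfer} bounds the $\widehat D$-disagreement by $\gamma^{C_2}$, so the disadvantage there is at most $\gamma^{C_2}$ per unit of slice mass. The remaining slices are those where $f$ is non-constant with non-negligible disagreement, and by Step 1 they give a nonpositive contribution.

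\textbf{Step 3 (summing).} Add the contributions weighted by slice masses, which the slice mass test certifies to within $\tau_p$. The total disadvantage is then at most $\gamma^{C_2}+n\tau_p\le\eps/2$, because $\gamma\ge\eps$ and $\tau_p=\eps^C/d^l$. Together with the $\eps/2$ from the tail slices and from $S_j$, this gives the claim. The delicate point is that the slice width must be fine relative to $|\tan\theta|$, or else the nearly-constant fallback of Step 1 must be used; I expect that is where the case split inside this lemma lives.
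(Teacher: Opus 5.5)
The gap is in your Step~3, and it cannot be fixed by adjusting constants. You conclude $\gamma^{C_2}+n\tau_p\le\eps/2$ ``because $\gamma\ge\eps$'', but that inequality runs the wrong way. Line~1 enforces $\gamma\ge\eps$, so $\gamma^{C_2}\ge\eps^{C_2}$, and for constant $\gamma$ (e.g.\ the homogeneous case) $\gamma^{C_2}$ is a constant. The tester matches only $l=\polylog(1/\gamma)$ moments, so \Cref{cl:probability-transfer} controls the $\widehat D$-leakage on low-disagreement slices only at scale $\poly(\gamma)$. This leakage therefore cannot be bounded absolutely by $\eps$. It has to be paid for by certified advantage, as in \Cref{lem:closer2perpendicular}, where $-\beta\gamma^{C_1}+\gamma^{C_2}<0$ uses $C_2\gg C_1$ and $\gamma\le\beta$.

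In the present regime that comparison is the whole difficulty, and your accounting gets its scaling wrong. The band where $|s(z)|=O(1)$ has $z$-width only $\Theta(|\tan\theta|)$ around $z_0$, with $|z_0|\lesssim\sqrt{\log(1/\gamma)}$. After the slice-mass test and \Cref{lem:advantage-lemma}, the certified advantage is therefore $\Omega(\beta\gamma^{C}|\tan\theta|)$, not $\poly(\gamma)$ as your Step~1 suggests. It can be as small as $\beta\gamma^{C}\eps^2$. A per-slice leakage of $\gamma^{C_2}$, summed over slices of total mass up to one, cannot be beaten by this.

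The missing idea is that the leakage must also be shown to be $O(\gamma^{C''}|\tan\theta|)$.
\begin{itemize}
\item Work on the side of $z_0$ where $f$ agrees with $h$. Slices with Gaussian disagreement at least $\gamma^{C}$ contribute nonpositively by \Cref{lem:advantage-lemma}, so they can be discarded.
\item Once the induced threshold reaches $T=C'\sqrt{\log(1/\gamma)}$, it keeps growing linearly: on the $i$-th subsequent slice, $|s|\ge T+i\Delta/|\tan\theta|$.
\item Apply Markov's inequality with the $k$-th orthogonal moment, $k=C'\log(1/\gamma)\le l$, which the moment test certifies via \Cref{cl:orth-moment-transfer-poly}. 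This bounds the conditional disagreement on that slice by $\lesssim k^{k/2}(T+i\Delta/|\tan\theta|)^{-k}$.
\item Summing against slice masses $\lesssim\Delta$ gives $\lesssim k^{k/2}|\tan\theta|/T^{k-1}\le\gamma^{C''}|\tan\theta|$, which the advantage absorbs.
\end{itemize}

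Your small-$\tan\theta$ fallback does not replace this argument. With only $l$ moments, forcing the band of non-constant slices to have mass $O(\eps)$, and the residual leakage elsewhere to be $O(\eps)$, requires $|\tan\theta|$ to be at most about $\eps$. That is the regime of \Cref{lem:near-parallel}, and it sits far below $\beta\gamma^{C}$ when $\gamma\gg\eps$.

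Your concern about condition~(iii) is legitimate. With $\Delta=\eps^2$ it fails when $\|\vec v^{\perp\w}\|\approx\eps^2$, and the paper's argument tacitly uses $\Delta\le\eps^{C}$. The repair, however, is to make the algorithm's slice width a larger power of $\eps$, at a $\poly(1/\eps)$ cost. Applying the lemma to sub-slices is not allowed, because the moment and non-negativity tests are certified only on the algorithm's own slices.
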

\begin{proof}%
Consider the point $z_0\in \R$ at which the competitor $f$ is unbiased, i.e., $s(z_0)=0$, or equivalently $z_0=t/\cos\theta$.
Note that at the point $z_0+\Delta t$ the bias is $s(z_0+\Delta t)=-\Delta t \cos\theta/\sin\theta$. Hence, for any $|\Delta t|=O(\tan\theta)$ we have $|s(z_0+\Delta t)|=O(1)$.

Denote by $\mathcal{S}_1$ the set of slices $S\in \mathcal{S}$ that intersect the region $|\w\cdot\x-z_0|\le \Delta t$.
Since the slice width is $O(\eps^2)=O(\tan\theta)$, we have that for all $\w\cdot\x=z$ within these slices, $|s(z)|=O(1)$.
Therefore, for each slice $S\in \mathcal{S}_1$ we have $\pr_{\x\sim \cN^d}[f(\x)\neq h(\x)\mid \x\in S]=\Omega(1)$, since $|s(\w\cdot\x)|=O(1)$ and $h(\x)$ is constant for all $\x\in S$ whenever $S\in \mathcal{S}_1$.
Also, since we assumed that $\|\vec v^{\perp \w}\|\ge \eps^2$, by the choice of slice width we have $|\vec v\cdot \w|\Delta/\|\vec v^{\perp \w}\|\le 1/C$ for a sufficiently large constant $C$.
Therefore, since the algorithm accepts $h$, by \Cref{lem:advantage-lemma} it follows that for every $S\in \mathcal{S}_1$,
$\E_{(\x,y)\sim \widehat{D}}[h(\x)\neq y\mid \x\in S]-\E_{(\x,y)\sim \widehat{D}}[f(\x)\neq y\mid \x\in S]\lesssim -(1-2\eta)$.

Note that $z_0+\Delta t\lesssim t/\cos\theta+\tan\theta\lesssim t+O(1)$ for all $\x\in \bigcup_{S\in \cS_1} S$.
Hence, we have $\pr_{\x\sim \cN^d}[f(\x)=1\mid \w\cdot\x=z]\ge \gamma^C$ for a sufficiently large constant $C>0$, for all $\w\cdot\x=z$ with $\x\in \bigcup_{S\in \cS_1} S$.
As a result, $\pr_{\x\sim \cN^d}\!\left[\bigcup_{S\in \cS_1} S\right]\ge \gamma^C\,\Delta t$.
Since we pass the slice mass test (Line \ref{line:mass-test}), the total mass of the slices in $\cS_1$ satisfies
\begin{align*}
    \sum_{S\in \cS_1} \pr[S]
    &\ge \gamma^C\,\Delta t-\left\lceil \Delta t/\Delta\right\rceil \tau_p
    \gtrsim \gamma^C\,\Delta t\;,
\end{align*}
where we used the fact that $\tau_p/\Delta\leq \eps^{C+1}\leq \gamma^C/2$ as chosen in Line \ref{line:init}.

As a result, the total advantage garnered from slices in $\cS_1$ is
$$
\sum_{S\in \cS_1}\left(\E_{(\x,y)\sim \widehat{D}}[h(\x)\neq y,\ \x\in S]-\E_{(\x,y)\sim \widehat{D}}[f(\x)\neq y,\ \x\in S]\right)
\le -(1-2\eta)\gamma^C\tan\theta\;.
$$

Now it suffices to show that the total disagreement a subset of  $\cS \setminus \cS_1$, excluding slices that $\E_{(\x,y)\sim \widehat{D}}[h(\x)\neq y,\ \x\in S]-\E_{(\x,y)\sim \widehat{D}}[f(\x)\neq y,\ \x\in S]$ is negative, is  $O(\gamma^{C'}\tan\theta)$ for some $C'\geq C$ with $C'-C$ sufficiently large. 
Note that $\cS \setminus \cS_1$ decomposes to the slices on the left and the right side of $\cS_1$. 
Let us consider slices on the right side and denote this set of slices by $\cS_2$ (our argument will work also for the slices on the left of $\cS_1$). 

Recall that $s(z_0+\Delta t)=-\Delta t/\tan \theta$.
Hence, as $\w\cdot \x$ increases we have that if $\tan \theta\geq 0$ the bias decreases and $\pr_{\x \sim \cN^d}[f(\x)=1\mid \w\cdot \x=z]$ increases.
Hence, we may assume that $h(\x)=1$, for all $S\in \cS_2$, since we may exclude all slices with greater than $\gamma^C$ disagreement between $h$ and $f$ since they imply  negative $\pr_{(\x,y)\sim \widehat{D}}[h(\x)\neq y, \x\in S]-\pr_{(\x,y)\sim \widehat{D}}[f(\x)\neq y, \x\in S]$.
Therefore, it suffices to bound the total $\pr_{(\x,y)\sim \widehat{D}}[f(\x)=-1, \x\in S]$ over the subset of the slices $S\in \cS_2$ such that $\pr_{(\x,y)\sim \widehat{D}}[f(\x)=-1| \x\in S]\leq \gamma^{C'}$.
We define a subset of slices in $\cS_{2}$, $\widetilde{\cS}_2$ as follows.
Define by $t_{\mathrm{th}}$ a threshold point such that $\x\in \bigcup_{S\in \widetilde{\cS}_2} S$ iff  $s(\w\cdot \x) \leq t_{\mathrm{th}}$ and by $s_{\mathrm{th}}$ a point such that
$\x\in \bigcup_{S\in \widetilde{\cS}_2} S$ iff  $\w\cdot \x \geq s_{\mathrm{th}}$.

We denote the slices  $\widetilde{\cS}_2= \{ [s_{\mathrm{th}}+i\Delta,s_{\mathrm{th}}+(i+1)\Delta]: i \in \Z_+\}$. 
Note that for $\widetilde{\cS}_2$ we have assumed that our slices extend as $\w\cdot \x $ tends to infinity, this is not the case, however we count additional disagreement for tail slices and we are only interested in an upper bound. 
For  $\w\cdot \x=z$, we have  $f(\x)=-1$ if and only if $\vec b \cdot \x\leq (t - z \cos \theta)/\sin \theta= t_{\mathrm{th}}- z/\tan \theta$. Note conditioning on $S_i\eqdef [s+i\Delta,s+(i+1)\Delta]$  a superset of the region $f(\x)=-1$  is the region $\vec b \cdot \x \leq \max_{\x \in S_i} (t_{\mathrm{th}}- \w\cdot \x/\tan \theta)\leq    t_{\mathrm{th}}- i\Delta/\tan \theta$.
Denote by $t_i\eqdef t_{\mathrm{th}}- i\Delta/\tan \theta$.
By Markov's inequality,
\[
\Pr_{\x \sim \widehat{D}_{\x}}[\vec b\cdot \x\le t_i\mid \x\in S_i]
\le \frac{\E_{\x \sim \widehat{D}_{\x}}[(\vec b\cdot \x)^k\mid \x\in S_i]}{(t_i)^k}.
\]
Now since we pass the Orthogonal moment matching test (Line \ref{line:moment-test}), i.e., the  $k\leq l$-th moments in each slice match those of $\cN^d$ for directions orthogonal to $\w$ by \Cref{cl:orth-moment-transfer-poly},
we have $\E_{\x \sim \widehat{D}_{\x}}[(\vec b\cdot \x)^k\mid \x\in S_i]=\E_{\cN^d}[(\vec b\cdot \x)^k\mid \x\in S_i]+\tau_{m}d^k\E_{\cN^d}[(\vec b\cdot \x)^{2k}\mid \x\in S_i]\lesssim \E_{z\sim \cN}[z^k]\lesssim k^{k/2}$. 
Where we used the fact that $(\vec b \cdot \x)^k$ has at most $d^k$ terms of the form $\x^{\alpha}$ for a $\alpha\in \N^{d}$ with $|\alpha|\leq k$ each one with coefficient less than $1$ since $\vec b$ is a unit vector. 
Moreover, note that by the Slice mass test (Line \ref{line:mass-test}) we have that $\pr_{\x\sim \widehat{D}_{\x}}[\x\in S_i]\leq \pr_{\x\sim \cN^d}[\x\in S_i]+\tau_{p} \lesssim \Delta $.

Therefore,
\begin{align*}
  \sum_{i=0}^{\infty} \Pr_{\x \sim \widehat{D}_{\x}}[f(\x)\neq h(\x), \x\in S_i]  \leq \sum_{i=0}^{\infty} \Pr_{\x \sim \widehat{D}_{\x}}[\vec b\cdot \x\le t_i, \x\in S_i]\lesssim k^{k/2}\Delta\sum_{i=0}^{\infty} \frac{1}{(t_{\mathrm{th}}-i\Delta/\tan \theta)^k } \;.
\end{align*}
Let $a \eqdef \Delta/\tan\theta$ and assume that
$t_{\mathrm{th}}= -T$ for some $T>0$. Then, for any $k>1$ even, we have 
\begin{align*}
\sum_{i=0}^{\infty} \frac{1}{(t_{\mathrm{th}}-i\Delta/\tan \theta)^k }
&= \sum_{i=0}^{\infty} \frac{1}{(T+ia)^k}
\leq \frac{1}{T^k}+\int_{0}^{\infty}\frac{dx}{(T+ax)^k}  \\
&= \frac{1}{T^k}+\frac{1}{a}\int_{T}^{\infty}u^{-k}\,du
\;=\; \frac{1}{T^k}+\frac{1}{a(k-1)T^{k-1}}
\;\lesssim\; \frac{1}{a\,T^{k-1}} \;,
\end{align*}
where we used the fact that $1/(T+xa)$ is monotonically decreasing for $x\geq 0$ and the substitution $u=T+ax$.
In the last inequality we used that $t_{\mathrm{th}}$ will be chosen such that $T\geq (k-1)a$ because $a\leq \eps^2$ since $\Delta\leq \eps^{C}$ for a sufficiently large constant $C\geq 0$.
Plugging back $a=\Delta/\tan\theta$ gives
\[
 \sum_{i=0}^{\infty} \Pr_{\x \sim \widehat{D}_{\x}}[f(\x)\neq h(\x), \x\in S_i]
\lesssim \frac{k^{k/2}\tan \theta}{T^{k-1}} \;.
\]
We can choose $t_{\mathrm{th}}=-C'\sqrt{\log (1/\gamma)}$ since then $\pr_{\x\sim \cN^d}[f(\x)\neq i|\mid \w \cdot \x=z]\geq \gamma^{C'}$ for all $i\in \{\pm 1\}$ for any $z\leq s_{\mathrm{th}}$ which would certify and advantage for this region. 
Choosing $T=C'\sqrt{\log (1/\gamma)}$ and $k=C'\log(1/\gamma)$  for a 
sufficiently large constant, we have  $\sqrt{k}/T\leq  1/C'$.  Hence, 
\[
\frac{k^{k/2}\tan\theta}{T^{k-1}} \;=\; \tan\theta\cdot T\Big(\frac{\sqrt{k}}{T}\Big)^k \;\le\; \tan\theta\cdot T\cdot (1/C')^k
\;\le\; \tan\theta\cdot \gamma^{C''}\,,
\]
for a sufficiently large constant $C''\ge C'$, since $k=C'\log(1/\gamma)$ implies $(1/C')^k=\gamma^{\Omega(C'\log C')}$ and $T=\mathrm{polylog}(1/\gamma)$ can be absorbed into the exponent.
Choosing $C''$ to be bigger than $C$, since $\gamma\leq 1-2\eta$ as set in 
Line \ref{line:init-inputs}, completes the proof of \Cref{lem:closer2parallel}.
\end{proof}

\begin{lemma}[Approximately parallel]
\label{lem:near-parallel}
It holds that if $\|\vec v^{\perp \w}\|\leq \eps^2$, then  $\Pr_{(\x,y)\sim \widehat{D}}[h(\x)\neq y]<  \Pr_{(\x,y)\sim \widehat{D}}[f(\x)\neq y]+\eps$.
\end{lemma}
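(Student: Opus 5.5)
The plan is to compare $f$ with its ``parallel projection'' $f'(\x)\eqdef\sign(\sigma\,\w\cdot\x - t')$, where $\sigma=\sign(\vec v\cdot\w)$ and $t'$ is chosen to match $f$. Since $\|\vec v^{\perp\w}\|\le\eps^2$, the vector $\vec v$ is within angle $O(\eps^2)$ of $\sigma\w$. Then $f$ and $f'$ disagree only on points $\x$ with $|\w\cdot\x - t/|\vec v\cdot\w||\lesssim \eps^2|\vec b\cdot\x|$. I split the error difference into two parts.
\[
\Pr_{\widehat D}[h\neq y]-\Pr_{\widehat D}[f\neq y]=\big(\Pr_{\widehat D}[h\neq y]-\Pr_{\widehat D}[f'\neq y]\big)+\big(\Pr_{\widehat D}[f'\neq y]-\Pr_{\widehat D}[f\neq y]\big),
\]
The second term is bounded by $\Pr_{\widehat D}[f\neq f']$.

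\textbf{Step 1 (the $f'$ term).} Both $h$ and $f'$ depend only on $\w\cdot\x$. On every slice $S\in\cS$ they are therefore both constant, so $f'$ either agrees with $h$ on $S$ or equals $-h$ there. (For the single slice containing the threshold of $f'$, I would round $t'$ to a breakpoint; this costs at most $\Pr_{\widehat D}[S]\lesssim\Delta+\tau_p$.) If $f'=-h$ on $S$, I apply the degree-$0$ instance of the non-negativity test on Line~\ref{line:negativity-test}, with the constant polynomial $1$. This gives $\E_{\widehat D}[yh(\x)\mid S]\ge\beta-\eps>0$, so $\Pr_{\widehat D}[h\neq y\mid S]\le\Pr_{\widehat D}[f'\neq y\mid S]$. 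Summing over slices, and using that the tail slices and $S_j$ carry mass at most $\eps/2$, the first term is at most $O(\eps^2)+\eps/2$.

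\textbf{Step 2 (the $f$ versus $f'$ term).} On a slice $S_i$, the set $\{f\neq f'\}$ is contained in the set of points whose fiber threshold $s(\w\cdot\x)$ has absolute value at most $|\vec b\cdot\x|$ scaled appropriately. Concretely, $\{f\neq f'\}\cap S_i$ is nonempty only when the slice lies near $z_0=t/\cos\theta$. On slices at distance $r$ from $z_0$, the disagreement requires $|\vec b\cdot\x|\gtrsim r/\tan\theta\ge r/\eps^2$. I bound each such slice exactly as in the proof of \Cref{lem:closer2parallel}. The tool is Markov's inequality with $k$-th moments of $\vec b\cdot\x$, conditioned on the slice, which are controlled by the orthogonal moment matching test through \Cref{cl:orth-moment-transfer-poly}, together with the slice mass test giving $\Pr_{\widehat D}[S_i]\lesssim\Delta$. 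The slices within distance $O(\eps^2\sqrt{\log(1/\eps)})$ of $z_0$ contribute total mass at most $O(\eps^2\sqrt{\log(1/\eps)})+O(\tau_p/\Delta)$. The remaining slices contribute at most $\Delta\sum_i k^{k/2}(i\Delta/\tan\theta)^{-k}$, and the integral comparison from \Cref{lem:closer2parallel} bounds this by $O(\tan\theta\cdot\poly(\eps))$. Altogether $\Pr_{\widehat D}[f\neq f']\le\eps/4$.

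Adding the two steps gives the claim. The main obstacle is Step 2, because the disagreement strip between $f$ and $f'$ is thin but not aligned with the slices. I would handle it as sketched: handle the $O(1)$ slices around $z_0$ crudely via the slice mass test, and handle the far slices by the moment/Markov tail sum. I also need the edge case $\vec v^{\perp\w}=\vec 0$, where $f=f'$ and Step 1 alone suffices.
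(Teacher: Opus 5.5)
Your argument is correct, but it is organized differently from the paper's. The paper never introduces a parallel comparator. It discards the strip $R=\{\x: |t-(\vec v\cdot\w)(\w\cdot\x)|\le\eps\}$, of $\w$-width at most $4\eps$, paying for it crudely through the slice-mass test. On every remaining slice the fiber threshold satisfies $|s(\w\cdot\x)|\ge 1/\eps$, so a single second-moment Markov bound plus moment matching shows $f$ is $\eps$-close to constant there. The paper then sorts those slices into ones where $f$ nearly agrees with $h$ (cost at most $\eps$) and ones where it nearly disagrees (handled by the degree-$0$ non-negativity test). The result is an $O(\eps)$ bound that is absorbed by rescaling constants.

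Your triangle inequality through the exactly slice-constant $f'$ separates the noise certificate from the geometry. Step 1 applies the degree-$0$ test to a comparator that is literally $\pm h$ on each slice, so it loses nothing. In the paper's version $f$ is only $(1-\eps)$-close to $-h$ on the relevant slices, which costs an extra $O(\eps)$. Step 2 sums distance-dependent Markov bounds, giving an $O(\eps^2)$-type loss rather than $O(\eps)$, so no rescaling is needed. The price is a slice-by-slice summation in place of one uniform threshold.

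One caution on Step 2: the moment test certifies only degrees up to $l$, and $l=O_\beta(1)$ when $\gamma$ is a constant. So you should not mimic \Cref{lem:closer2parallel} with $k\approx\log(1/\eps)$, which your claimed factor $\tan\theta\cdot\poly(\eps)$ would require. Fortunately $k=2$ already suffices. On a slice all of whose fibers satisfy $|\w\cdot\x-z_0|\ge j\Delta$, disagreement forces $|\vec b\cdot\x|\ge j\Delta/|\tan\theta|$, so that slice contributes $\lesssim\Delta\,(\tan\theta/(j\Delta))^2$. The far slices therefore sum to $\lesssim\tan^2\theta/\Delta\lesssim\eps^2$, since $|\tan\theta|\lesssim\eps^2=\Delta$. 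The $O(1)$ slices around $z_0$ carry mass $O(\Delta+\tau_p)$. In particular, the $\sqrt{\log(1/\eps)}$ buffer around $z_0$ is unnecessary.
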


\begin{proof}%
First, we show that for the standard normal distribution for all slices except those very close to the decision boundary of $f$, the classifier $f$ either agrees with $h$ on almost all points in the slice, or disagrees with $h$ throughout the slice. Note that if $\vec v^{\perp \w}=\vec 0$, this holds exactly for all slices.
Assume that $\vec v^{\perp \w}\neq \vec 0$. 
Consider the bias of $f$ conditioned on $\w\cdot \x=z$. If $|t-(\vec v\cdot \w)z|\ge \eps$, then
$$
|s(z)|
=\frac{|t-(\vec v\cdot \w)z|}{\|\vec v^{\perp \w}\|}
\ge \frac{\eps}{\|\vec v^{\perp \w}\|}
\ge \frac{1}{\eps}\;.
$$
Fix a slice $S\in \mathcal{S}$ that does not intersect the region $R\eqdef\{\,\x : |t-(\vec v\cdot \w)(\w\cdot \x)|\le \eps\,\}$ and without loss of generality assume that $s(\w\cdot \x)\geq 1/\eps$ for all $\x\in S$ (same argument holds if $s(\w\cdot \x )\leq -1/\eps$ for all $\x \in S$). 
Let $\vec b\eqdef \vec v^{\perp \w}/\|\vec v^{\perp \w}\|$ and define
$t_{\min} \eqdef \min_{\x\in S} s(\w\cdot \x)$.
Define
$f'(\x)\eqdef \sign(\vec b\cdot \x^{\perp \w}-t_{\min})$.
Then for every $\x\in S$, if $f(\x)=1$ we have $\vec b\cdot \x^{\perp \w}\ge s(\w\cdot \x)\ge t_{\min}$, and hence $f'(\x)=1$.
Therefore $\{\x\in S: f(\x)=1\}\subseteq \{\x\in S: f'(\x)=1\}$, and so
$\Pr_{\x\sim \widehat{D}_{\x}}[f(\x)=1\mid \x\in S]\le \Pr_{\x\sim \widehat{D}_{\x}}[f'(\x)=1\mid \x\in S]$. 
Now since we pass the moment-matching test by Markov's inequality we have that for all $1\leq k\leq l$
\begin{align*}
    \pr_{\x\sim \widehat{D}_{\x}}[\vec b\cdot  \x\geq t_{\min}\mid \x \in S] \leq \E_{\x\sim \widehat{D}_{\x}}[(\vec b\cdot  \x)^k \mid \x\in S]/t_{\min}^k \leq \eps^k(\E_{\x\sim \cN^d}[(\vec b\cdot  \x)^k \mid \x\in S] + \eps)\;,
\end{align*}
where we used \Cref{cl:orth-moment-transfer-poly}, the fact that $\E_{x\sim \cN}[x^{k}]\leq k^{k/2}$ and that our accuracy $\tau_{m}\leq \eps^C/d^{Cl}$ for a sufficiently large constant $C$.
Which for $k=2$ and $\eps$ less than a sufficiently small constant is less than $\eps$.
Therefore, since $h$ is constant within every $S\in \mathcal{S}$ we have shown that for all slices that do not intersect the region $R\eqdef\{\,\x : |t-(\vec v\cdot \w)(\w\cdot \x)|\le \eps\,\}$ either 
$\pr_{\x\sim \widehat{D}_{\x}}[h(\x)\neq f(\x)\mid \x \in S]\leq \eps$ or $\pr_{\x\sim \widehat{D}_{\x}}[h(\x)\neq f(\x)\mid \x\in S]\geq 1-\eps$.

We denote: by $\mathcal{S}_1$ the set of slices that  intersect $R$,
by $\mathcal{S}_2$ the set of slices that do not intersect $R$ and $\pr_{\x\sim \widehat{D}_{\x}}[h(\x)\neq f(\x)\mid \x \in S]\leq \eps$, 
and by $\mathcal{S}_3$ the set of slices that do not intersect $R$ and $\pr_{\x\sim \widehat{D}_{\x}}[h(\x)\neq f(\x)\mid \x \in S]\geq 1-\eps$.

First, we show that the total mass of all slices in $\mathcal{S}_1$ is small. 
Note that the region $R$ forms interval in the direction $\w$ and that length at most $2\eps/|\vec v\cdot \w|\leq 4\eps$ along the $\w$-direction, since $|\vec v \cdot \w|\geq \sqrt{1-\|\vec v^{\perp \w}\|^2}\geq 1/2$.
Moreover, note that $|\mathcal{S}_1|\leq \lceil 4\eps/\Delta\rceil$. 
Therefore, since we pass the slice-mass test (Line \ref{line:mass-test}) we have
\begin{align*}
\pr_{\x\sim \widehat{D}_{\x}}[\bigcup_{S\in \cS_{1}} S] \leq 4\eps+\tau_{p} \lceil 4\eps/\Delta\rceil +2\Delta\lesssim \eps. 
\end{align*}
Hence, $\sum_{S\in S_1}( \p_{(\x,y)\sim \widehat{D}}[h(\x)\neq y,\x\in S]-
\p_{(\x,y)\sim \widehat{D}}[f(\x)\neq y,\x\in S])\lesssim \eps$.

Second, note that for every $S\in {\cS}_2$ we have that 
$\p_{(\x,y)\sim \widehat{D}}[h(\x)\neq y,\x\in S]-
\p_{(\x,y)\sim \widehat{D}}[f(\x)\neq y,\x\in S]\leq \p_{(\x,y)\sim \widehat{D}}[h(\x)\neq f(\x),\x\in S]\leq \eps$.

Third, we show that in each slice in $\mathcal{S}_3$ the non-negativity test helps us certify an advantage of choosing  $h$ v.s. $f$. 
Fix a slice $S\in \mathcal{S}_3$.
Since $S$ passes the degree-$0$ instantiation of the non-negativity test, we have
$$
\E_{(\x,y)\sim \widehat{D}}\big[\,y h(\x)-(1-2\eta)+\eps \ \bigm|\ \x\in S\,\big]\ \ge\ -\eps\;.
$$
Rearranging gives $\pr_{(\x,y)\sim \widehat{D}}[y=h(\x)\mid \x\in S]\ge \pr_{(\x,y)\sim \widehat{D}}[y\neq h(\x)\mid \x\in S] +(1-2\eta)-2\eps$.

As a result, for all $S\in \cS_3$ we have that 
$\pr_{(\x,y)\sim \widehat{D}}[y=h(\x)\mid \x\in S]\ge \pr_{(\x,y)\sim \widehat{D}}[y= f(\x)\mid \x\in S] +(1-2\eta)-2\eps$.
This holds since in our initialization step (Line \ref{line:init-inputs}) 
we set $\eps$ sufficiently small such that $(1-2\eta)-2\eps\geq 0$, 
and therefore $h$ is better than $f$ for all  $S\in \cS_3$.

Finally, combining the above we have that 
\begin{align*}
&\p_{(\x,y)\sim \widehat{D}}[h(\x)\neq y]-
\p_{(\x,y)\sim \widehat{D}}[f(\x)\neq y]\lesssim \eps\;.
\end{align*}
Now, if we choose the constants in the algorithm sufficiently large so that $\eps$ is smaller (multiplicatively) by a constant satisfying the above inequality, then we have
$\p_{(\x,y)\sim \widehat{D}}[h(\x)\neq y]-
\p_{(\x,y)\sim \widehat{D}}[f(\x)\neq y]\leq \eps$,
which concludes the proof of \Cref{lem:near-parallel}.
\end{proof}
Finally, we are ready to prove the near-optimality of $h$ with respect to $D$ with constant probability.
For a distribution $Q$ over $\R^d\times\{\pm1\}$, denote by $\opt_{\gamma}(Q)\eqdef \min_{f\in \cH_{d,\gamma}} \pr_{(\x,y)\sim Q}[f(\x)\neq y]$.
By a standard uniform convergence bound for halfspaces (which have VC dimension $d+1$) since $N\geq C(d+1)/\eps^2$ for a sufficiently large constant $C>0$, with constant probability
(e.g.\ at least $2/3$) we have simultaneously for all halfspaces $g$ that
$|\err_{D}(g)-\err_{\widehat{D}}(g)|\le \eps/20$.
In particular, $\opt_{\gamma}(D)\ge \opt_{\gamma}(\widehat{D})-\eps/20$ and
$\err_D(h)\le \err_{\widehat{D}}(h)+\eps/20$.
Moreover, note that we have already shown that upon acceptance  $\err_{\widehat{D}}(h)\le \opt_{\gamma}(\widehat{D})+9\epsilon/10$.
Denote the event that generalization fails, i.e., there exists a halfspace $g$ such that $|\err_{D}(g)-\err_{\widehat{D}}(g)|> \eps/20$
by $G$.
Denote by $A$ the event that \Cref{alg:testable-massart-gamma} accepts and by $B$ the event that 
$\err_{D}(h)> \opt_{\gamma}(D)+\epsilon$. 
Note that conditional $A$ and $\text{not } G$ we have that 
$$\err_{D}(h)\leq \err_{\widehat{D}}(h)+\eps/20\leq \opt_{\gamma}(\widehat{D})+9\eps/10+\eps/20\leq \opt_{\gamma}(D)+\eps\;,$$
thus $B$ can not happen.
Therefore,  by the triangle inequality, we obtain 
\begin{align*}
 \pr[A,B]\leq \pr[G]+\pr[A,B,\text{not }   G]=\pr[G]\leq 1/3\;,    
\end{align*}
which concludes the proof of \Cref{prop:soundness-gamma}.
\end{proof}

\section{Multiplicative Approximation Sandwiching Polynomials} \label{sec:sandwiching-result}

In this section, we prove the polynomial sandwiching result that is used in our proof of soundness.
In essence, we show that polynomials of degree $O(t^6)$ suffice to sandwich a halfspace
with threshold at most $t$ (in absolute value) up to a constant multiplicative error.

The intuition behind our proof is as follows.
In one dimension, a halfspace is simply a step function, which can be viewed as the
integral of a Dirac $\delta$ function. This suggests the following strategy:
first approximate the $\delta$ function by a smooth bump function; then raise this bump
function to a high power to sharpen it and make it more closely resemble a $\delta$
function; finally, integrate this approximation to obtain an approximation to the step function.

Consequently for our bump function approximation we exploit a small modification to Chebyshev polynomials of odd degree already behaves
like bump functions, so we use them as the building blocks for our approximation.

\begin{figure}[h]
  \centering

\setlength{\abovecaptionskip}{0pt}
\setlength{\belowcaptionskip}{0pt}

  \begin{minipage}[t]{0.49\linewidth}
    \centering
    \includegraphics[width=\linewidth]{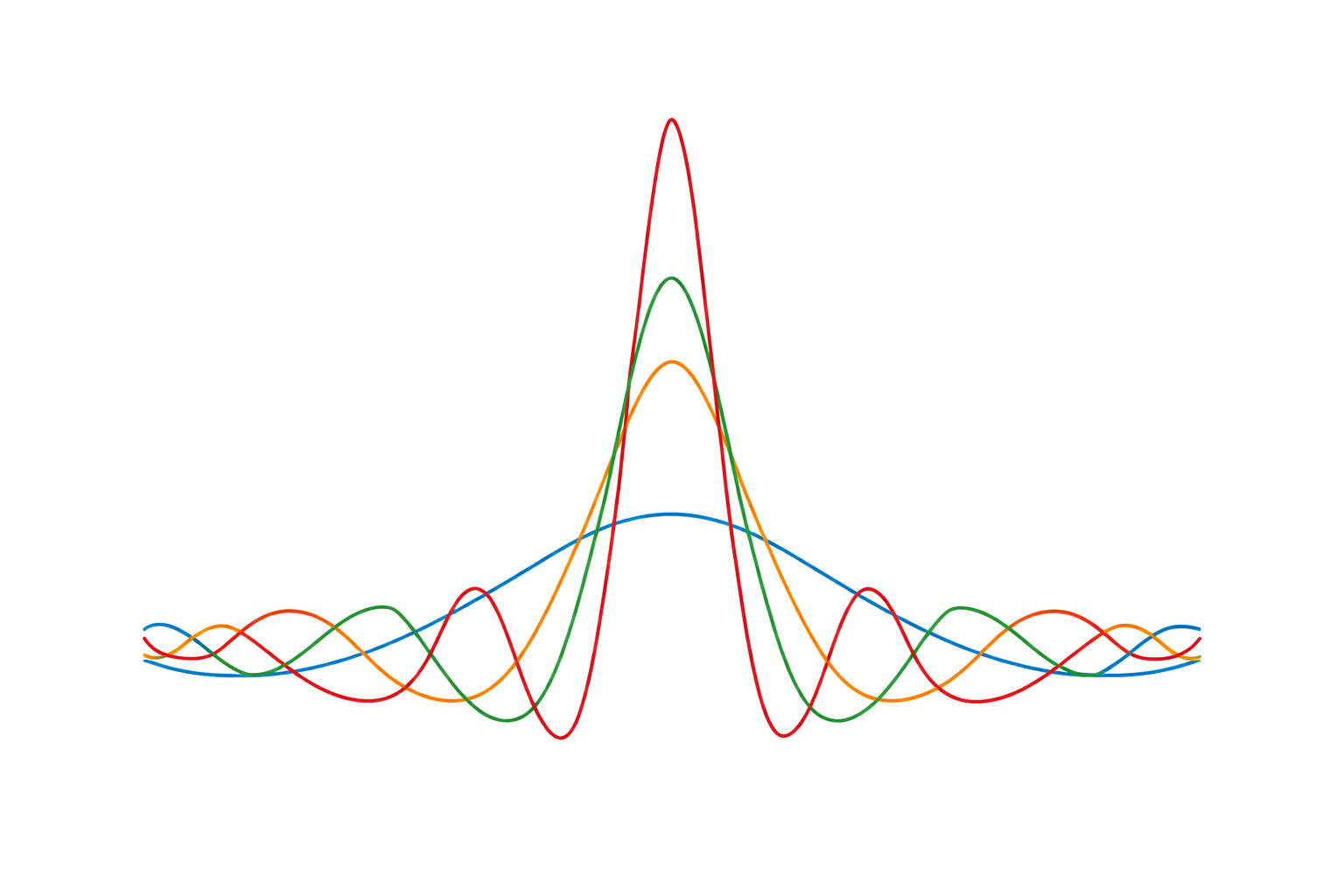}
  \end{minipage}\hfill
  \begin{minipage}[t]{0.49\linewidth}
    \centering
    \includegraphics[width=\linewidth]{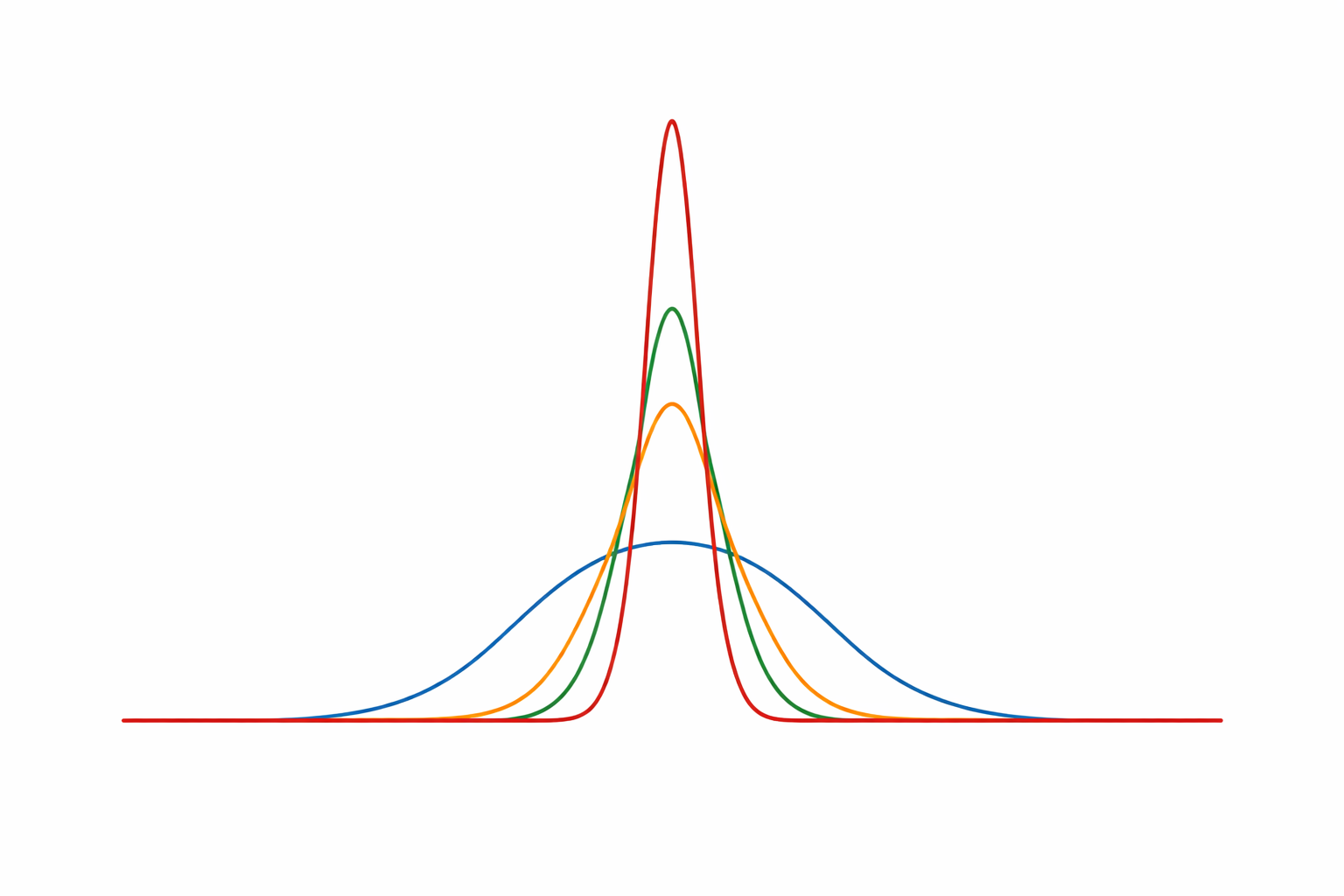}
  \end{minipage}
  \vspace{-1em}
  \includegraphics[width=0.49\linewidth]{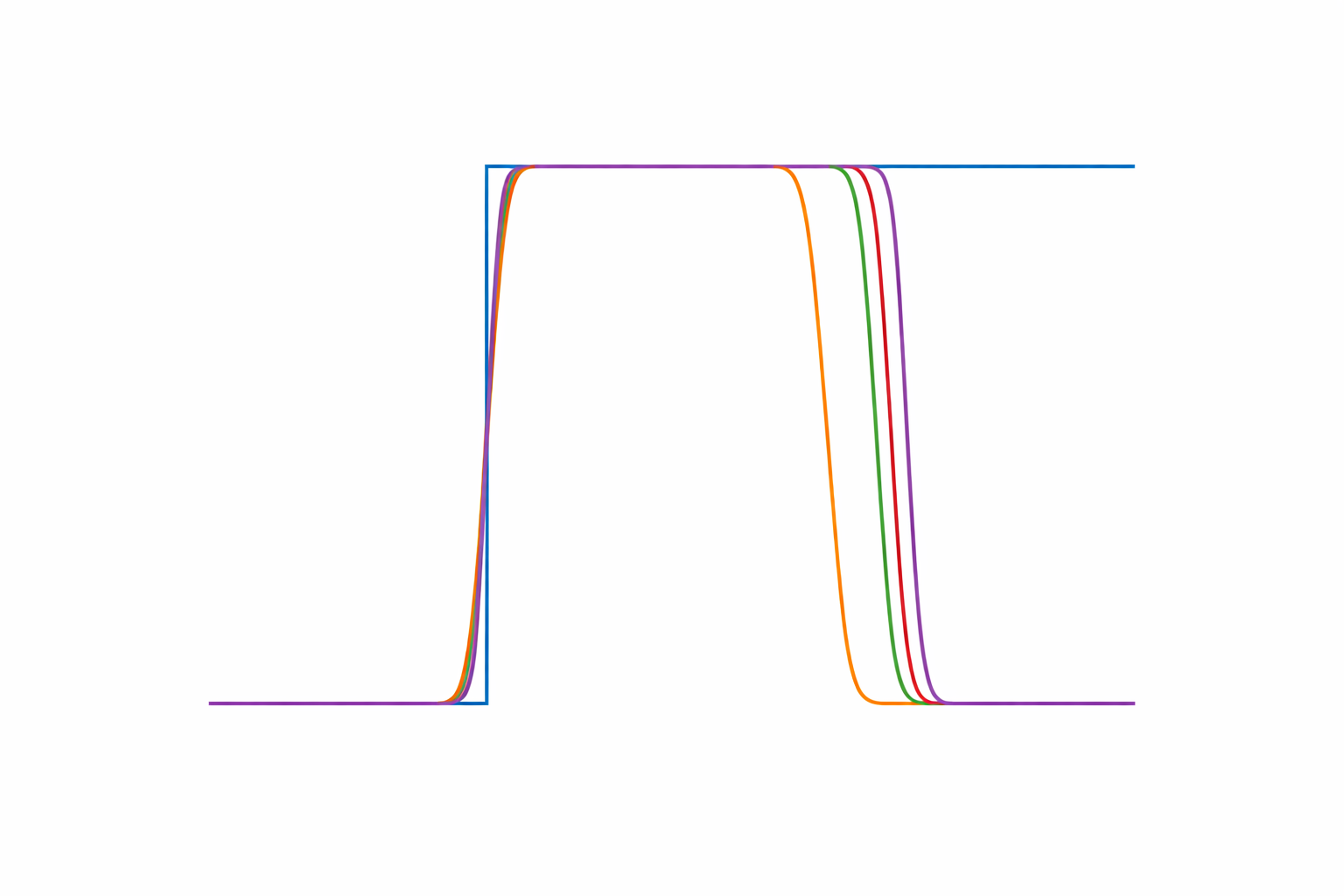}
\vspace{-0.6em}
\caption{This figure shows our approximation procedure. \textbf{Top left:} $\dfrac{T_m(x)}{mx}$; we see that it is a polynomial with a bump-function shape. \textbf{Top right:} $f(x)=\left(\dfrac{T_m(x)}{mx}\right)^k$ (\Cref{eq:f-definition}); as the power $k$ increases, $f$ becomes increasingly concentrated and approximates $\delta$. \textbf{Bottom:} the integral $p$ of $f$ over a sliding window (\Cref{eq:p-definition}), which approximates a step function.
The sliding-window length is chosen according to the desired accuracy; consequently, our approximation is forced to drop to $0$ outside the window. In all examples, the plotted curves grow polynomially to infinity beyond the figure, but this effect is controlled because the Gaussian tails dominate.
}

  \label{fig:three_images}
\end{figure}

\MultSandwichSign*

\begin{proof}
First, we may assume without loss of generality that \(t \ge 0\). If \(t < 0\), let \(s = -t > 0\) and note that \(\Ind(x \ge t) = \Ind(x \ge -s) = 1 - \Ind(-x \ge s)\). Thus, given sandwiching polynomials \(p_-, p_+\) for \(\Ind(x \ge s)\), i.e., \(p_-(x) \le \Ind(x \ge s) \le p_+(x)\) for all \(x \in \mathbb{R}\), define \(q_-(x) = 1 - p_+(-x)\) and \(q_+(x) = 1 - p_-(-x)\); then for all \(x \in \mathbb{R}\) we have \(q_-(x) \le 1 - \Ind(-x \ge s) = \Ind(x \ge t) \le q_+(x)\). Moreover, by symmetry of the standard normal for ${x \sim \mathcal{N}(0,1)}$,  we have \(\E[q_+(x) - q_-(x)] = \E[p_+(x) - p_-(x)]\) and \( \Pr[x \ge t] = \Pr[x \le s] = 1 - \Pr[x \ge s] \ge \Pr[x \ge s] \). Hence any guarantee \(\E[p_+(x) - p_-(x)] \le \alpha \E[\Ind(x \ge s)]\) implies \(\E[q_+(x) - q_-(x)] \le \alpha \E[\Ind(x \ge t)]\). Therefore it suffices to prove the theorem for \(t \ge 0\).

Let $T_m$ denote the $m$'th Chebyshev polynomial.
Fix an odd integer $m\ge 1$  and an even integer
$k\ge 2$ (to be chosen later).  Define
\begin{equation}
f(x) \;\eqdef\; 
\begin{cases}
\Big(\dfrac{T_m(x)}{m x}\Big)^k, & x\neq 0,\\[6pt]
1, & x=0\;.
\end{cases}    \label{eq:f-definition}
\end{equation}

Note that $f$ is a polynomial of degree at most $(m-1)k$, since for $m$ odd $T_m$ is a polynomial with only odd coefficients, \Cref{fact:Chebychev_properties}.

We show the following standard properties of $f$ that will aid us in proving the sandwiching result. 
These properties essentially show that $f$ looks like a bump function.  

\begin{claim}[Properties of $f$]
\label{clm:f-properties}
There exist absolute constants $c_0,c_1>0$ such that:
\begin{enumerate}[label=(\roman*)]
\item For all $|x|\le 1$, 
      \(
      0\le f(x)\le 1.
      \)
\item For all $|x|\le c_0/(mk)$, 
      \(
      \tfrac12 \le f(x)\le 1.
      \)
\item For all $c_1/m\le |x|\le 1$, 
      \(
      0\le f(x)\le 2^{-k}.
      \)
\item For all $|x|\ge 1$,
      \(
      |f(x)|\le (2|x|)^{k(m-1)}.
      \)
\end{enumerate}
\end{claim}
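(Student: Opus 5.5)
The whole argument rests on the trigonometric representation $T_m(\cos\theta)=\cos(m\theta)$ on $[-1,1]$ and on the explicit formula $T_m(x)=\cosh(m\,\mathrm{arccosh}\,x)$ for $|x|\ge1$. The claim concerns $g(x)\eqdef T_m(x)/(mx)$ for odd $m$, and $f=g^k$ with $k$ even. Since $k$ is even, $f\ge 0$ everywhere, which gives the lower bounds in (i)–(iii) for free. Since $m$ is odd, $T_m$ is odd, so $g$ is even and it suffices to treat $x\ge 0$. The plan is to bound $|g|$ in each of the four regimes and then raise the bound to the $k$-th power.

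For (i), I will use the classical inequality $|T_m(x)|\le m|x|$ for $|x|\le 1$ when $m$ is odd. This is the statement $|\sin(m\varphi)|\le m|\sin\varphi|$, obtained by writing $x=\sin\varphi$ and using $T_m(\sin\varphi)=\pm\sin(m\varphi)$ for odd $m$. The inequality itself follows by induction on $m$ from $\sin((m+1)\varphi)=\sin(m\varphi)\cos\varphi+\cos(m\varphi)\sin\varphi$. This gives $|g|\le 1$ and hence $0\le f\le 1$.

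For (iii), I will use the bound $|T_m|\le 1$ on $[-1,1]$, which gives $|g(x)|\le 1/(m|x|)\le 1/c_1$. Taking $c_1=2$ yields $|g|\le 1/2$, so $f\le 2^{-k}$.

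For (ii), I need a lower bound on $g$ near $0$. With $x=\sin\varphi$ we have $g=\pm\sin(m\varphi)/(m\sin\varphi)$, and the sign is $+$ at $\varphi=0$ because $g(0)=1$. Since $\sin(m\varphi)/(m\varphi)\ge 1-(m\varphi)^2/6$ and $\varphi/\sin\varphi\ge1$, we get $g\ge 1-O((m\varphi)^2)\ge 1-O(1/k^2)$ whenever $|x|\le c_0/(mk)$, which forces $|\varphi|\lesssim c_0/(mk)$. Alternatively, $g'$ is bounded by the Markov brothers' inequality. In either case, $f=g^k\ge (1-O(c_0^2/k^2))^k\ge 1/2$ once $c_0$ is a small enough constant. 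The weaker window $|x|\le c_0/m$ together with $g\ge 1-O(1/k)$ would also suffice. I expect this step to be the only delicate one, because one must track the constants carefully enough to ensure $(1-\delta)^k\ge 1/2$.

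For (iv), with $|x|\ge1$ I write $x=\cosh u$. Then $|T_m(x)|=\cosh(mu)\le e^{mu}\le (2x)^m$, since $e^u=x+\sqrt{x^2-1}\le 2x$. Hence $|g|\le (2x)^m/(mx)\le 2(2x)^{m-1}/m\le(2x)^{m-1}$ for $m\ge 2$. For $m=1$ we have $g\equiv1$ and the bound is trivial. Raising to the $k$-th power gives $|f|\le(2|x|)^{k(m-1)}$.
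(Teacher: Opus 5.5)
Your proposal is correct and follows the paper's proof essentially step for step: the substitution $x=\sin\varphi$ with $|\sin(m\varphi)|\le m|\sin\varphi|$ for (i), $|T_m|\le 1$ with $c_1=2$ for (iii), $\sin u/u\ge 1-u^2/6$ together with $\varphi/\sin\varphi\ge 1$ for (ii), and $|T_m(x)|\le(2|x|)^m$ for (iv). In (iv) you also supply the division by $m|x|$ that the paper glosses over.

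There is one harmless slip: $g(0)=T_m'(0)/m=(-1)^{(m-1)/2}$, not $1$, so in (ii) you are really lower-bounding $|g|$, which suffices because $k$ is even. You should also drop your two asides, since neither gives the bound you need. Markov's inequality only yields $|g'|\le (m-1)^2$, which is far too weak on a window of width $c_0/(mk)$. On the window $|x|\le c_0/m$ you only get $|g|\ge 1-O(c_0^2)$, not $1-O(1/k)$.
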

We refer the reader to the \Cref{sec:omited-sandwiching} for the proof of \Cref{clm:f-properties}.

Now define the normalized bump
\begin{equation}\label{eq:g-definition}
g(x) \;\eqdef\; {f(x)}/{I_m}, \quad \text{ for }    
I_m \;\eqdef\; \int_{-1}^1 f(x)\,dx.
\end{equation}

By lower bounding the normalization constant $I_m$ we can show the following properties.
\begin{claim}[Properties of $g$]
\label{clm:g-properties}
It holds that
\begin{enumerate}[label=(\alph*)]
\item For $c_1/m\le |x|\le 1$,
      \(
      0\le g(x)\lesssim  mk\,2^{-k}.
      \)
\item For $|x|\ge 1$,
      \(
      |g(x)|\le mk\,|2x|^{(m-1)k}.
      \)
\end{enumerate}
\end{claim}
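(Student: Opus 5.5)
The key step is a lower bound on the normalization constant, specifically $I_m \gtrsim 1/(mk)$. Both items then follow by dividing the corresponding bounds of \Cref{clm:f-properties} by $I_m$.

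\emph{Lower bound on $I_m$.} By \Cref{clm:f-properties}(i), $f\ge 0$ on $[-1,1]$, so we may restrict the integral to any subinterval. By \Cref{clm:f-properties}(ii), $f(x)\ge 1/2$ for $|x|\le c_0/(mk)$. We assume $c_0\le 1$ (shrinking $c_0$ if needed), which keeps this window inside $[-1,1]$. Then
\[
I_m=\int_{-1}^1 f(x)\,dx \;\ge\; \int_{-c_0/(mk)}^{c_0/(mk)} \tfrac12\,dx \;=\; \frac{c_0}{mk},
\]
so $1/I_m\le mk/c_0\lesssim mk$. We also note $I_m\le 2$ from item (i), though only the lower bound is needed.

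\emph{Item (a).} For $c_1/m\le|x|\le 1$, \Cref{clm:f-properties}(iii) gives $0\le f(x)\le 2^{-k}$. Dividing by $I_m>0$ preserves nonnegativity and yields $g(x)\le 2^{-k}/I_m\lesssim mk\,2^{-k}$.

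\emph{Item (b).} For $|x|\ge 1$, \Cref{clm:f-properties}(iv) gives $|f(x)|\le (2|x|)^{k(m-1)}$, hence $|g(x)|\le (mk/c_0)(2|x|)^{(m-1)k}$. The statement has constant $1$ rather than $\lesssim$, so we absorb $1/c_0$ in one of two ways. If the proof of \Cref{clm:f-properties}(ii) allows $c_0=1$, we are done directly. Otherwise we sharpen the lower bound on $I_m$ using that $f\ge 0$ on all of $[-1,1]$ and $f\ge 1/2$ on a window of width of order $1/(mk)$, which gives $I_m\ge 1/(mk)$ for appropriately chosen constants. Failing both, we read (b) up to an absolute constant, which is harmless for the downstream uses.

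The only real obstacle is the constant bookkeeping in item (b). The mathematical content is entirely the window lower bound on $I_m$.
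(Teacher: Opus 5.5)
Your proposal is correct and is essentially the paper's proof: lower-bound $I_m\ge c_0/(mk)$ using $f\ge 0$ on $[-1,1]$ and $f\ge 1/2$ on the window $|x|\le c_0/(mk)$, then divide items (iii) and (iv) of the $f$-claim by $I_m$. Your concern about the constant in (b) is valid, since the paper's choice $c_0=1/2$ only gives $1/I_m\le 2mk$. Your first fix works: the paper's proof of (ii) goes through unchanged with $c_0=1$ (then $my\le 2/k\le 1$ and $f\ge 1-\tfrac{2}{3k}\ge 2/3$), so $I_m\ge 1/(mk)$ and (b) holds with constant exactly $1$.
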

We refer the reader to  \Cref{sec:omited-sandwiching} for the proof of \Cref{clm:g-properties}.

Fix parameters $w,B>0$ (to be chosen later) and define
\begin{equation}\label{eq:p-definition}
p(x) \;\eqdef\; \int_{(x-t-w)/B}^{(x-t)/B} g(y)\,dy.    
\end{equation}

First note that $g$ is a polynomial of degree $(m-1)k$ since $f$ is a polynomial of degree $(m-1)k$.
Therefore, integrating and evaluating at a linear argument makes $p$ a polynomial of the degree at most $(m-1)k+1$.

Assume that $B\geq C (w+t)$, for a sufficiently large constant $C>0$ and $k\ge 2\log_2(m)$.
Let $\Delta \;\eqdef\; {c_1 B}/{m}$.  
We analyze the behavior of $p$ at the different regimes.
\begin{claim}[Properties of $p$]
\label{clm:p-approx-step}
There exists  a sufficiently large constant $C>0$ such that:
\begin{enumerate}[label=(\roman*)]
\item If $|x|\ge B/2$, then 
      $|p(x)| \leq (C|x|/B)^{mk}$.
\item For $-B/2<x<t-\Delta$ we have that $p(x)\leq C 2^{-k/4}$.
\item For $t+\Delta<x<t+w-\Delta$ we have that $p(x)\geq 1- C2^{-k/4}$.
\item For $|x|\leq B/2$ we have that  $p(x)\in [0,1]$.
\end{enumerate}
\end{claim}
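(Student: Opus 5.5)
The claim to prove is \Cref{clm:p-approx-step}. Write $p(x)=\int_{L(x)}^{U(x)} g(y)\,dy$ with $U(x)=(x-t)/B$ and $L(x)=(x-t-w)/B$, so $U-L=w/B\le 1/C$. Every item reduces to locating this short window relative to three regions: the core $|y|\le c_1/m$, where $g\ge 0$ carries essentially all of its unit mass; the shoulder $c_1/m\le|y|\le1$, where $0\le g\lesssim mk2^{-k}$ by \Cref{clm:g-properties}(a); and the exterior $|y|\ge1$, where \Cref{clm:g-properties}(b) applies.

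\emph{Item (iv)} ($|x|\le B/2$). Since $t,w\le B/C$, both endpoints satisfy $|U|,|L|\le 1/2+2/C<1$, so the window lies inside $[-1,1]$. There $g=f/I_m\ge0$ by \Cref{clm:f-properties}(i), so $p\ge0$. Also $p\le\int_{-1}^1 g=1$ by the normalization in \eqref{eq:g-definition}.

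\emph{Item (ii)} ($-B/2<x<t-\Delta$). Here $U<-\Delta/B=-c_1/m$ and $L>-1$, so the whole window lies in the left shoulder. Hence
\[
p\le (w/B)\,mk\,2^{-k}\lesssim mk\,2^{-k}.
\]
With $k\ge 2\log_2 m$ we have $m\le 2^{k/2}$, and $k\le 2^{k/4}$ up to constants, so $mk2^{-k}\lesssim 2^{-k/4}$.

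\emph{Item (iii)} ($t+\Delta<x<t+w-\Delta$). Here $U>c_1/m$ and $L<-c_1/m$, so the window contains the core. Note also $x\le t+w\le B/2$, so the window stays inside $[-1,1]$. Therefore
\[
p\ge 1-\int_{[-1,1]\setminus[-c_1/m,c_1/m]} g \ge 1-2mk\,2^{-k}\ge 1-C2^{-k/4}.
\]

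\emph{Item (i)} ($|x|\ge B/2$). Use $|p(x)|\le (w/B)\max_{y\in[L,U]}|g(y)|$. All window points satisfy $|y|\le |x|/B+(t+w)/B\le 2|x|/B+1$, say. On $[-1,1]$ we have $|g|\lesssim mk$ by \Cref{clm:f-properties}(i) together with the lower bound $I_m\gtrsim 1/(mk)$ implicit in \Cref{clm:g-properties}. Outside $[-1,1]$, \Cref{clm:g-properties}(b) gives
\[
|g(y)|\le mk\,(2|y|)^{(m-1)k}\le mk\,(C'|x|/B)^{(m-1)k}.
\]
Since $|x|/B\ge1/2$, we can absorb $mk\le 2^{k}$ (because $m\le 2^{k/2}$) into the base by enlarging $C$, using $(C|x|/B)^{k}\ge (C/2)^k$. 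This yields $|p(x)|\le (C|x|/B)^{mk}$.

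The main care point is this last absorption step, together with checking that the window really stays within $[-1,1]$ in (iii) and (iv). These require $B\ge C(w+t)$ with $C$ large; the remaining steps are routine.
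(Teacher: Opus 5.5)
Your proposal is correct and takes essentially the same route as the paper. Both arguments locate the window $[(x-t-w)/B,(x-t)/B]$ relative to the core, shoulder and exterior of $g$. Items (iii) and (iv) then follow from $g\ge 0$ and $\int_{-1}^1 g=1$, items (ii) and (iii) use the shoulder bound $mk2^{-k}\lesssim 2^{-k/4}$, and item (i) absorbs the $mk$ prefactor into the base via $k\ge 2\log_2 m$.

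The differences are cosmetic, and both are valid:
\begin{itemize}
\item In (i) you bound $g$ by $O(mk)$ on $[-1,1]$, instead of the paper's shoulder bound for $|y|\ge 1/4$.
\item In (iii) you get that the window lies in $[-1,1]$ from $|x|\le B/2$, rather than checking the endpoints directly.
\end{itemize}
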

We refer the reader to  \Cref{sec:omited-sandwiching} for the proof of \Cref{clm:p-approx-step}.
Define
\begin{equation}
p_-(x) \;\eqdef\; p(x-\Delta) 
 - \Big(\frac{C x}{B}\Big)^{mk} - C2^{-k/4},\;    \label{eq:pminus}
\end{equation}
for a sufficiently large constant $C$.
By construction $p_-$ is a polynomial of degree at most $mk$.

\begin{lemma}\label{lem:p-minus-below}
Assume that $m$ is greater than a sufficiently large constant. 
Then, it holds that $p_-(x)\le h(x)$ for all $x\in \R$.
\end{lemma}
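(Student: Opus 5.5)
We must show $p_-(x)=p(x-\Delta)-(Cx/B)^{mk}-C2^{-k/4}\le h(x)=\Ind(x\ge t)$ for every $x\in\R$. The plan is a case analysis over the regimes of \Cref{clm:p-approx-step}, applied to the shifted argument $x-\Delta$. Since $mk$ is even, $(Cx/B)^{mk}\ge 0$ and $C2^{-k/4}\ge 0$, so both correction terms only help. The work is to show that $p(x-\Delta)$ is at most $1$ where $h=1$, at most about $0$ where $h=0$, and is dominated by the polynomial correction far out.

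\textbf{Far region, $|x-\Delta|\ge B/2$.} Since $\Delta=c_1B/m\le B/4$ for $m$ large, this forces $|x|\ge B/4$. Claim~(i) gives $|p(x-\Delta)|\le (C|x-\Delta|/B)^{mk}\le (2C|x|/B)^{mk}$. Enlarging the constant in the definition of $p_-$ (say to $2C$) gives $p(x-\Delta)-(C'x/B)^{mk}\le 0\le h(x)$. We must make sure the same constant $C$ in \eqref{eq:pminus} is chosen large enough for all cases at once, which is harmless.

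\textbf{Middle region, $|x-\Delta|<B/2$.} First take $x\ge t$, where $h(x)=1$. Claim~(iv) gives $p(x-\Delta)\le 1$, so $p_-(x)\le 1$. Next take $x<t$, where $h(x)=0$. Then $x-\Delta<t-\Delta$, and since $x-\Delta>-B/2$, claim~(ii) applies and gives $p(x-\Delta)\le C2^{-k/4}$. Hence $p_-(x)\le C2^{-k/4}-C2^{-k/4}-(Cx/B)^{mk}\le 0$. The shift by $\Delta$ exists exactly to push the transition zone $[t-\Delta,t+\Delta]$ of $p$ into the region where $h=1$.

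\textbf{Main obstacle.} The delicate point is the interface between regimes. Claim~(ii) covers only $x-\Delta>-B/2$, and claim~(i) takes over for $x-\Delta\le -B/2$, so the two pieces join with no gap. We must still check that the constant in (i) can be absorbed after replacing $x-\Delta$ by $x$. This uses $|x-\Delta|\le 2|x|$ whenever $|x-\Delta|\ge B/2$ and $\Delta\le B/4$, which is why $m$ must exceed a sufficiently large constant (so that $c_1/m\le 1/4$). We must also confirm that the constants $C$ in (i) and (ii) and in $p_-$ are consistent; we take the constant in $p_-$ to be the maximum of those needed.
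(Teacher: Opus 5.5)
Your proposal is correct and follows essentially the same argument as the paper. Both use Claim~(i) in the far region, with $\Delta=c_1B/m\le B/4$ giving $|x-\Delta|\lesssim |x|$ and the evenness of $mk$ absorbing the growth. Both use Claims~(ii) and~(iv) on $|x-\Delta|<B/2$ according to whether $x<t$ or $x\ge t$. You order the case split by region first rather than by the sign of $x-t$, and putting the point $|x-\Delta|=B/2$ in the far region sidesteps the strict inequality $-B/2<z$ in Claim~(ii), which the paper's $\le B/2$ split glosses over.
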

\begin{proof}%
We partition our proof in two the cases, whether or not $x< t$.\\
\noindent\textbf{Case 1: $x<t$.}
Here $h(x)=0$, so it suffices to show that $p_-(x)\le 0$, i.e.
\[
p(x-\Delta)\;\le\;\Big(\frac{C x}{B}\Big)^{mk}+C2^{-k/4}.
\]

We split into two subcases depending on whether $|x-\Delta|\le B/2$ or not.
First assume  $|x-\Delta|\le B/2$.
Since $x<t$, we have $x-\Delta < t-\Delta$.
By  \Cref{clm:p-approx-step} applied to $z=x-\Delta$ we have that
$p(x-\Delta)\le C_0\,2^{-k/4},$
for some absolute constant $C_0>0$. 
Choosing $C\ge C_0$ in the definition of $p_-$, we obtain
\[
p(x-\Delta)\;\le\;C\,2^{-k/4}\;\le\;\Big(\frac{C x}{B}\Big)^{mk}+C2^{-k/4},
\]
and therefore $p_-(x)\le 0$ in this subcase.

Now assume that $|x-\Delta|>B/2$.
By \Cref{clm:p-approx-step}, we have
\[
|p(x-\Delta)| \;\le\; \Big(\frac{C_0 |x-\Delta|}{B}\Big)^{mk}.
\] 
Now since
$\Delta = O(B/m)$ and $m$ greater than a sufficiently large constant, we have that $x \geq B/4$. 
Therefore, 
we have that 
$|x-\Delta|\leq |x|+\Delta\lesssim |x|+B\lesssim |x|$.
Therefore, we have that there exists $C_1>0$ such that 
\[
\Big(\frac{C_0 |x-\Delta|}{B}\Big)^{mk}
\;\le\;
\Big(\frac{C_1 |x|}{B}\Big)^{mk}
\quad\text{for all }x\in\R.
\]
Thus
$
p(x-\Delta)
\;\le\;
\Big(\frac{C_1 |x|}{B}\Big)^{mk}.
$
Taking $C\ge C_1$ in the definition of $p_-$, we get
\[
p(x-\Delta)
\;\le\;
\Big(\frac{C |x|}{B}\Big)^{mk}
\;\le\;
\Big(\frac{C x}{B}\Big)^{mk} + C2^{-k/4},
\]
and again $p_-(x)\le 0$.

\smallskip

\noindent\textbf{Case 2: $x\ge t$.}
Here $h(x)=1$, and we want to show that $p_-(x)\le 1$.

We again split into two subcases. First $|x-\Delta|\le B/2$.
By  \Cref{clm:p-approx-step}, for all $z$ with $|z|\le B/2$ we have
$0\le p(z)\le 1$. Taking $z=x-\Delta$, we get
\[
p(x-\Delta)\le 1
\Rightarrow
p_-(x)\le p(x-\Delta)\le 1 = h(x).
\]

Now for $|x-\Delta|>B/2$. As in Case 1, \Cref{clm:p-approx-step} yields
\[
|p(x-\Delta)| \;\le\; \Big(\frac{C_0 |x-\Delta|}{B}\Big)^{mk}
\;\le\;
\Big(\frac{C_1 |x|}{B}\Big)^{mk}
\]
for suitable constants $C_0,C_1>0$ and all $x$. Taking $C\ge C_1$, we obtain
\[
p_-(x)
\;\le\;
\Big(\frac{C_1 |x|}{B}\Big)^{mk}
\;-\;
\Big(\frac{C x}{B}\Big)^{mk}
\;-\;C2^{-k/4}
\;\le\;
0
\;\le\;
1.
\]
Combining Case~1 and Case~2, we have shown that
\[
p_-(x)\;\le\;h(x)\qquad\text{for all }x\in\R,
\]
which completes the proof of \Cref{lem:p-minus-below}.
\end{proof}

We now specify concrete values for the parameters $w,B,m,k$ used in the
construction above.
Let $C_w,C_k,C_m,C_B>0$ be sufficiently large absolute constants (to be
fixed within the analysis below) and set
\begin{equation}\label{eq:parameters}
\begin{aligned}
  w &\eqdef C_w(\sqrt{\log(1/\alpha)}+t),\\
  k &\eqdef 2\left\lceil C_k\bigl(t^2 + \log(1/\alpha)\bigr)\right\rceil
      \quad\text{(even)},\\
  m &\eqdef 2\left\lceil C_m\,\frac{(t+1)^2\bigl(t^2+\log(1/\alpha)\bigr)}{\alpha^2}\right\rceil+1
      \quad\text{(odd)},\\
  B &\eqdef C_B \sqrt{mk}.
\end{aligned}
\end{equation}
Notice that for $C_k$ large enough, we have $k\ge 2\log_2 m$ and $B\geq C_B \sqrt{k}\gtrsim C_B(t+ w)$ (where we used the inequality $\sqrt{a+b}\geq (\sqrt{a}+\sqrt{b})/\sqrt{2}$ ).
Hence the assumptions of
\Cref{clm:p-approx-step} are met.
Moreover
\[
\deg(p_-)\;\le mk
=O \left(\frac{(t+1)^6\log^2(1/\alpha)}{\alpha^2}\right)\;,
\]
which matches the degree bound in the statement.
Next we bound the error of $p_-$ with respect to $h$. 
\begin{lemma}\label{lem:error}
With the above choice of parameters, it holds that
\[
  \E_{x\sim\cN(0,1)}\big[h(x)-p_-(x)\big]
  \lesssim \alpha\,\E_{x\sim\cN(0,1)}[h(x)].
\]
\end{lemma}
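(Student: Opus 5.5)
We bound $\E[h-p_-]$ by splitting the real line into regions and using \Cref{clm:p-approx-step} on each. Write
\[
h(x)-p_-(x) = \bigl(h(x)-p(x-\Delta)\bigr) + \Big(\frac{Cx}{B}\Big)^{mk} + C2^{-k/4}.
\]
First, the constant term contributes $C2^{-k/4}$. With $k\ge 2C_k(t^2+\log(1/\alpha))$ and $C_k$ large, this is at most $\alpha e^{-t^2}/\mathrm{poly}(t+1)$. By the Mills ratio, $\Pr[x\ge t]\gtrsim e^{-t^2/2}/(t+1)$, so this term is $\lesssim \alpha\E[h]$.

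Second, the polynomial tail term $\E[(Cx/B)^{mk}]\le (C/B)^{mk}(mk)^{mk/2}$ uses $\E[x^{mk}]\le (mk)^{mk/2}$. Since $B=C_B\sqrt{mk}$, this equals $(C/C_B)^{mk}$. This is exponentially small in $mk\gg t^2+\log(1/\alpha)$, hence negligible relative to $\alpha\E[h]$.

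The main term is $\E[h(x)-p(x-\Delta)]$, which we split by the location of $x$. On $\{x<t\}$, where $h=0$, we only need a lower bound on $p$. For $|x-\Delta|\le B/2$, \Cref{clm:p-approx-step}(iv) gives $p\ge 0$, so the contribution is nonpositive. For $|x-\Delta|>B/2$, item (i) gives $|p(x-\Delta)|\le (C|x|/B)^{mk}$, whose Gaussian expectation is again negligible by the same moment computation.

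On $\{x\ge t\}$ we split into three ranges.
\begin{itemize}
\item \emph{Transition window} $t\le x\le t+2\Delta$. Here we only use $h-p\le 1$ (from (iv)), so the contribution is at most $\Pr[t\le x\le t+2\Delta]\le 2\Delta\phi(t)$. Since $\Delta=c_1C_B\sqrt{k/m}$, the choice of $m$ gives $\Delta\lesssim \alpha/(t+1)$. Then $2\Delta\phi(t)\lesssim \alpha\,\phi(t)/(t+1)\lesssim \alpha\Pr[x\ge t]$. This is exactly where the $(t+1)^2/\alpha^2$ factor in $m$ is needed, and I expect it to be the main obstacle. Specifically, I must verify that $\sqrt{k/m}\cdot(t+1)\lesssim\alpha$ with the stated parameter choices, and that the Mills-ratio comparison $\phi(t)/(t+1)\lesssim \Pr[x\ge t]$ holds uniformly for $t\ge0$.
\item \emph{Bulk} $t+2\Delta\le x\le t+w$. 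By (iii) applied at $x-\Delta$, $p(x-\Delta)\ge 1-C2^{-k/4}$, so the contribution is $\lesssim 2^{-k/4}\Pr[x\ge t]$. This is fine because $k\gtrsim\log(1/\alpha)$.
\item \emph{Far region} $x>t+w$, where $p$ may drop to $0$ or go negative. We bound $h-p\le 1+|p|$. Combining (iv) and (i) with the moment bound, the contribution is at most $\Pr[x>t+w]$ plus a negligible term. Since $w=C_w(\sqrt{\log(1/\alpha)}+t)$, we have $\Pr[x>t+w]\le e^{-(t+w)^2/2}\le \alpha e^{-t^2}\cdot e^{-t^2}$, which is $\lesssim\alpha\Pr[x\ge t]$ for $C_w$ large.
\end{itemize}

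Summing all contributions gives $\E[h-p_-]\lesssim\alpha\,\E[h]$.
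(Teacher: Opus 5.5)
Your proof is correct and follows the paper's argument. You use the same three-term split of $h-p_-$ and the same region-by-region bound on $\E[h(x)-p(x-\Delta)]$ via \Cref{clm:p-approx-step}; indexing the regions by $x$ rather than $z=x-\Delta$ is only a shift. The two checks you flagged also go through as in the paper: $\Delta(t+1)\lesssim c_1C_B\sqrt{C_k/C_m}\,\alpha$ follows from the parameter choices, and $\phi(t)\lesssim(t+1)\Pr[x\ge t]$ follows from $\Pr[x\ge t]\ge t\phi(t)/(1+t^2)$ for $t\ge 1$ and is trivial for $t\le 1$.
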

\begin{proof}%
First note that 
\begin{equation*}
  \E\big[h(x)-p_-(x)\big]
  \;=\;
  \E\big[h(x)-p(x-\Delta)\big]
  \;+\;
  \E\Big[\Big(\frac{C x}{B}\Big)^{mk}\Big]
  \;+\;
  C2^{-k/4}.
\end{equation*}
We will bound the last two terms directly, and then control
$\E[h(x)-p(x-\Delta)]$ by partitioning according to the regions in
\Cref{clm:p-approx-step}.

We may write
$\E\Big[\Big(\frac{C x}{B}\Big)^{mk}\Big]=
\Big(\frac{C}{B}\Big)^{mk}\E\big[|x|^{mk}\big]$.
For the absolute moment of a standard Gaussian we have the known 
bound (for even $j$)
$\E\big[|x|^j\big]\;\le\;j^{j/2}.$
Taking $j=mk$ gives
$\E\big[|x|^{mk}\big]
\;\le\;
(mk)^{mk/2}$.
hence
\[
\E\Big[\Big(\frac{C x}{B}\Big)^{mk}\Big]
\;\le\;
\Big(\frac{C}{B}\Big)^{mk}( mk)^{mk/2}
=
\Big(\frac{C\sqrt{mk}}{B}\Big)^{mk}\;.
\]
Setting $B\geq  2C\sqrt{mk}$ suffices to have 
\[
\E\Big[\Big(\frac{C x}{B}\Big)^{mk}\Big]
\le
2^{-mk}
\le
2^{-k}
\leq
\alpha\,\E[h(x)].
\]
The last inequality holds by our choice of
$k=\Theta\big(t^2+\log(1/\alpha)\big)$ and the standard lower bound
$\E[h(x)]=\Pr[x\ge t]\ge e^{-t^2/2}/(t+1)$.
Similarly, since $k\geq C_k( t^2+\log(1/\alpha))$, we may choose $C_k$ 
large such that that
$2^{-k}
\le \alpha\E[h(x)]$.

It remains to bound
$\E\big[h(x)-p(x-\Delta)\big]$.
Let us write $z\eqdef x-\Delta$ so that $p$ is evaluated at $z$.
We bound the contribution of $h(x)-p(z)$ from each region of $z$ defined in \Cref{clm:p-approx-step}.

\noindent\textbf{Region A: $|z|\ge B/2$.}
On this region, \Cref{clm:p-approx-step} gives
$
|p(z)|\;\le\;\Big(\frac{C|z|}{B}\Big)^{mk}.
$
Moreover $0\le h(x)\le 1$, so
\[
|h(x)-p(z)|
\;\le\;
1 + |p(z)|
\;\le\;
1+\Big(\frac{C|z|}{B}\Big)^{mk}.
\]
Hence
\begin{align*}
\E\big[|h(x)-p(z)|\Ind\{|z|\ge B/2\}\big]
&\le
\Pr[|z|\ge B/2]
+
\E\Big[\Big(\frac{C|z|}{B}\Big)^{mk}\Ind\{|z|\ge B/2\}\Big]\\
&\leq 
\Pr[|x|\ge B/4]
+
\E\Big[\Big(\frac{2C|x|}{B}\Big)^{mk}\Ind\{|x|\ge B/4\}\Big]\;,
\end{align*}
where in the last inequality we used a similar argument as in the proof of \Cref{lem:p-minus-below} Case 1, to argue that if $|z|\geq B/2$ then $\Delta\leq B/4$ and $|x|\geq B/4$.

Now the second term is $O(\alpha \E[h(x)])$ by the same analysis as for the term of $\E\Big[\Big(\frac{C x}{B}\Big)^{mk}\Big]$ and our choice of $B$.

For the first term we have that since $B\geq C\sqrt{k}$ for a sufficiently large constant $C$, we have that $\Pr[|x|\ge B/4]\leq e^{-k}$ hence by our previous analysis we have that $\Pr[|x|\ge B/4]\leq \alpha \E[h(x)]$.

Therefore, in total we obtain\begin{equation*}\label{eq:RA}
\E\big[|h(x)-p(z)|\Ind\{|z|\ge B/2\}\big]
\lesssim \alpha \E[h(x)].
\end{equation*}

\medskip
\noindent\textbf{Region B: $-B/2<z<t-\Delta$.}
In this region we are strictly to the left of the threshold in $z$,
hence $x=z+\Delta<t$ and $h(x)=0$.  By
\Cref{clm:p-approx-step}, for $-B/2<z<t-\Delta$ we have
$0\;\le\;p(z)\;\le\;C\,2^{-k/4}$.
Thus
\[
|h(x)-p(z)|\;\le\;C\,2^{-k/4}\;.
\]
Therefore, it follows that
\[
\E\big[(h(x)-p(z))\Ind\{-B/2<z<t-\Delta\}\big]
\;\le\;
C\,2^{-k/4}\,\Pr[-B/2<z<t-\Delta]
\;\le\;
C\,2^{-k/4}.
\]
Note that as previously because of the value of $k$ we have 
\begin{equation*}
\E\big[(h(x)-p(z))\Ind\{-B/2<z<t-\Delta\}\big]
\leq \alpha \E[h(x)].
\end{equation*}

\medskip
\noindent\textbf{Region C : $t-\Delta\le z\le t+\Delta$.}
Note that $x=z+\Delta$ lies in $[t,\,t+2\Delta]$ by \Cref{clm:p-approx-step} hence in this region  we have $p(z)\in[0,1]$.
Thus
$
|h(x)-p(z)|\;\le\;1.
$
Therefore
\[
\E\big[|h(x)-p(z)|\Ind\{t-\Delta\le z\le t+\Delta\}\big]
\;\le\;
\Pr[t-\Delta\le z\le t+\Delta]
=
\Pr[t\le x\le t+2\Delta].
\]
Using the standard bound $\Pr[a\le x\le a+b]=\int_{a}^{a+b}\phi(x)dx \leq b\,\phi(a)$, we get
\[
\Pr[t\le x\le t+2\Delta]
\;\le\;
2\Delta\,\phi(t).
\]
Now we prove that $\phi(t)\lesssim (t+1)\Pr[x\geq t]$ for all $t\geq 0$.
Note that for $t\leq 1$ we have $\Pr[x\geq t] \gtrsim \phi(t)$ because both of them are constants. 
Also for $t\geq 1$ we have that $\phi(t)\leq (t+1/t)\Pr[x\geq t]\leq (t+1) \Pr[x\geq t]$ from the fact that 
$\Pr[x\ge t]\geq t \phi(t)/(1+t^2)$.
Hence in total $\phi(t)\lesssim (t+1)\Pr[x\geq t]$ for all $t\geq 0$.

Therefore, 
\[
\Pr[t\le x\le t+2\Delta]
\lesssim 
\Delta\Bigl(t+1\Bigr)\Pr[x\ge t]\;.
\]
Now 
$$\Delta(t+1)\lesssim \frac{\sqrt{k}(t+1)}{\sqrt{m}}$$
Hence, if we choose $m\geq C(t+1)^2(t^2+\log(1/\alpha))/\alpha^2$ for a sufficiently large constant $C>0$ we have that
$\Delta(t+1)\lesssim \alpha$. Therefore,
\begin{equation*}
\E\big[|h(x)-p(z)|\Ind\{t-\Delta\le z\le t+\Delta\}\big]
\lesssim \alpha \E[h(x)].
\end{equation*}

\noindent\textbf{Region D: $t+\Delta<z<t+w-\Delta$.}
In this region we are to the right of the threshold in $z$ but still
inside the ``bump window'' of width $w$.  Note that
$x=z+\Delta\in(t+2\Delta,\,t+w)$, so $h(x)=1$.
By \Cref{clm:p-approx-step},
$p(z)\;\ge\;1-C\,2^{-k/4}$ for $ t+\Delta<z<t+w-\Delta,$
and hence
\[
0\;\le\;h(x)-p(z)\;\le\;C\,2^{-k/4}.
\]
This implies that
\[
\E\big[(h(x)-p(z))\Ind\{t+\Delta<z<t+w-\Delta\}\big]
\lesssim 2^{-k/4}\;.
\]
Again, because of our choice of $k$, we have
\begin{equation*}
\E\big[(h(x)-p(z))\Ind\{t+\Delta<z<t+w-\Delta\}\big]
\lesssim \alpha\E[h(x)].
\end{equation*}

\noindent\textbf{Region E: $t+w-\Delta\le z\le B/2$.}
Here $z$ is to the right of the bump interval, but still with
$|z|\le B/2$, so by \Cref{clm:p-approx-step} we have
$0\le p(z)\le 1$.  Moreover $x=z+\Delta\ge t+w\geq t$, so $h(x)=1$.
Thus
\[
0\;\le\;h(x)-p(z)\;\le\;1\;,
\]
and therefore
\[
\E\big[(h(x)-p(z))\Ind\{t+w-\Delta\le z\le B/2\}\big]
\;\le\;
\Pr[x\ge t+w].
\]
By our choice of $w\geq C_w\sqrt{\log(1/\alpha)}$ and a standard Gaussian
tail ratio bound, we can ensure
\[
\frac{\Pr[x\ge t+w]}{\Pr[x\ge t]}
\leq \alpha,
\]
which yields
\begin{equation*}
\E\big[(h(x)-p(z))\Ind\{t+w-\Delta\le z\le B/2\}\big]
\;\le\; \alpha\E[h(x)]\;.
\end{equation*}
Combining the above bounds completes the proof of \Cref{lem:error}.
\end{proof}

Ending we show that $p_+$ can be constructed by a simple transformation to $p_-$.
\begin{lemma}\label{cl:pplus}
Define
$p_+(x) \;\eqdef\; 1 - p_-\big(2t - x\big)$. It holds that 
\begin{enumerate}[label=(\roman*)]
    \item $p_+(x)\geq h(x)$ for all $x\in \R$.
    \item $\E\big[p_+(x)-h(x)\big] \lesssim \alpha\E[h(x)]$.
\end{enumerate}
\end{lemma}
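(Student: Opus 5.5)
Both parts follow from the reflection $x\mapsto 2t-x$, which fixes the threshold $t$. Combined with $1-\cdot$, it turns a lower sandwich of $h$ into an upper one. The plan is to transfer the pointwise bound of \Cref{lem:p-minus-below} and the error bound of \Cref{lem:error} through this reflection.

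For (i), note that $1-h(2t-x)=1-\Ind(2t-x\ge t)=1-\Ind(x\le t)=\Ind(x>t)$. By \Cref{lem:p-minus-below} applied at the point $2t-x$, we get $p_-(2t-x)\le h(2t-x)$. Hence $p_+(x)=1-p_-(2t-x)\ge 1-h(2t-x)=\Ind(x>t)$. This already equals $h(x)$ except at the single point $x=t$. To cover $x=t$, I will use that $p_-$ is in fact strictly negative at $t$ (Case 1 of \Cref{lem:p-minus-below} gives $p_-(x)\le 0$ for $x<t$, and continuity gives $p_-(t)\le 0$), so $p_+(t)\ge 1=h(t)$. Alternatively, $p_+$ is a polynomial and hence continuous, so $p_+(t)=\lim_{x\downarrow t}p_+(x)\ge 1$. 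Either way $p_+\ge h$ everywhere.

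For (ii), write $\E[p_+(x)-h(x)]=\E[1-h(x)-p_-(2t-x)]$, using that $h$ and $\Ind(x>t)$ agree almost surely. Substitute $u=2t-x$, so $1-h(x)=h(u)$ a.s. and the quantity becomes $\E_{x\sim\cN}[h(2t-x)-p_-(2t-x)]$. The main obstacle is that $u=2t-x$ is distributed as $\cN(2t,1)$, not $\cN(0,1)$, so \Cref{lem:error} does not apply verbatim.

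I would handle this by rerunning the region-by-region analysis of \Cref{lem:error} for the shifted law. The pointwise error $e(u)\eqdef h(u)-p_-(u)\ge 0$ is small everywhere except on the window $[t,t+2\Delta]$ and far tails. In the far tails it is controlled by $(C|u|/B)^{mk}$ and indicator mass $\Pr[|u|\ge B/4]$. Under $\cN(2t,1)$ these are still at most $2^{-\Omega(k)}$, since $B\gg t$ and moments of $|u|^{mk}$ only increase by a factor $(1+2t/\sqrt{mk})^{mk}\le e^{O(t\sqrt{mk})}$, which is absorbed by $B=C_B\sqrt{mk}$ with $C_B$ large. The uniform $2^{-k/4}$ pieces are bounded by $2^{-k/4}\le\alpha\Pr[x\ge t]$ by the choice of $k$.

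The window term is $\Pr_{u\sim\cN(2t,1)}[t\le u\le t+2\Delta]\le 2\Delta\,\phi(t)$ (the density of $2t-x$ at $u\in[t,t+2\Delta]$ is $\phi(u-2t)\le\phi(t-2\Delta)\lesssim\phi(t)$), which is $\lesssim\alpha\Pr[x\ge t]$ exactly as in Region C.

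One region needs extra care: where $h=1$ but $p_-$ only satisfies $p\in[0,1]$, namely $u\ge t+w$. Its mass is $\Pr[x\le t-w]$. This is not small relative to $\Pr[x\ge t]$ when $t$ is large, so here I would use the finer information from \Cref{clm:p-approx-step}(iii): $p\ge 1-C2^{-k/4}$ on $(t+\Delta,t+w-\Delta)$. The remaining mass, at $u\ge t+w$, corresponds to $x\le t-w\le -C_w\sqrt{\log(1/\alpha)}$. I expect this step to force a modification, either enlarging $w$ so that $t+w\ge B/2$ or exploiting the shift argument. Summing the regions then gives $\lesssim\alpha\Pr[x\ge t]=\alpha\E[h]$.
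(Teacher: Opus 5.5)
\textbf{Verdict: genuine gap.} Your argument for (i) is fine. For (ii), your reduction to $\E_{u\sim\cN(2t,1)}[h(u)-p_-(u)]$ followed by a region-by-region rerun of \Cref{lem:error} is exactly the paper's route. Your handling of the $(Cu/B)^{mk}$ tail, the uniform $2^{-k/4}$ pieces, and the window $[t,t+2\Delta]$ is also fine. The gap is the region $u\ge t+w$ (the paper's Region E), which you leave unresolved.

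You bound $t-w\le -C_w\sqrt{\log(1/\alpha)}$, which drops the $C_w t$ term in $w=C_w(\sqrt{\log(1/\alpha)}+t)$. With that weak bound the mass is only of order $\alpha^{C_w^2/2}$. That cannot be compared with $\Pr[x\ge t]\approx e^{-t^2/2}/(t+1)$ for large $t$, so you conclude the construction must change. It need not. Since $C_w\ge 2$, you have $w-t\ge t+s$ with $s\eqdef C_w\sqrt{\log(1/\alpha)}$. Using $\phi(v+s)=\phi(v)e^{-vs-s^2/2}\le \phi(v)e^{-ts-s^2/2}$ for $v\ge t\ge 0$, you get
\[
\Pr_{u\sim\cN(2t,1)}[u\ge t+w]=\Pr_{x\sim\cN(0,1)}[x\ge w-t]\le \Pr[x\ge t+s]\le e^{-ts-s^2/2}\,\Pr[x\ge t]\le \alpha\,\Pr[x\ge t].
\]
So the claim that this mass ``is not small relative to $\Pr[x\ge t]$ when $t$ is large'' is false for the paper's choice of $w$. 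Your appeal to \Cref{clm:p-approx-step}(iii) does not help here: it controls Region D, $(t+\Delta,t+w-\Delta)$, not $u\ge t+w$.

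The modification you float, enlarging $w$ until $t+w\ge B/2$, would actually break the proof. It violates the standing assumption $B\ge C(w+t)$ behind \Cref{clm:p-approx-step}(i), (ii) and (iv), since the integration window $I_x$ would no longer lie in $[-1,1]$. The other regions would then lose their bounds. As written, your closing sentence ``summing the regions then gives $\lesssim\alpha\E[h]$'' asserts the conclusion without the one estimate that is not a verbatim copy of \Cref{lem:error}. The missing step is the tail-ratio bound above, which needs no change to the parameters.
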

The proof is similar to that of \Cref{lem:error}; we refer the reader to \Cref{sec:omited-sandwiching} for details.

Putting everything together, we obtain
\[
\E\big[p_+(x)-p_-(x)\big]
=
\E\big[p_+(x)-h(x)\big] + \E\big[h(x)-p_-(x)\big]
\;\le\;
\alpha\,\E[h(x)]\,
\]
completing the proof.
\end{proof}

\subsection{Omitted Proofs and Facts for the Sandwiching Result}\label{sec:omited-sandwiching}
In this subsection we collect auxiliary facts and provide the proofs that were omitted from the proof of \Cref{lem:structural}. 
These details are included here for completeness and to keep the presentation in \Cref{sec:sandwiching-result} streamlined.
\begin{fact}[Properties of Chebyshev polynomials (e.g., \cite{mason2002chebyshev})]\label{fact:Chebychev_properties}
Let $T_m$ be the $m$-th Chebyshev polynomial of the first kind. Then:
\begin{enumerate}[label=(\roman*)]
    \item 
    For every $x \in [-1,1]$ there exists $\theta \in [0,\pi]$ with $x = \cos\theta$, and
    \[
      T_m(x) \;=\; T_m(\cos\theta) \;=\; \cos(m\theta).
    \]
    In particular, for all $x\in[-1,1]$,
    \[
      |T_m(x)| \le 1.
    \]

    \item  
    For all $x\in\mathbb{R}$,
    \[
      T_m(-x) = (-1)^m T_m(x).
    \]
    Hence if $m$ is odd, $T_m$ is an odd polynomial and contains only odd-degree monomials.

    \item 
    If $m$ is odd, then for every $\theta\in\mathbb{R}$,
    \[
      T_m(\sin\theta) 
      \;=\; T_m\big(\cos(\tfrac{\pi}{2}-\theta)\big)
      \;=\; \cos\big(m(\tfrac{\pi}{2}-\theta)\big)
      \;=\; (-1)^{(m-1)/2}\,\sin(m\theta),
    \]
    so in particular
    \[
      \big|T_m(\sin\theta)\big| = \big|\sin(m\theta)\big|.
    \]

    \item 
    For $|x|\ge 1$ we have the explicit formula
    \[
      T_m(x)
      = \frac{1}{2}\Big( \big(x+\sqrt{x^2-1}\big)^m + \big(x-\sqrt{x^2-1}\big)^m \Big),
    \]
    and therefore
    \[
      |T_m(x)|
      \;\le\; \bigl(|x|+\sqrt{x^2-1}\bigr)^m
      \;\le\; (2|x|)^m.
    \]

    \item 
    If $m$ is odd, then
    \[
      T_m'(0) \;=\; (-1)^{(m-1)/2}\,m.
    \]
\end{enumerate}
\end{fact}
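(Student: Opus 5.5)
The plan is to derive every item from the trigonometric definition $T_m(\cos\theta)=\cos(m\theta)$ together with the three-term recurrence $T_0=1$, $T_1=x$, $T_{m+1}=2xT_m-T_{m-1}$. The recurrence shows $T_m$ is a polynomial of degree $m$. Since two polynomials agreeing on $[-1,1]$ agree everywhere, identities checked on $[-1,1]$ extend to all of $\R$.

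\textbf{Items (i) and (ii).} For (i), any $x\in[-1,1]$ is $\cos\theta$ with $\theta\in[0,\pi]$. The identity $\cos((m+1)\theta)+\cos((m-1)\theta)=2\cos\theta\cos(m\theta)$ shows by induction that $\cos(m\theta)$ obeys the same recurrence as $T_m$, so $T_m(\cos\theta)=\cos(m\theta)$. Hence $|T_m(x)|\le 1$ on $[-1,1]$. For (ii), replace $\theta$ by $\pi-\theta$: we get $T_m(-\cos\theta)=\cos(m\pi-m\theta)=(-1)^m\cos(m\theta)$. This proves $T_m(-x)=(-1)^mT_m(x)$ on $[-1,1]$, hence everywhere. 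So for odd $m$ the polynomial is odd, and only odd monomials appear.

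\textbf{Items (iii) and (v).} For (iii), write $\sin\theta=\cos(\pi/2-\theta)$ and apply (i); this holds for every real $\theta$, since (i) holds for all real angles by periodicity and evenness of cosine. Expanding, $\cos(m\pi/2-m\theta)=\cos(m\pi/2)\cos(m\theta)+\sin(m\pi/2)\sin(m\theta)$. For odd $m$ we have $\cos(m\pi/2)=0$ and $\sin(m\pi/2)=(-1)^{(m-1)/2}$, which gives the claim and hence $|T_m(\sin\theta)|=|\sin(m\theta)|$. For (v), differentiate $T_m(\cos\theta)=\cos(m\theta)$ to get $-\sin\theta\,T_m'(\cos\theta)=-m\sin(m\theta)$, then evaluate at $\theta=\pi/2$. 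This gives $T_m'(0)=m\sin(m\pi/2)=(-1)^{(m-1)/2}m$.

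\textbf{Item (iv).} For $|x|\ge1$, set $r_\pm=x\pm\sqrt{x^2-1}$; these are real roots of $r^2-2xr+1=0$. Hence $\frac12(r_+^m+r_-^m)$ satisfies the Chebyshev recurrence and matches $T_0,T_1$, so it equals $T_m$. For the bound, $r_+r_-=1$, so one of $|r_\pm|$ is at most $1$ and the other equals $|x|+\sqrt{x^2-1}$. Therefore $|T_m(x)|\le\frac12\bigl((|x|+\sqrt{x^2-1})^m+1\bigr)\le(|x|+\sqrt{x^2-1})^m\le(2|x|)^m$.

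The only mildly delicate point is the sign of $r_\pm$ when $x\le-1$. This is resolved either by the symmetric argument with absolute values above, or by reducing to $x\ge1$ via (ii).
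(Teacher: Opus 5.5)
Your proposal is correct. The paper gives no proof of this fact beyond citing it as a standard textbook result; the only reasoning it displays is the chain $T_m(\sin\theta)=T_m(\cos(\tfrac{\pi}{2}-\theta))=\cos(m(\tfrac{\pi}{2}-\theta))$ in (iii) and the ``therefore'' in (iv), and these are exactly the steps you derive from the recurrence and the trigonometric identity.

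Your observation that $r_+r_-=1$ forces one of $|r_\pm|$ to be at most $1$ and the other to equal $|x|+\sqrt{x^2-1}$ supplies the detail the paper leaves implicit. It is what makes the bound in (iv) valid for $x\le -1$, where $x+\sqrt{x^2-1}$ is negative with modulus at most $1$.
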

\begin{fact}\label{lem:sin-mtheta-bound}
For every integer $m\ge 1$ and every $\theta\in\mathbb{R}$,
\[
|\sin(m\theta)| \;\le\; m\,|\sin\theta|.
\]
\end{fact}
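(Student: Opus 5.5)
The last statement is Fact~\ref{lem:sin-mtheta-bound}: $|\sin(m\theta)|\le m|\sin\theta|$ for every integer $m\ge1$ and every $\theta\in\R$. I will prove it by induction on $m$, using the angle-addition formula.

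\textbf{Base case.} For $m=1$ the inequality is an equality.

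\textbf{Inductive step.} Assume $|\sin(m\theta)|\le m|\sin\theta|$. Expand
\[
\sin((m+1)\theta)=\sin(m\theta)\cos\theta+\cos(m\theta)\sin\theta .
\]
Apply the triangle inequality, together with $|\cos\theta|\le1$ and $|\cos(m\theta)|\le 1$:
\[
|\sin((m+1)\theta)|\le |\sin(m\theta)|+|\sin\theta|\le m|\sin\theta|+|\sin\theta|=(m+1)|\sin\theta| .
\]
This closes the induction.

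\textbf{Alternative route.} One could instead use the identity $\sin(m\theta)=\sin\theta\,U_{m-1}(\cos\theta)$, where $U_{m-1}$ is the Chebyshev polynomial of the second kind, together with the bound $|U_{m-1}|\le m$ on $[-1,1]$. I will use the induction, since it needs nothing beyond elementary trigonometry.

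There is no real obstacle here. The only point to watch is that the bound must hold for all real $\theta$, not just $\theta\in[0,\pi]$. The induction handles this automatically, because it never restricts the range of $\theta$.
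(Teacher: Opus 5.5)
Your proof is correct and is the same as the paper's: induction on $m$, using the angle-addition formula $\sin((m+1)\theta)=\sin(m\theta)\cos\theta+\cos(m\theta)\sin\theta$, the triangle inequality, and $|\cos|\le 1$. Your remark that the induction works for all real $\theta$ without restricting the range is also accurate.
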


\begin{proof}
We proceed by induction on $m$. For $m=1$ the claim is trivial.
Assume $|\sin(m\theta)| \le m|\sin\theta|$ holds for some $m\ge 1$.
Using the angle-addition formula,
\[
\sin((m+1)\theta)
= \sin(m\theta)\cos\theta + \cos(m\theta)\sin\theta,
\]
so
\[
|\sin((m+1)\theta)|
\le |\sin(m\theta)|\,|\cos\theta| + |\cos(m\theta)|\,|\sin\theta|
\le |\sin(m\theta)| + |\sin\theta|
\le (m+1)|\sin\theta|.
\]
The proof follows by induction.
\end{proof}

\begin{claim}[Properties of $f$]
Let $f$ the function defined in \Cref{eq:f-definition}.
There exist absolute constants $c_0,c_1>0$ such that:
\begin{enumerate}[label=(\roman*)]
\item For all $|x|\le 1$, 
      \(
      0\le f(x)\le 1.
      \)
\item For all $|x|\le c_0/(mk)$, 
      \(
      \tfrac12 \le f(x)\le 1.
      \)
\item For all $c_1/m\le |x|\le 1$, 
      \(
      0\le f(x)\le 2^{-k}.
      \)
\item For all $|x|\ge 1$,
      \(
      |f(x)|\le (2|x|)^{k(m-1)}.
      \)
\end{enumerate}
\end{claim}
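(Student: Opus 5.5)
The plan is to work with the substitution $x=\sin\theta$ for $|x|\le 1$, using \Cref{fact:Chebychev_properties}(iii): for odd $m$, $|T_m(\sin\theta)|=|\sin(m\theta)|$. Then $f(x)=\bigl(\sin(m\theta)/(m\sin\theta)\bigr)^k$ up to sign, and since $k$ is even, $f(x)\ge 0$ for all real $x$. The four items then become elementary trigonometric estimates, plus a crude bound for $|x|\ge1$.

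\textbf{Items (i) and (ii).} For (i), \Cref{lem:sin-mtheta-bound} gives $|\sin(m\theta)|\le m|\sin\theta|$, so $0\le f(x)\le 1$ on $[-1,1]$. At $x=0$ we have $f(0)=1$ by definition, which is consistent with $T_m'(0)=\pm m$ from \Cref{fact:Chebychev_properties}(v). For (ii), I will lower bound the ratio $\sin(m\theta)/(m\sin\theta)$ for small $|\theta|$. Using $\sin u\ge u-u^3/6$ and $\sin\theta\le\theta$, one gets $\sin(m\theta)/(m\sin\theta)\ge 1-m^2\theta^2/6$ for $|m\theta|\le 1$. If $|x|\le c_0/(mk)$, then $|\theta|\le 2|x|$, so $m^2\theta^2\le 4c_0^2/k^2$. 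The $k$-th power is therefore at least $(1-c_0^2/k^2)^k\ge 1/2$ for small enough $c_0$. This is actually stronger than needed: the bound $1/2$ would already follow from $|x|\lesssim 1/(m\sqrt k)$.

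\textbf{Item (iii).} This is the step with the most content, though it is still routine. For $c_1/m\le|x|\le1$ we have $|\sin\theta|=|x|\ge c_1/m$. Together with $|\sin(m\theta)|\le 1$ this gives $|\sin(m\theta)/(m\sin\theta)|\le 1/(m|x|)\le 1/c_1\le 1/2$ once $c_1\ge 2$. Raising to the $k$-th power gives $f(x)\le 2^{-k}$, and nonnegativity comes from $k$ being even. I need to check that $c_1/m\le 1$, which holds for $m\ge c_1$; for smaller $m$ the range is empty, or the claim can be absorbed into the constants.

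\textbf{Item (iv).} For $|x|\ge1$, \Cref{fact:Chebychev_properties}(iv) gives $|T_m(x)|\le(2|x|)^m$. Hence $|T_m(x)/(mx)|\le (2|x|)^m/(m|x|)\le 2^m|x|^{m-1}$. This carries an extra factor of $2$ compared with $(2|x|)^{m-1}$, so I will tighten it. Write $T_m(x)=\tfrac12(r^m+r^{-m})$ with $r=|x|+\sqrt{x^2-1}\le 2|x|$, which gives $|T_m(x)|\le r^m\le (2|x|)^{m-1}\cdot 2|x|$. Dividing by $m|x|$ leaves $(2|x|)^{m-1}\cdot 2/m\le(2|x|)^{m-1}$ for $m\ge2$, and for $m=1$ the bound $f\equiv1$ is trivial. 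Raising to the $k$-th power gives $|f(x)|\le(2|x|)^{k(m-1)}$. The main obstacle overall is only this constant bookkeeping in (ii) and (iv); none of the steps is deep.
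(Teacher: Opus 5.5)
Your proposal is correct and follows the paper's proof essentially step for step. For (i) you use the substitution $x=\sin\theta$ together with $|\sin(m\theta)|\le m|\sin\theta|$. For (ii) you use $\sin u\ge u-u^3/6$, $\sin\theta\le\theta$, $|\theta|\le 2|x|$ and Bernoulli's inequality. For (iii) you use $|T_m|\le 1$ with $c_1\ge 2$, and for (iv) the explicit formula for $|x|\ge 1$. Your bookkeeping in (iv) is slightly more careful than the paper's, which goes from $|T_m(x)|\le(2|x|)^m$ straight to the claimed bound without noting the factor $2/m\le 1$ (valid for odd $m\ge 3$, with $m=1$ trivial since then $f\equiv 1$).
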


\begin{proof}
\medskip\noindent\textbf{Proof of (i):} 
Since $k$ is even, $f(x)\ge 0$ for all $x$, so it suffices to show that $f(x)\leq 1$ for all $\abs{x}\leq 1$.

Fix $x\in[-1,1]\setminus\{0\}$. Because $|x|\le 1$, we can write $x = \sin\theta$ for some
$\theta\in[-\pi/2,\pi/2]$. By \Cref{fact:Chebychev_properties}, 
$
T_m(\sin\theta) = (-1)^{(m-1)/2}\sin(m\theta),
$
hence
\[
\left|\frac{T_m(x)}{m x}\right|
= \left|\frac{T_m(\sin\theta)}{m\sin\theta}\right|
= \frac{|\sin(m\theta)|}{m|\sin\theta|}.
\]
Applying \Cref{lem:sin-mtheta-bound}, we obtain
\[
\left|\frac{T_m(x)}{m x}\right|
= \frac{|\sin(m\theta)|}{m|\sin\theta|}
\le 1.
\]
Thus $0\le f(x)\le 1$ for all $x\neq 0$ with $|x|\le 1$.

Finally, since $f$ is continuous (it is a polynomial) we obtain also have that $f(0)\leq 1$.
Which concludes the proof of (i).
\medskip\noindent\textbf{Proof of (ii):}
Since $T_m$ has only odd degree terms $T_m$ is odd, hence $f$ is even. Thus it suffices to prove the inequality for $x\ge0$.

Fix $c_0:=\tfrac12$. For $x\ge 0$ with $x\le c_0/(mk)$, we in particular have
$x\le c_0\le\tfrac12$, so $x\in[0,\tfrac12]$ and we may set
$
x=\sin y,$ $y\eqdef \arcsin x\in[0,\tfrac{\pi}{2}].
$

Note that from \Cref{fact:Chebychev_properties} 
$T_m(\sin y)
= \sigma\,\sin(my),
$ $ \sigma:=(-1)^{(m-1)/2}$.
Since $k$ is even, $\sigma^k=1$ we have
\[
f(x)=\left(\frac{T_m(x)}{m x}\right)^k
    = h(y)^k \qquad h(y):=\frac{\sin(my)}{m\sin y}\;.
\]
We rewrite
\[
h(y)=\frac{\sin(my)}{m\sin y}
    = \frac{\sin(my)}{my}\cdot\frac{y}{\sin y}\geq \frac{\sin(my)}{my}\;,
\]
Since for $y\in(0,\tfrac{\pi}{2})$ we have $\sin y\le y$.

Next for $x\in[0,\tfrac12]$,
$
y=\arcsin x\le 2x.
$
Using $x\le c_0/(mk)$ with $c_0=\tfrac12$, we deduce
\[
0<y\le 2x\le \frac{2c_0}{m k}=\frac{1}{m k},
\qquad\text{so}\qquad
0<my\le \frac1k\le \frac12,
\]
because $k\ge2$.

For $u\in(0,1]$ the Taylor expansion of $\sin u$ gives the standard estimate
$
\sin u\ge u-\frac{u^3}{6},
$
so, for $0<u\le1$,
$
\frac{\sin u}{u}\ge 1-\frac{u^2}{6}.
$
Applying this with $u=my$ (which satisfies $0<my\le\tfrac12$ as above), we obtain
$
\frac{\sin(my)}{my}\ge 1-\frac{(my)^2}{6}.
$
Therefore
\[
h(y)
\ge \frac{\sin(my)}{my}\cdot\frac{y}{\sin y}
\ge 1-\frac{(my)^2}{6}\;.
\]

Finally, we bound $(my)^2$ in terms of $k$. Using $y\le 2x$ and $x\le c_0/(mk)$,
\[
(my)^2 \le m^2(2x)^2
        \le m^2\left(\frac{2c_0}{m k}\right)^2
        = \frac{4c_0^2}{k^2}.
\]
With $c_0=\tfrac12$, we have
$
h(y)\ge 1-\frac{1}{6k^2}.
$

For $t\in[0,1]$ and integer $k\ge1$ we have the elementary inequality
$(1-t)^k \ge 1-k t$, 
this follows by expanding $(1-t)^k$ via the binomial theorem and discarding
the nonnegative higher-order terms.

Applying this with $t = \dfrac{1}{6k^2}$ we have
\[
h(y)^k
\;\ge\;
\left(1-\frac{1}{6k^2}\right)^k
\;\ge\;
1-k\cdot\frac{1}{6k^2}
\;=\;
1-\frac{1}{6k}
\;\ge\;
1-\frac{1}{6}
\;=\;
\frac{5}{6}
\;>\;\frac{1}{2}.
\]

Thus, for every $x>0$ with $x\le c_0/(mk)$ (and $c_0=\tfrac12$), we have
$f(x)=h(y)^k>\tfrac12$. Since $f$ is even, the same bound holds for $x<0$ with
$|x|\le c_0/(mk)$. Note that since $f$ is continuous same holds for $x=0$.
Which concludes the proof of (ii). 
\medskip\noindent\textbf{Proof of (iii):}

Fix any constant $c_1\ge 2$. Let $x$ satisfy
$
\frac{c_1}{m}\le|x|\le 1.
$ Then, using $|T_m(x)|\le 1$ and $|x|\ge c_1/m$,
\[
\left|\frac{T_m(x)}{m x}\right|
\;\le\;
\frac{1}{m|x|}
\;\le\;
\frac{1}{m\cdot (c_1/m)}
\;=\;
\frac{1}{c_1}
\;\le\;
\frac{1}{2}.
\]
Therefore,
\[
f(x)
=
\left|\frac{T_m(x)}{m x}\right|^k
\;\le\;
\left(\frac{1}{2}\right)^k
=
2^{-k}.
\]
Combining this with the non-negativity of $f$, we obtain
\[
0 \;\le\; f(x)\;\le\;2^{-k}
\quad\text{for all }x\text{ with }\frac{c_1}{m}\le |x|\le 1,
\]
which proves item (iii).
\medskip\noindent\textbf{Proof of (iv):} Note that from \Cref{fact:Chebychev_properties} we have that for all $|x|\geq 1$
\begin{align*}
    T_m(x)\leq \left(\abs{x}+\sqrt{x^2-1}\right)^m \leq (2|x|)^m\;. 
\end{align*}
As a result $({T_m(x)}/{mx})^k\leq (2|x|)^{k(m-1)}$ which concludes the proof of item (iv).
\end{proof}

\begin{claim}[Properties of $g$]
Let $g$ be the function defined at \Cref{eq:g-definition}. It holds that
\begin{enumerate}[label=(\alph*)]
\item For $c_1/m\le |x|\le 1$,
      \(
      0\le g(x)\lesssim  mk\,2^{-k}.
      \)
\item For $|x|\ge 1$,
      \(
      |g(x)|\le mk\,|2x|^{(m-1)k}.
      \)
\end{enumerate}
\end{claim}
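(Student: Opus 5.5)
Everything here reduces to a lower bound $I_m \gtrsim 1/(mk)$ on the normalizing constant, after which both items follow by dividing the bounds of \Cref{clm:f-properties} by $I_m$. By \Cref{clm:f-properties}(i), $f\ge 0$ on $[-1,1]$ (indeed $f\ge0$ everywhere, since $k$ is even). So restricting the integral defining $I_m$ to the subinterval $|x|\le c_0/(mk)\subseteq[-1,1]$ only decreases it. On that subinterval, \Cref{clm:f-properties}(ii) gives $f(x)\ge 1/2$. Hence
\[
I_m \;\ge\; \int_{-c_0/(mk)}^{c_0/(mk)} f(x)\,dx \;\ge\; \frac{2c_0}{mk}\cdot\frac12 \;=\; \frac{c_0}{mk},
\]
and therefore $1/I_m \le mk/c_0 \lesssim mk$.

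For item (a), take $c_1/m\le|x|\le1$. Nonnegativity of $g$ follows from $f\ge0$ and $I_m>0$. For the upper bound, \Cref{clm:f-properties}(iii) gives $f(x)\le 2^{-k}$, so
\[
g(x)=\frac{f(x)}{I_m}\le \frac{mk}{c_0}\,2^{-k}\lesssim mk\,2^{-k}.
\]

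For item (b), take $|x|\ge1$. \Cref{clm:f-properties}(iv) gives $|f(x)|\le(2|x|)^{k(m-1)}$, so
\[
|g(x)|\le \frac{mk}{c_0}\,|2x|^{(m-1)k}.
\]
Item (b) is stated without a hidden constant, so the factor $1/c_0$ still has to be removed. One way is to use the proof's value $c_0=1/2$ together with the sharper bound $f\ge 5/6$ that the proof of (ii) actually establishes; this gives $I_m\ge \tfrac{5}{6}\cdot\tfrac{1}{mk}$, i.e.\ $1/I_m\le \tfrac{6}{5}mk$. A cleaner alternative: for $|x|\ge1$ the bound $|T_m(x)|\le(|x|+\sqrt{x^2-1})^m\le(2|x|)^m$ and $1/(m|x|)\le 1$ imply
\[
|f(x)|\le (2|x|)^{mk}/m^k\le (2|x|)^{(m-1)k}\,2^k|x|^k/m^k .
\]
The factor $2^k|x|^k/m^k$ is not uniformly small (it grows with $|x|$), so this route does not by itself absorb the constant. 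If neither sharpening closes the gap, I would read (b) up to an absolute constant, consistent with (a) and with how it is later used only inside $\lesssim$ estimates.

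The main obstacle is this lower bound on $I_m$; it relies on (ii) holding on a window of width $\Theta(1/(mk))$, which the claim provides. The rest is bookkeeping of constants.
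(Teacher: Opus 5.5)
Your proof is the paper's: lower-bound $I_m\ge c_0/(mk)$ by integrating $f\ge 1/2$ from \Cref{clm:f-properties}(ii) over $|x|\le c_0/(mk)$, then divide (iii) and (iv) by $I_m$; the paper's argument likewise only yields (b) with an extra factor $1/c_0=2$, so reading (b) up to a constant proves no less than the paper does. Your ``cleaner alternative'' would in fact remove that constant, except that you discarded the factor $|x|^{-k}$: for $|x|\ge 1$ one has $|f(x)|\le (2|x|)^{mk}/(m|x|)^k=(2/m)^k(2|x|)^{(m-1)k}$, hence $|g(x)|\le 2mk\,(2/m)^k\,|2x|^{(m-1)k}\le mk\,|2x|^{(m-1)k}$ for odd $m\ge 3$ and even $k\ge 2$ (and $m=1$ is trivial, since then $f\equiv 1$ and $I_1=2$).
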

\begin{proof}
By \Cref{clm:f-properties} (i), we have that
\[
I_m \;\ge\; \int_{-c_0/(mk)}^{c_0/(mk)} \frac12\,dx
   \;\ge\; \frac{c_0}{mk}.
\]
 Applying the above along with \Cref{clm:f-properties} (iii) and (iv) concludes the proof of \Cref{clm:g-properties}. 
\end{proof}

\begin{claim}
Let  $p$ the function defined at \Cref{eq:p-definition}.
It holds that:
\begin{enumerate}[label=(\roman*)]
\item If $|x|\ge B/2$, then 
      $|p(x)| \leq (C|x|/B)^{mk}$.
\item For $-B/2<x<t-\Delta$ we have that $p(x)\leq C 2^{-k/4}$.
\item For $t+\Delta<x<t+w-\Delta$ we have that $p(x)\geq 1- C2^{-k/4}$.
\item For $|x|\leq B/2$ we have that  $p(x)\in [0,1]$.
\end{enumerate}
\end{claim}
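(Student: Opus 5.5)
All four items follow from the explicit form $p(x)=\int_{(x-t-w)/B}^{(x-t)/B} g(y)\,dy$ together with \Cref{clm:g-properties} and \Cref{clm:f-properties}. Throughout we use the standing assumptions $B\ge C(w+t)$ and $k\ge 2\log_2 m$. The key observation is that whenever $|x|\le B/2$, both integration endpoints lie in $[-1,1]$. Indeed,
\[
\bigl|(x-t)/B\bigr|,\ \bigl|(x-t-w)/B\bigr| \le \tfrac12+(t+w)/B<1 .
\]
On $[-1,1]$ we have $g=f/I_m\ge 0$ and $\int_{-1}^1 g=1$. This immediately gives item (iv): $p(x)$ is the $g$-mass of a subinterval of $[-1,1]$, so $p(x)\in[0,1]$.

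For items (ii) and (iii) we use that the mass of $g$ outside $[-c_1/m,c_1/m]$ within $[-1,1]$ is tiny. By \Cref{clm:g-properties}(a) this tail mass is at most $2\cdot O(mk\,2^{-k})$. Since $k\ge 2\log_2 m$, we have $m\le 2^{k/2}$, and $k\le 2^{k/4}\cdot O(1)$. Hence $mk\,2^{-k}\lesssim 2^{-k/4}$.

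For (ii), take $-B/2<x<t-\Delta$ with $\Delta=c_1B/m$. The upper endpoint satisfies $(x-t)/B<-c_1/m$, so the whole integration interval lies in $[-1,-c_1/m)$. Therefore $p(x)\le C2^{-k/4}$.

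For (iii), take $t+\Delta<x<t+w-\Delta$. Then $(x-t)/B>c_1/m$ and $(x-t-w)/B<-c_1/m$, so the interval contains $[-c_1/m,c_1/m]$. Consequently $p(x)\ge 1-(\text{tail mass})\ge 1-C2^{-k/4}$. (These $x$ automatically satisfy $|x|\le B/2$, since $t+w\le B/C$.)

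Item (i) is the only delicate step, because the interval may extend beyond $[-1,1]$. For $|x|\ge B/2$, both endpoints have absolute value at most $(|x|+t+w)/B\le 2|x|/B$, so
\[
|p(x)|\le \int_{-2|x|/B}^{2|x|/B}|g(y)|\,dy .
\]
We split this integral into the part over $[-1,1]$, which contributes at most $1$, and the part over $1\le|y|\le 2|x|/B$. On the latter, \Cref{clm:g-properties}(b) gives $|g(y)|\le mk(2|y|)^{(m-1)k}$, so that part contributes at most
\[
2\cdot\frac{2|x|}{B}\cdot mk\,(4|x|/B)^{(m-1)k}.
\]

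It remains to absorb all of this into $(C|x|/B)^{mk}$, using $|x|/B\ge 1/2$. The constant $1$ is at most $(C/2)^{mk}$ for $C\ge 2$. The prefactor $mk\le 2^{mk}$ is absorbed by enlarging $C$. The remaining factor $(2|x|/B)\cdot(4|x|/B)^{(m-1)k}$ is at most $(8|x|/B)^{mk}$, because $|x|/B\ge 1/2$ makes the extra power harmless. Choosing $C$ a sufficiently large absolute constant (e.g.\ $C=32$) then yields $|p(x)|\le (C|x|/B)^{mk}$.

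The only bookkeeping needed is this careful use of $|x|/B\ge 1/2$ to absorb lower-order terms into the top power.
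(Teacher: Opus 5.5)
Your proposal is correct and follows the paper's argument for (ii)--(iv) essentially step for step: both integration endpoints lie in $[-1,1]$ when $|x|\le B/2$, $g$ is nonnegative with unit mass on $[-1,1]$, and $k\ge 2\log_2 m$ gives the tail bound $mk\,2^{-k}\lesssim 2^{-k/4}$. For (i) your bookkeeping differs slightly but validly. The paper bounds $|p(x)|$ by the interval length $w/B\le 1$ times $\sup_{I_x}|g|$, using both parts of \Cref{clm:g-properties} on $|y|\ge 1/4$, and absorbs the $mk$ prefactor via $(4|x|/B)^k\ge 4^k\ge m^2$. You instead use the unit mass of $g$ on $[-1,1]$ plus the growth bound outside it, and absorb $mk\le 2^{mk}$ into $C$; this is a bit cleaner and does not need $k\ge 2\log_2 m$ for that item.
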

\begin{proof}

\medskip\noindent\textbf{Proof of (i):}
Fix $|x|\ge B/2$ and denote the integration interval by
$
I_x \;\eqdef\; \Big[\frac{x-t-w}{B},\,\frac{x-t}{B}\Big].
$
Fix a point $y\in I_x$.
Using the assumption $B\ge C(w+t)$ for a sufficiently large constant $C>4$ and $|x|\ge B/2$, we obtain
\[
|y|
\;\ge\;
\frac{|x|}{B} - \frac{t+w}{B}
\;\ge\;
\frac{|x|}{B} - \frac{1}{C}
\;\ge
\frac{1}{2}
-\frac{1}{C}\geq \frac{1}{4}\;.
\]
From  \Cref{clm:g-properties}, we have that for all $y$ with $|y|\ge 1/4$,
$
|g(y)|
\lesssim
\, mk\!\left(2^{-k} + (2|y|)^{(m-1)k}\right).
$
Moreover, for every $y\in I_x$ we have
\[
|y|
\;\le\;
\frac{|x|+t+w}{B}
\;\le\;
\frac{|x|}{B} + \frac{t+w}{B}
\;\le\;
\frac{|x|}{B} + \frac{1}{C}
\;\le\;
\frac{2|x|}{B}\;.
\]
Hence, for all $y\in I_x$,
\[
|g(y)|
\;\le\;
C_0\, mk\!\left(2^{-k} + \left(\frac{4|x|}{B}\right)^{(m-1)k}\right).
\]
Using this bound and the fact that the length of $I_x$ is  $w/B$, we obtain
\begin{align*}
|p(x)|
&\le
\int_{I_x} |g(y)|\,dy
\;\le\;
\frac{w}{B}\,\sup_{y\in I_x} |g(y)| \\
&\lesssim  mk
\Bigg(
2^{-k} + \Big(\frac{4|x|}{B}\Big)^{(m-1)k}
\Bigg).
\end{align*}
Now, since $|x|\ge B/2$, we have $\tfrac{4|x|}{B}\ge 2$, and thus
\(
\big(\tfrac{4|x|}{B}\big)^{(m-1)k} \ge 1
\),
so the term $2^{-k}\le 1$ is dominated by it. Therefore,
\[
|p(x)|
\lesssim mk
\Big(\frac{4|x|}{B}\Big)^{(m-1)k}.
\]
Rewrite this as
\[
|p(x)|
\lesssim mk
\frac{\big(\frac{4|x|}{B}\big)^{mk}}{\big(\frac{4|x|}{B}\big)^k}.
\]
Since $|x|\ge B/2$, we have
\(
\big(\tfrac{4|x|}{B}\big)^k \ge 4^k
\).
Using the assumption $k\ge 2\log_2 m$, we get
\(
4^{k/2} = 2^{k} \ge 2^{2\log_2 m}=m^2
\),
and thus
\[ mk\,\frac{1}{4^k}
\le
\frac{ m}{4^{k/2}}\frac{k}{2^k}
\;\le\;
1
\]
Therefore,
\[
|p(x)|
\lesssim
\bigg(\frac{4|x|}{B}\bigg)^{mk}\;,
\]
which concludes the proof of (i).

\medskip\noindent\textbf{Proof of (ii):}
Assume $-B/2 < x < t-\Delta$.
We first prove that  $I_x \subseteq [-1,-c_1/m]$.
Since $x < t-\Delta$, we have
\[
\frac{x-t}{B}
\;\le\;
\frac{t-\Delta - t}{B}
=
-\frac{\Delta}{B}
=
-\frac{c_1}{m}.
\]
Thus the right endpoint of $I_x$ is to the left of $-c_1/m$.
For the left endpoint, using $x > -B/2$ we get
\[
\frac{x-t-w}{B}
\;>\;
\frac{-B/2 - t - w}{B}
=
-\Big(\frac{1}{2} + \frac{t+w}{B}\Big).
\]
Since $B \ge C(w+t)$, we have $(t+w)/B \le 1/C$. For $C \ge 2$ gives
\[
\frac{x-t-w}{B}
\;>\;
-\Big(\frac{1}{2} + \frac{1}{C}\Big)
\;\ge\;
-1.
\]
Combining the bounds we obtain
$I_x \subseteq [-1,-c_1/m].$
By \Cref{clm:g-properties}, for all $y$ with $c_1/m \le |y|\le 1$,
$
0 \;\le\; g(y) \;\lesssim mk\,2^{-k}.
$
Hence
\[
p(x)
=
\int_{I_x} g(y)\,dy
\le
\frac{w}{B}\,\sup_{y\in I_x} g(y)
\lesssim
\frac{w}{B}\, mk\,2^{-k}
\leq
mk2^{-k}\;.
\]
Using $k \ge 2\log_2 m$, we have $m \le 2^{k/2}$, so
$mk\,2^{-k}\;\le\;k 2^{-k/2}$. 
Also note that $k2^{-k}$ is less than a constant for all $k\geq 1$.
Therefore,
$p(x) \lesssim 2^{-k/4}$
which proves \textup{(ii)}.

\medskip\noindent\textbf{Proof of (iii):}
First we prove that $(-c_1/m,c_1/m)\subseteq I_x\subseteq[-1,1]$.
Let $x$ such that $t+\Delta < x < t+w-\Delta$. 
Note that 
$\frac{x-t-w}{B}\geq \frac{\Delta-w}{B}\geq -\frac{w}{B}\geq -1$.
Also, 
$     \frac{x-t}{B}\leq \frac{w-\Delta}{B}\geq \frac{w}{B}\geq 1\;,
$
which proves that $I_x\subseteq[-1,1]$.
Also note that 
$\frac{x-t-w}{B}\leq -\frac{\Delta}{B}= -c_1/m.$
and 
$    \frac{x-t}{B}\geq \frac{\Delta}{B}= c_1/m$.
Thus $(-c_1/m,c_1/m)\subseteq I_x$. 

By \Cref{clm:g-properties}, for every $y$ with $c_1/m\le |y|\le 1$ we have
$0\le g(y)\le mk\,2^{-k}$.
Therefore
\[
1-p(x)
=
\int_{[-1,1]\setminus I_x} g(y)\,dy
\le
\int_{\{y\in[-1,1]: |y|\ge c_1/m\}} g(y)\,dy
\le
2 mk\,2^{-k}.
\]
As in the proof of item~(ii), using $k\ge 2\log_2 m$ we have that 
$2 mk\,2^{-k} \lesssim 2^{-k/4}$.
Thus
\[
p(x)\ge 1-C2^{-k/4},
\]
which proves (iii).

\medskip\noindent\textbf{Proof of (iv):}
Let $x$ be such that $|x|\le B/2$.
Since $g(y)\ge 0$ for all $y$, we immediately have $p(x)\ge 0$.
It remains to prove that $p(x)\le 1$.
We first check that $I_x\subseteq[-1,1]$.
For the right endpoint,
\[
\Big|\frac{x-t}{B}\Big|
\le
\frac{|x|}{B} + \frac{t}{B}
\le
\frac{1}{2} + \frac{t}{B}
\le
\frac{1}{2} + \frac{1}{C}
\le 1.
\]
Similarly, for the left endpoint we have
\[
\Big|\frac{x-t-w}{B}\Big|
\le
\frac{|x|}{B} + \frac{t+w}{B}
\le
\frac{1}{2} + \frac{w+t}{B}
\le
\frac{1}{2} + \frac{1}{C}
\le 1\;.
\]
Thus both endpoints lie in $[-1,1]$, so $I_x\subseteq[-1,1]$.

By construction of $g$ we have $g(y)\ge 0$ for all $y$ and
$\int_{-1}^1 g(y)\,dy = 1$.
Therefore, 
\[
p(x)
=
\int_{I_x} g(y)\,dy
\le
\int_{-1}^1 g(y)\,dy
= 1\;,
\]
which completes the proof of (iv).
\end{proof}

\begin{lemma}
Define
$p_+(x) \;\eqdef\; 1 - p_-\big(2t - x\big)$ with $p_-$ as defined in \Cref{eq:pminus}. Assume parameter choices in \Cref{eq:parameters}. It holds that 
\begin{enumerate}[label=(\roman*)]
    \item $p_+(x)\geq h(x)$ for all $x\in \R$.
    \item $\E\big[p_+(x)-h(x)\big] \lesssim \alpha\E[h(x)]$.
\end{enumerate}
\end{lemma}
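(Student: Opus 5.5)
The plan is to reduce both items to statements about $p_-$ that are already available. The reduction uses the reflection $u \eqdef 2t-x$. Under this map, $x>t$ exactly when $u<t$, and $1-h(x)=h(u)$ for every $x\neq t$.

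\emph{Item (i).} Take $x>t$. Then $u<t$, so Case 1 of \Cref{lem:p-minus-below} gives $p_-(u)\le 0$, hence $p_+(x)\ge 1=h(x)$. Take $x<t$. Then $u>t$, so Case 2 gives $p_-(u)\le 1$, hence $p_+(x)\ge 0=h(x)$. At the boundary point $x=t$ we need $p_-(t)\le 0$. This does not follow from Case 2. Instead, Case 1 shows $p_-\le 0$ on $(-\infty,t)$, and $p_-$ is a polynomial, hence continuous, so $p_-(t)\le 0$ and $p_+(t)\ge 1$.

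\emph{Item (ii), rewriting.} For almost every $x$ the reflection gives
\[
p_+(x)-h(x) \;=\; h(u)-p_-(u),
\]
and this quantity is nonnegative. If $x\sim\cN(0,1)$ then $u\sim\cN(2t,1)$. So the goal becomes
\[
\E_{u\sim\cN(2t,1)}\big[h(u)-p_-(u)\big]\lesssim \alpha\Pr_{x\sim\cN(0,1)}[x\ge t].
\]
I will redo the proof of \Cref{lem:error} under the shifted Gaussian. I use the same decomposition into the polynomial correction term $\E[(Cu/B)^{mk}]$, the constant $C2^{-k/4}$, and Regions A--E for $z=u-\Delta$. I then check that each bound survives the shift of the mean to $2t$.

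\emph{Item (ii), the easy regions.}
\begin{itemize}
\item The $C2^{-k/4}$ terms and Regions B and D use only that the total probability mass is at most $1$. They go through unchanged.
\item Region C now contributes $\Pr_{\cN(2t,1)}[t\le u\le t+2\Delta]\le 2\Delta\,\phi(t-2\Delta)$. Since $\Delta t\lesssim \alpha\le 1$, this is $\lesssim \Delta\,\phi(t)\lesssim \Delta(t+1)\Pr[x\ge t]\lesssim \alpha\Pr[x\ge t]$, exactly as before.
\item The polynomial term and Region A need moments of $|u|$ under $\cN(2t,1)$. I use $|u|^{j}\le 2^{j}\big((2t)^{j}+|u-2t|^{j}\big)$. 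Because $B=C_B\sqrt{mk}\gg t$, we get $\big(2C\cdot 2t/B\big)^{mk}\le 2^{-mk}$. The centered part is handled as in the unshifted case by enlarging $C_B$. The tail probability $\Pr[|u|\ge B/4]$ is at most $\Pr[|u-2t|\ge B/8]\le e^{-k}$ because $B/8\ge 2t+\Omega(\sqrt{k})$.
\end{itemize}

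\emph{Item (ii), the main obstacle: Region E.} Here the integrand is bounded by $1$ on $\{u\ge t+w\}$, and under the shifted measure this region is no longer deep in the tail. Its mass is $\Pr_{g\sim\cN(0,1)}[g\ge w-t]$. Since $w=C_w(\sqrt{\log(1/\alpha)}+t)$, we have $w-t\ge (C_w-1)\big(t+\sqrt{\log(1/\alpha)}\big)$. Hence this mass is at most $\exp\!\big(-(C_w-1)^2(t^2+\log(1/\alpha))/2\big)$. For $C_w$ large this is at most $\alpha\, e^{-t^2/2}/(t+1)\le\alpha\Pr[x\ge t]$. This is the step where the $+t$ in the choice of $w$ in \Cref{eq:parameters} is essential.

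Summing the contributions of all regions gives (ii).
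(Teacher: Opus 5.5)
Your proposal is correct and follows the paper's proof essentially step for step. The reflection $u=2t-x$ turns $p_+(x)-h(x)$ into $h(u)-p_-(u)$ with $u\sim\cN(2t,1)$, and (ii) then follows by re-running the region-by-region analysis of \Cref{lem:error} under the shifted Gaussian, with the $+t$ in $w$ handling Region E exactly as in the paper; your treatment of the boundary point $x=t$ (continuity of $p_-$ from $(-\infty,t)$, where $p_-\le 0$) is a slightly more precise version of the paper's continuity remark.
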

\begin{proof}
\medskip\noindent\textbf{Proof of (i):}
For $x<t$ we have that $2t-x>t$ hence $p_-(2t-x)\leq h(2t-x)=1$.
Therefore, $p_+(x)
= 1 - p_-(2t-x)
\;\ge\; 0
= h(x)$.
For $x>t$ similarly $2t-x<t$, hence $p_-(2t-x)\leq h(2t-x)=0$.
Thus $p_+(x)
= 1 - p_-(2t-x)
\;\ge\; 1
= h(x).$
For $x=t$ we have that $p_+(x)\geq 1$ by left continuity since $p$ is a polynomial and therefore continuous. 

\medskip\noindent\textbf{Proof of (ii):}

Define the error functions
$
e_-(x)\eqdef h(x) - p_-(x),
$$
e_+(x) \eqdef  p_+(x) - h(x).
$
By construction of $p_+$ we have, for every $x\neq t$,
\[
e_+(x)
= p_+(x) - h(x)
= 1 - p_-(2t-x) - h(x)
=
\begin{cases}
1 - p_-(2t-x), & x<t,\ h(x)=0,\\[3pt]
-\,p_-(2t-x), & x>t,\ h(x)=1,
\end{cases}
\]
while
\[
e_-(2t-x)
= h(2t-x) - p_-(2t-x)
=
\begin{cases}
1 - p_-(2t-x), & 2t-x>t\ \Leftrightarrow\ x<t,\\[3pt]
-\,p_-(2t-x), & 2t-x<t\ \Leftrightarrow\ x>t.
\end{cases}
\]
Thus, for all $x\neq t$,
\[
e_+(x) = e_-(2t-x).
\]
The error at $x=t$ has zero-measure under $\calN(0,1)$, so we may ignore it.
Therefore,
\begin{align*}
\E_{x\sim\cN(0,1)}[e_+(x)]
&=
\int_{\R} e_+(x)\phi(x)\,dx
=
\int_{\R} e_-(2t-x)\phi(x)\,dx \\
&=
\int_{\R} e_-(u)\phi(2t-u)\,du= \E_{x\sim \cN(2t,1)}[e_-(x)]\;,
\end{align*}
where we used the change of variables $u=2t-x$.

Now we use the same analysis as in \Cref{lem:error} to prove that the change in distribution does not affect the approximation error significantly. 

First lets consider the terms $(Cx/B)^{mk}$ and $C 2^{-k/4}$. 
Note that the term $C2^{-k/4}$ is not affected by the change in distribution since its constant. Hence, it is bounded as before  by $\alpha \E_{\cN(0,1)}[h(x)]$. 
For the term $(Cx/B)^{mk}$ we have 
\begin{align*}
  \E_{x\sim \cN(2t,1)}[(Cx/B)^{mk}]&= \left(\frac{C}{B}\right)^{mk}\E_{x\sim \cN(0,1)}[(2t+x)^{mk}]\\
  &\leq \left(\frac{C}{B}\right)^{mk}2^{mk-1}\left( (2t)^{mk}+ (mk)^{mk/2}\right)\\
  &\lesssim \left(\frac{2C\sqrt{mk}}{B}\right)^{mk}\lesssim \alpha \E_{x\sim \cN(0,1)}[h(x)]\;,
  \end{align*} 
where in the first inequality, we use that \(x^{mk}\) is convex (because \(mk\) is even) and the moment bound
\(\E_{x\sim \cN(0,1)}[|x|^{j}] \le j^{j/2}\).
In the second inequality we used that $\sqrt{mk}$ is greater than $2t$.
Now, by our choice \(B = C_B\sqrt{mk}\) for \(C_B\) sufficiently large, we obtain the same bound as in \Cref{lem:error}.

Now we consider each of the regions in \Cref{lem:error} separately. Note that our goal is to bound
\(\E_{\x\sim \cN(2t,1)}\big[h(x)-p(x-\Delta)\big]\),
where \(p\) is the polynomial defined in \Cref{eq:p-definition}. Denote \(z \eqdef x-\Delta\).

\noindent\textbf{Region A: $|z|\ge B/2$.} By \Cref{clm:p-approx-step}(i), for $|z|\ge B/2$ we have
$
|p(z)| \le \Big(\frac{C_0|z|}{B}\Big)^{mk}
$
for an absolute constant $C_0>0$. Since $0\le h(x)\le 1$, it follows that
\[
|h(x)-p(z)| \;\le\; 1+|p(z)|
\;\le\; 1+\Big(\frac{C_0|z|}{B}\Big)^{mk}.
\]
Therefore,
\begin{align*}
\E_{x\sim\cN(2t,1)}\!\Big[|h(x)-p(z)|\,\Ind\{|z|\ge B/2\}\Big]
&\le
\Pr_{x\sim\cN(2t,1)}\!\big[|z|\ge B/2\big]
+
\E_{x\sim\cN(2t,1)}\Big[\Big(\frac{C_0|z|}{B}\Big)^{mk}\Ind\{|z|\ge B/2\}\Big]\\
&\le
\Pr_{x\sim\cN(2t,1)}\big[|z|\ge B/2\big]
+
\E_{x\sim\cN(2t,1)}\Big[\Big(\frac{C_0|z|}{B}\Big)^{mk}\Big].
\end{align*}
Note that by a similar argument as above and that $\Delta\leq t$ we have $\E_{x\sim\cN(2t,1)}\Big[\Big(\frac{C_0|z|}{B}\Big)^{mk}\Big]\lesssim \alpha \E_{x\sim \cN(0,1)}[h(x)]$.

Since $\Delta = c_1B/m$ and $m$ is taken sufficiently large, we may assume $\Delta\le B/10$.
Thus $|z|=|x-\Delta|\ge B/2$ implies $|x|\ge B/2-\Delta\ge 2B/5$.
Also, by our choice $B=C_B\sqrt{mk}$ with $C_B$ large, we have $B\ge 20t$, hence
$2B/5-2t\ge B/5$. Therefore,
\[
\Pr_{x\sim\cN(2t,1)}\big[|z|\ge B/2\big]
\le
\Pr_{x\sim\cN(0,1)}\big[x\ge B/5\big]
\;\le\;
2\exp\!\Big(-\frac{B^2}{50}\Big).
\]
Since $B^2=\Theta(mk)$ and $k=\Theta(t^2+\log(1/\alpha))$, choosing constants so that
$B^2\ge 100\,(t^2+\log(1/\alpha))$ gives
$2\exp(-B^2/50)\le 2e^{-2(t^2+\log(1/\alpha))}=2\alpha^2e^{-2t^2}$.
Using the lower bound proved in \Cref{lem:error}, $\Pr_{g\sim\cN(0,1)}[x\ge t]\ge e^{-t^2/2}/(t+1)$,
we obtain
\[
\Pr_{x\sim\cN(2t,1)}\big[|z|\ge B/2\big]\lesssim \alpha\E_{x\sim\cN(0,1)}[h(x)].
\]

\noindent\textbf{Region B: $-B/2\leq z\le t-\Delta$.} 
Note that the pdf of \(\cN(2t,1)\) is at most the pdf of \(\cN(0,1)\) for any \(x \le t\). Hence,
\[
\E_{x\sim \cN(2t,1)}\!\left[(h(x)-p(z))\Ind(-B/2\le x\le t)\right]
\le
\E_{x\sim \cN(0,1)}\!\left[(h(x)-p(z))\Ind(-B/2\le x\le t)\right]
\lesssim
\alpha \E_{x\sim \cN(0,1)}[h(x)] \, .
\]

\noindent\textbf{Region C: $t-\Delta \leq z\le t+\Delta$.} 
As before since $h(x),p(z)\in [0,1]$ in this region we have 
\[
\E_{x\sim \cN(2t,1)}\big[|h(x)-p(z)|\Ind\{t-\Delta\le z\le t+\Delta\}\big]
\;\le\;
\Pr_{x\sim \cN(2t,1)}[t\le x\le t+2\Delta].
\]
Since the interval $[t,t+2\Delta]$ is on the left side of the mean we have that the pdf of $\cN(2t,1)$ is increasing, hence
\begin{align*}
\Pr_{x\sim \cN(2t,1)}[t\le x\le t+2\Delta]&\leq 2\Delta\phi(t+2\Delta-2t)= 2\Delta \phi(t-2\Delta)\\
&\lesssim\frac{\sqrt{k}}{\sqrt{m}}e^{-t^2/2+ 2\Delta t  }
\lesssim\frac{\sqrt{k}}{\sqrt{m}}e^{-t^2/2}\\
&\lesssim \frac{(t+1)\sqrt{k} }{\sqrt{m}}\p_{x\sim \cN(0,1)}[x\geq t]\lesssim  \alpha \E_{x\sim \cN(0,1)}[h(\x)]\;,
\end{align*}
where in the second and fourth inequality we used the value of $\Delta$ and in third we used the fact that $\phi(t)\lesssim (t+1)\Pr[x\geq t]$ which we have proved in \Cref{lem:error}.

\noindent\textbf{Region D: $t+\Delta \leq z\le t+w-\Delta$.} 
The error bound for this region follows directly from the fact that we are inside the ``bump window''.
Therefore, by \Cref{clm:p-approx-step}, and similarly to the proof of \Cref{lem:error} for the same region, we have
\[
0 \;\le\; h(x)-p(z) \;\le\; C\,2^{-k/4}.
\]
Hence,
\[
\E_{x\sim \cN(2t,1)}\Big[(h(x)-p(z))\,\Ind\{t+\Delta<z<t+w-\Delta\}\Big]
\lesssim 2^{-k/4}.
\]
By our choice of \(k\), it follows that
\[
\E_{x\sim \cN(2t,1)}\Big[(h(x)-p(z))\,\Ind\{t+\Delta<z<t+w-\Delta\}\Big]
\lesssim \alpha\,\E_{x\sim \cN(0,1)}[h(x)].
\]

\noindent\textbf{Region E: $t+w-\Delta \leq z\le B/2$.} 
Note that in this case, by \Cref{clm:p-approx-step} we have $|z|\le B/2$, and hence
$0 \le p(z) \le 1$. Moreover, on the event $\{t+w-\Delta \le z \le B/2\}$ we have
$x \ge t+w \ge t$, so $h(x)=1$. Therefore,
\begin{align*}
\mathbb{E}_{x\sim \mathcal{N}(2t,1)}
\bigl[(h(x)-p(z))\mathbf{1}\{t+w-\Delta \le z \le B/2\}\bigr]
&\le \Pr_{x\sim \mathcal{N}(2t,1)}[x\ge t+w] \\
&= \Pr_{x\sim \mathcal{N}(0,1)}[x\ge w-t] \;.
\end{align*}
Since $w \ge C_w(\sqrt{\log(1/\alpha)}+t)$ for a sufficiently large constant $C_w$
(in particular, taking $C_w\ge 2$ gives $w-t \ge t + C_w\sqrt{\log(1/\alpha)}$),
we obtain
\[
\Pr_{x\sim \mathcal{N}(0,1)}[x\ge w-t]
\le \Pr_{x\sim \mathcal{N}(0,1)}\!\left[x\ge t + C_w\sqrt{\log(1/\alpha)}\right]
\le \alpha\,\Pr_{x\sim \mathcal{N}(0,1)}[x\ge t]\;,
\]
as in the proof of \Cref{lem:error}.

Combining the above cases we have that $\E_{x\sim \cN(0,1)}[e_+(x)]\lesssim \alpha \E_{x\sim \cN(0,1)}[h(x)]$, which completes the proof of \Cref{cl:pplus}.
\end{proof}

\section{SQ Lower Bound for Testably Learning Massart Halfspaces} \label{app:sq-lb}
In this section, we prove that an exponential dependence on $1/\beta^2$ is necessary for our testable learning task in the SQ model, even for the near-homogeneous case.
Before we state our SQ lower bound, we record some basics about the SQ model.

\paragraph{Statistical Query Model}
Statistical Query (SQ) algorithms are a class of algorithms that are allowed 
to query expectations of bounded functions of the underlying distribution
rather than directly access samples. 
Formally, an SQ algorithm has access to the following oracle.

\begin{definition}[\textsc{STAT} Oracle] \label{def:stat-oracle}
Let $D$ be a distribution over $\R^d \times  \{\pm 1\}$. 
A statistical query is a function $q: \R^d \times \{\pm 1\} \to [-1, 1]$.  We define
\textsc{STAT}$(\tau)$ to be the oracle that given a query $q(\cdot, \cdot)$
outputs a value $v$ such that
$|v - \E_{ (x,y) \sim D}\left[q(x, y)\right]| \leq \tau$,
where $\tau>0$ is the tolerance of the query.
\end{definition}

The SQ model was introduced by~\cite{Kearns:98} 
as a natural restriction of the PAC model 
and has been extensively studied in learning 
theory~\cite{FGR+13, FeldmanGV17, Feldman17}. 
The class of SQ algorithms is fairly broad: a wide range of known algorithmic techniques 
in machine learning are known to be implementable using SQs
(see, e.g.,~\cite{Chu:2006, FGR+13, FeldmanGV17}).

\medskip

With this setup, we show the following: 

\begin{proposition}[SQ Lower Bound for Testable Learning of Massart Halfspaces]\label{SQLowerBoundProp}
Any SQ algorithm that testably learns  halfspaces under Gaussian marginals on $\R^d$ and Massart noise with parameters $\beta>\eps>0$ and $\gamma=\Theta(1)$, requires either a query of accuracy $d^{-\Omega(1/\beta^2)}$ or $2^{d^{\Omega(1)}}$ queries.
\end{proposition}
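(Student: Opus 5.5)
We reduce to a distinguishing (hypothesis testing) problem and invoke the standard SQ machinery for Non-Gaussian Component Analysis (NGCA): a hidden-direction family of distributions whose one-dimensional marginal matches many Gaussian moments is SQ-hard to distinguish from the null. The key observation is that a testable learner must accept (with probability $2/3$) every distribution in $\mathcal{D}_\gamma$. It must also, by soundness, not accept-and-output a bad hypothesis on a distribution where no hypothesis it could output is near-optimal. So we will build a null distribution $D_0$ and a family $\{D_{\bv}\}$ indexed by unit vectors $\bv$ with the following properties. Each $D_{\bv}$ has Gaussian $\x$-marginal and $\eta$-Massart noise w.r.t.\ a near-homogeneous halfspace $\sign(\bv\cdot\x - t_0)$ with $|t_0|=O(1)$ (so $\gamma=\Theta(1)$). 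The null $D_0$, by contrast, is such that soundness forces the tester to reject often. Concretely, let $D_0$ have $\x\sim\cN^d$ and $y$ independent of $\x$ with $\E[y]=0$. Then every halfspace has error about $1/2$, so $\opt_\gamma(D_0)\approx 1/2$ and soundness imposes nothing. That choice fails, so we instead make $D_0$ agree on the $\x$-marginal and have $\E[y\mid\x]$ equal to a constant $c$ independent of $\x$.

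The fix is to make the label function under the alternative depend only on $z=\bv\cdot\x$ through a univariate function $\psi(z)=\E[y\mid z]$. We need $|\psi(z)|\ge\beta$ with $\sign\psi(z)=\sign(z-t_0)$, which is the Massart condition with $\eta=(1-\beta)/2$. We also need $\psi$ to have its first $k=\Omega(1/\beta^2)$ Hermite coefficients (beyond degree $0$) equal to zero. The null $D_0$ is $\x\sim\cN^d$ with $\E[y\mid\x]\equiv\E[\psi]$.

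\textbf{Step 1 (the univariate construction; main obstacle).} Find $\psi:\R\to[-1,1]$ with $\psi(z)\in[\beta,1]$ for $z>t_0$ and $\psi(z)\in[-1,-\beta]$ for $z<t_0$, orthogonal to all polynomials of degree $1,\dots,k$ under $\cN$, with $k=\tilde\Omega(1/\beta^2)$. We would write $\psi=\beta\,\sign(z-t_0)+(1-\beta)r(z)$ with $|r|\le1$ chosen so that $r$ agrees in sign with $\sign(z-t_0)$ wherever needed. The requirement is that $\beta\,\sign(z-t_0)$ plus a bounded, sign-compatible correction has vanishing low moments. By LP duality, such an $\psi$ exists iff no degree-$k$ polynomial $p$ with $\E[p]=0$ satisfies $\E[p\,\psi]>0$ for all admissible $\psi$. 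This in turn amounts to showing that for every such $p$, $\beta\,\E[|p|]$ is at most $\E[|p|\cdot\Ind(\sign p\neq \sign(z-t_0))]$-type quantities. We expect the dual certificate to follow from the fact that low-degree polynomials cannot correlate with a sign-threshold by more than $O(1/\sqrt{k})\|p\|_1$ relative to their $L^1$ mass. That fact is the dual of the $\Theta(1/\eps^2)$-degree lower bound for additive sandwiching of halfspaces. The constant $t_0$ is then tuned so that the degree-$0$ mean is handled; this is where we would spend most effort, and we would first try mimicking the known constructions for Massart SQ lower bounds from \cite{DK22-SQ-Massart,NasserT22}.

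\textbf{Step 2 (SQ hardness).} Given $\psi$, the distributions $D_{\bv}$ for nearly orthogonal $\bv$ (a family of size $2^{d^{\Omega(1)}}$) have pairwise correlations relative to $D_0$ bounded by $|\bv\cdot\bv'|^{k+1}\le d^{-\Omega(k)}$. This is the standard NGCA correlation lemma applied to the conditional label means. By the statistical-dimension bound, any SQ algorithm distinguishing $D_0$ from $\{D_{\bv}\}$ needs tolerance $d^{-\Omega(k)}=d^{-\Omega(1/\beta^2)}$ or $2^{d^{\Omega(1)}}$ queries.

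\textbf{Step 3 (testable learning implies distinguishing).} On $D_{\bv}$, completeness forces acceptance w.p.\ $\ge 2/3$. On $D_0$, labels are independent of $\x$ with bias $\E[\psi]$, so $\opt_\gamma(D_0)$ is attained up to $O(\gamma)$-scale slack only by near-constant predictions. Since a $\gamma$-biased halfspace must disagree with the constant on $\gamma$ mass, its error exceeds $\opt$ by $\Omega(\gamma\,|\E\psi|)$. Hence, for $\eps$ small, soundness forces rejection w.p.\ $\ge2/3$, after adjusting $\opt$ to the class $\cH_{d,\gamma}$ and ensuring $\E\psi$ is bounded away from $0$ by the choice of $t_0$. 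Thus the tester's accept/reject decision distinguishes, giving the proposition.
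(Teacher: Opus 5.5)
There is a genuine gap, and it is in Step 1, which cannot be carried out. For a threshold $t_0=O(1)$, no label function $\psi$ satisfying your Massart constraints can have vanishing Hermite coefficients beyond degree $O(1)$, whatever $\beta$ is. To see this, pick $q$ with $\E[(z-t_0)q(z)^2]=0$. For $|t_0|\le 1$, $q(z)=z+a$ with $2a/(1+a^2)=t_0$ works. In general $\deg q=O(1+t_0^2)$ suffices, since $\E[zq^2]/\E[q^2]$ sweeps the interval between the extreme zeros of $He_{\deg q+1}$. Then $p(z)=(z-t_0)q(z)^2$ is mean-zero and has the sign of $z-t_0$ everywhere. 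Hence $\E[\psi p]=\E[|\psi|\,|z-t_0|\,q^2]\ge\beta\,\E[|p|]>0$. But vanishing coefficients in degrees $1,\dots,\deg p$ would force $\E[\psi p]=\E[\psi]\,\E[p]=0$.

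Your dual heuristic is the wrong one. The $O(1/\sqrt{k})$ phenomenon concerns $L^1$-\emph{approximation} of $\sign$ by degree-$k$ polynomials, not sign-agreement of polynomials with a threshold; already $p(z)=z$ agrees perfectly with $\sign(z)$. The dual of the approximation statement gives a bounded $\psi$ that merely \emph{correlates} at level $\Omega(\beta)$ with a threshold, which is an agnostic, non-Massart label function.

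The same conclusion is predictable abstractly. Your Step 3 distinguisher only inspects the error of the output, so it would work with any non-testable learner. By \Cref{fact:general-massart}, such a learner for $\Theta(1)$-biased Massart halfspaces needs only $d^{O(\log(1/\beta))}\poly(1/\eps)$ samples. A reduction whose hidden alternatives lie in $\mathcal{D}_\gamma$ and whose null has labels independent of $\x$ never uses testability, so it cannot be expected to give $d^{\Omega(1/\beta^2)}$.

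Step 3 is also wrong as stated. Soundness is relative to $\opt_\gamma$, and on $D_0$ the best $\gamma$-biased halfspace attains exactly $\opt_\gamma$. The tester may therefore legitimately accept and output it, so nothing forces rejection.

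The missing idea is to reverse the roles. In the paper, the distribution the tester must accept is the Massart one: on an extra coordinate $x'$, labels equal $\sgn(x')$ with random classification noise of rate $(1-\beta)/2$, so every hypothesis has error at least $(1-\beta)/2$. The alternative agrees with it except on $\{x\le 0\}$. There, with probability $1-\beta$, the pure-noise component is replaced by the hard ensemble of \Cref{SQLowerBoundLem}, taken from the agnostic lower bound of \cite{diakonikolas2021optimality}. That ensemble contains a $\Theta(1)$-biased LTF with error below $1/2-10\beta$, yet distinguishing it from noise requires tolerance $d^{-\Omega(1/\beta^2)}$ or $2^{d^{\Omega(1)}}$ queries. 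This is exactly where your $L^1$-approximation intuition correctly enters.

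The alternative therefore has an LTF with error below $1/2-4\beta$. By soundness, the tester must (with probability at least $2/3$) either reject or output a hypothesis with error below $1/2-3\beta$. Both outcomes separate it from the Massart case, the latter via one extra query of tolerance $\beta$, which yields the distinguisher.
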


Our main technical tool is the following result implicit in~\cite{diakonikolas2021optimality}:
\begin{lemma}\label{SQLowerBoundLem}
There exists an ensemble of distributions $(X,y)$ on $\R^d\times \{\pm 1\}$ such that:
\begin{enumerate}
    \item The marginal on $X$ is the standard Gaussian.
    \item There is a linear threshold function $f$ with $\Theta(1)$ bias so that $\pr(f(X)\neq y) < 1/2 - 10\beta.$
    \item If $y'$ is a uniform random $\{\pm 1\}$ random variable that is independent of $X$, then any SQ algorithm that distinguishes between $(X,y)$ and $(X,y')$ requires either $2^{d^{\Omega(1)}}$ queries or one query of accuracy $d^{-\Omega(1/\beta^2)}.$
\end{enumerate}
\end{lemma}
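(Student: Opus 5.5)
The plan is the standard hidden-direction construction: build a univariate labeling function that matches low-degree Gaussian moments, then plant it along an unknown direction. Concretely, I would find a measurable $g:\R\to[-1,1]$ with two properties. First, $\E_{z\sim\cN}[g(z)q(z)]=0$ for every polynomial $q$ of degree less than $k\eqdef c/\beta^2$, where $c>0$ is a small constant. Second, $\E_{z\sim\cN}[g(z)\sign(z)]\ge 20\beta$.

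For each unit vector $\bv\in\R^d$, let $D_{\bv}$ be the distribution with $\x\sim\cN^d$ and $\E[y\mid \x]=g(\bv\cdot\x)$. This gives items 1 and 2 immediately. The marginal is Gaussian, and for the homogeneous halfspace $f(\x)=\sign(\bv\cdot\x)$ (bias $1/2$) we get
\[
\Pr[f(\x)\neq y]=\tfrac12\big(1-\E[g(\bv\cdot\x)\sign(\bv\cdot\x)]\big)\le \tfrac12-10\beta .
\]
If a constant bias strictly less than $1/2$ is preferred, the same construction works with $\sign(z-t)$ for a constant $t$.

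To get $g$, I would use LP duality. The quantity
\[
\max\Big\{\E[g\cdot\sign]\;:\;|g|\le 1,\ g\perp \mathcal P_{<k}\Big\}
\]
equals $\min_{p\in\mathcal P_{<k}}\E_{z\sim\cN}\big|\sign(z)-p(z)\big|$, the best $L^1(\cN)$ approximation error of the sign function by degree-$<k$ polynomials. So it suffices to know that this error is $\Omega(1/\sqrt{k})$. With $k=c/\beta^2$ and $c$ small, that gives correlation at least $20\beta$.

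This lower bound is the step I expect to be the main obstacle; it is exactly the fact proved in \cite{diakonikolas2021optimality} (matching the classical Ganzburg-type bounds). If I had to reprove it, I would take a dual certificate: an explicit bounded $g$ orthogonal to low-degree polynomials with correlation $\Omega(1/\sqrt k)$ with $\sign$. I would build it by truncating the Hermite expansion of $\sign$ to degrees $\ge k$ and then controlling boundedness. An alternative is to use Gaussian anticoncentration near $0$: any degree-$k$ polynomial is $O(\sqrt k)$-Lipschitz-like at scale $1/\sqrt k$ in $L^1$, so it must err by a constant over an interval of mass $\Theta(1/\sqrt k)$ around the origin.

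For item 3, write $D_0$ for the null distribution, with $\x\sim\cN^d$ and uniform $y'$. The key step is the pairwise correlation bound
\[
\chi_{D_0}(D_{\bu},D_{\bv})=\E_{\x\sim\cN^d}\big[g(\bu\cdot\x)\,g(\bv\cdot\x)\big]=\sum_{i\ge k}\hat g_i^2\,(\bu\cdot\bv)^i\le |\bu\cdot\bv|^k .
\]
This follows from the Hermite expansion of $g$ (which has no coefficients below degree $k$) together with $\|g\|_{L^2}\le1$. Next, take a set of $2^{d^{\Omega(1)}}$ unit vectors with pairwise inner products at most $d^{-\Omega(1)}$. Their pairwise correlations are then at most $d^{-\Omega(k)}=d^{-\Omega(1/\beta^2)}$. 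The standard SQ-dimension lemma (of \cite{FGR+13}) then shows that distinguishing $D_{\bv}$ from $D_0$ requires either $2^{d^{\Omega(1)}}$ queries or a query of tolerance $d^{-\Omega(1/\beta^2)}$, which is item 3.
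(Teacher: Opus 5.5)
Your proposal is correct and follows essentially the argument the paper has in mind. The paper gives no proof of its own and simply attributes the lemma to \cite{diakonikolas2021optimality}. That argument is exactly yours: LP duality turns the $\Omega(1/\sqrt{k})$ lower bound on the $L^1(\cN)$ error of approximating $\sign$ by degree-$<k$ polynomials into a bounded $g$ orthogonal to $\mathcal{P}_{<k}$; $g$ is then planted along a hidden direction, the pairwise correlations are bounded by $\E[g(\bu\cdot\x)g(\bv\cdot\x)]\le|\bu\cdot\bv|^k$, and the SQ-dimension bound finishes. The only load-bearing external input is that $\Omega(1/\sqrt{k})$ bound, which you cite correctly. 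Your fallback sketches for re-deriving it are only heuristic as stated; for example, the Hermite tail of $\sign$ above degree $k$ is not bounded. They are not needed, though.
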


\noindent We now prove Proposition \ref{SQLowerBoundProp} via a reduction to Lemma \ref{SQLowerBoundLem}.

\medskip

\begin{proof}
For convenience we produce a construction in $d+1$ dimensions, but changing $d$ to $d-1$ will give our result.

We begin by defining two distributions $(x,X,y)$ and $(x',X',y')$ on $\R\times \R^d\times \{\pm 1\}$.
In both cases, we let $x$ or $x'$ be a standard Gaussian and if it is positive, we let $X$ or $X'$ be a standard $d$-dimensional Gaussian and let $y$ or $y'$ be independently $1$ with probability $(1+\beta)/2$ and $-1$ with probability $(1-\beta)/2.$

If $x'\leq 0$, we let $X'$ be a standard Gaussian and let $y'$ be independently $-1$ with probability $(1+\beta)/2$ and $1$ with probability $(1-\beta)/2.$

If $x \leq 0$, with probability $\beta$ we let $X$ be a standard Gaussian and $y=-1$. Otherwise, we let $(X,y)$ be as in Lemma \ref{SQLowerBoundLem}.

We note the following facts:
\begin{enumerate}
\item $(x',X',y')$ has $y'$ equal to the LTF $\sgn(x')$ with $\eta$-Massart noise.
\item For any LTF $f$, $\pr(f(x',X')\neq y') \geq (1-\beta)/2.$
\item There is an LTF $f$ with bias $\Theta(1)$ so that $\pr(f(x,X) \neq y) < 1/2 - 4\beta.$
\item Any SQ algorithm that distinguishes $(x',X',y')$ from $(x,X,y)$ requires either $2^{d^{\Omega(1)}}$ queries or a query of accuracy $d^{-\Omega(1/\beta^2)}.$
\end{enumerate}
To prove these, we note that 1 follows from the definition (and in fact the Massart noise here is just Random Classification Noise). 
Item 2 follows from noting that $f(x',X') = \sgn(x')$ 
is the optimal classifier. 

Item 3 follows by letting $f(x,X) = f(X)$ the classifier guaranteed by item 2 in Lemma \ref{SQLowerBoundLem}. Then $f(x,X)\neq y$ with probability $1/2$ if $x>0$. Otherwise, there is a probability of at most $\beta$ that $y$ is set to $-1$ and a probability of $1/2$ that $(X,y)$ is as in Lemma \ref{SQLowerBoundLem}, which leads to a probability of error of $1/2(1/2-10\beta).$ Thus, the total probability of error is at most $1/4+\beta+1/4-5\beta = 1/2-4\beta.$

For item 4, we note that the two distributions can be produced in the following way:
\begin{itemize}
\item Let the first coordinate be a random Gaussian.
\item If the first coordinate is positive, let the second be a random Gaussian and the third be $1$ with probability $(1+\beta)/2$.
\item If the first coordinate is negative, with probability $\beta$ let the second coordinate be negative and the third coordinate be $-1$.
\item Otherwise, for generating $(x',X',y')$ let $(X',y')$ be a standard Gaussian and $y'$ and independent, uniform $\{\pm 1\}$ random variable, while for generating $(x,X,y)$ $(X,y)$ are as given in Lemma \ref{SQLowerBoundLem}.
\end{itemize}
Note that these differ only in the last case, but Lemma \ref{SQLowerBoundLem} implies that these options are hard to distinguish in SQ.

At this point, we are effectively done as it is not hard to see that a testable learning algorithm for LTFs with Gaussian marginals and Massart noise can be used to distinguish $(x',X',y')$ and $(x,X,y)$ with one extra query of tolerance $\beta$ used to measure the empirical error of a returned hypothesis. This is because the testable learner is only allowed to reject $(x,X,y)$ (thus rejection implies that we are in the other case). However, a learner applied to $(x,X,y)$ must return a hypothesis with error at most $1/2-3\beta$, while a learner applied to $(x',X',y')$ 
cannot return a hypothesis with error less than $1/2-\beta,$ 
and an approximation of the empirical error to tolerance $\beta$ can distinguish between these two possibilities.
\end{proof}

\section{Non-Testable Learner for Massart Halfspaces with Unknown Bias}
\label{app:unknowngamma}
Here we expand on the discussion about learning with unknown bias in \Cref{ssec:results}.
\begin{corollary}[Learning Massart Halfspaces with Unknown Bias]
Let $D$ be a distribution with $D_\x=\cN^d$ that satisfies the $\eta$-Massart noise condition with respect to a  $\gamma^*$-biased halfspace.
There exists an algorithm that given as input $\eps,\eta$ and  $N=d^{\polylog(\max(1/\gamma^*,1/\eps))}\poly(1/\eps)$ i.i.d.\ samples from $D$ with probability $0.99$ returns a classifier $h$ such that  $\pr_{(\x,y)\sim D}[h(x)\neq y]\leq \opt +\eps$.  
\end{corollary}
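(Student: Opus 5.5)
We would obtain the corollary by a doubling search over the unknown bias, using the tester of \Cref{alg:testable-massart-gamma} as a certificate. Concretely, for $j=1,2,\dots$ set $\gamma_j\eqdef 2^{-j}$ and stop at the first $j$ with $\gamma_j\le \eps$ (beyond that point the parameter is clipped to $\eps$ in Line~\ref{line:init-inputs}). For each $j$, run \Cref{alg:testable-massart-gamma} with inputs $\eps/2,\eta,\gamma_j$ and failure probability $\delta_j\eqdef 0.01/(j+1)^2$, amplified by repetition as discussed after \Cref{def:testable-learning}. Output the hypothesis $h_j$ of the first $j$ at which the algorithm accepts. If no run accepts, output the hypothesis of the final run, which uses $\gamma=\eps$.

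The correctness argument has two parts. First, completeness (\Cref{lem:completeness}): since $D$ is $\eta$-Massart with $\cN^d$ marginal, every run accepts with probability $1-\delta_j$. In particular, let $j^*$ be such that $\gamma_{j^*}\le\gamma^*$. Then the target halfspace lies in $\cH_{d,\gamma_{j^*}}$, so $D\in\mathcal D_{\gamma_{j^*}}$ and the run at $j^*$ accepts. Hence the search halts at some $j\le j^*+1$, with $\gamma_j\ge \max(\gamma^*,\eps)/4$.

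Second, soundness (\Cref{prop:soundness-gamma}): by a union bound over $j$ (total failure at most $\sum_j\delta_j\le 0.01$), every accepted $h_j$ satisfies $\err_D(h_j)\le \opt_{\gamma_j}+\eps/2$. It then remains to relate $\opt_{\gamma_j}$ to the true $\opt$. Under Massart noise the target $f^*$ is the Bayes-optimal classifier, so $\opt=\err_D(f^*)$. If the run halted at $j$ with $\gamma_j\le\gamma^*$, then $f^*\in\cH_{d,\gamma_j}$ and $\opt_{\gamma_j}=\opt$.

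The main obstacle is the opposite case, where the search halts early at some $\gamma_j>\gamma^*$. Here $f^*\notin\cH_{d,\gamma_j}$, so the soundness guarantee alone only gives error $\opt_{\gamma_j}+\eps/2$. To fix this, we would additionally run the \cite{DKKTZ22} learner (\Cref{fact:general-massart}) at the same $\gamma_j$. We would then compare the empirical error of $h_j$ against the empirical error of the constant hypotheses $\pm1$; since the class $\cH_{d,\gamma_j}$ together with the constants is a class of VC dimension $O(d)$, uniform convergence applies. Our plan for this case is a short case split. If $\gamma^*<\gamma_j$, then $f^*$ is within $\gamma_j$ mass of a constant function. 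The constant's excess error over $f^*$ is at most $\Pr[f^*\neq \mathrm{const}]\le\gamma^*$. This is $\le\eps$ once $\gamma_j$ is small enough relative to $\eps$; if instead $\gamma_j$ is large, the guarantee of \Cref{fact:general-massart} still ensures near-optimality in completeness mode. Concretely, we would return the empirically best among $h_j$ and the two constants, and argue by uniform convergence that this costs only an extra $\eps/4$.

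The runtime is dominated by the largest $j$ actually run, giving $d^{\polylog(\max(1/\gamma^*,1/\eps))}\poly(1/\eps)$ samples after summing a geometric number of runs. The delicate step is this last comparison: an early acceptance must not leave us competing only against $\gamma_j$-biased halfspaces, and that is exactly what the comparison with the constant hypotheses is meant to handle.
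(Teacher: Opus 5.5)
\textbf{There is a genuine gap, at exactly the step you flag as the main obstacle.} Stopping at the first acceptance is unsound, and comparing against the constant hypotheses does not repair it.

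Let $p^\ast$ be the minority mass of the Bayes-optimal halfspace $f^\ast$. Note that $p^\ast\ge\gamma^\ast$, so your bound $\Pr[f^\ast\neq c]\le\gamma^\ast$ goes the wrong way. In the regime $\eps\ll p^\ast<\gamma_j$, none of your candidates is certified:
\begin{itemize}
\item Soundness only gives $\err_D(h_j)\le\opt_{\gamma_j}+\eps/2$. Moreover $\opt_{\gamma_j}\ge\opt+\beta(\gamma_j-p^\ast)$, because every $\gamma_j$-biased halfspace disagrees with $f^\ast$ on mass at least $\gamma_j-p^\ast$.
\item Each constant $c$ has excess error $\E[(1-2\eta(\x))\Ind(f^\ast(\x)\neq c)]\ge\beta p^\ast\gg\eps$.
\item \Cref{fact:general-massart} gives nothing, since its hypothesis is that the target lies in $\cH_{d,\gamma_j}$. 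This is exactly why the paper stresses that the learner of \cite{DKKTZ22} needs $\gamma$ as input.
\end{itemize}
Also, completeness only guarantees acceptance for $\gamma_j\le p^\ast$, not for every run.

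This is not just a proof artifact. Take a constant $\eta\in(0,1/2)$, $f^\ast=\sign(\x_1-t^\ast)$ with $p^\ast=\sqrt{\eps}$, and $\eta(\x)=\eta$ on $\{\x_1>t^\ast\}$, $\eta(\x)=0$ elsewhere. At a constant $\gamma_j$ the degree $l$ does not depend on $\eps$. Nothing prevents the learner from returning $\sign(\w\cdot\x-R)$ with $\w\perp\vec e_1$ and $R$ enormous, i.e.\ essentially the constant $-1$.
\begin{itemize}
\item The conditional weights in the non-negativity matrices are then $2\eta+\eps$ on $\{\x_1\le t^\ast\}$ and $-(2\beta-\eps)$ on $\{\x_1>t^\ast\}$.
\item Hypercontractivity gives $\E[p^2\Ind(\x_1>t^\ast)]\le 3^{l}\eps^{1/4}\E[p^2]$ for every degree-$l$ polynomial $p$.
\item So for small $\eps$ all population test matrices are positive definite with constant margin, and the run accepts.
\end{itemize}
Your rule then outputs a hypothesis with excess error about $\beta\sqrt{\eps}\gg\eps$. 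The constants you compare against are this very hypothesis, so the comparison changes nothing.

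\textbf{The missing idea.} A single acceptance at $\gamma_j$ certifies nothing beyond $\opt_{\gamma_j}$. The search must therefore continue past acceptances and use measured errors to decide when to stop.

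The paper does this as follows. It keeps halving $\gamma_i$ and estimates the error of every accepted $h_i$ on fresh validation samples. It stops at the first pair of consecutive accepted indices whose validation errors differ by at most $\eps/2$, and outputs the later hypothesis.
\begin{itemize}
\item Once $\gamma_i\le\gamma^\ast$, completeness and soundness force acceptance with error within $\eps$ of $\opt$. So the error curve flattens within a couple of steps, which yields the $d^{\polylog(\max(1/\gamma^\ast,1/\eps))}$ cost.
\item Premature flattening, while $\gamma_i$ is still far above $\gamma^\ast$, is ruled out by arguing that a hypothesis accepted at $\gamma'>2\gamma^\ast$ carries excess mass on one label. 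Hence its error exceeds that of later accepted hypotheses by $\Omega(\beta\gamma^\ast)$.
\end{itemize}
If you adopt this route, the step that needs care is that lower bound on the error of hypotheses accepted at too-large $\gamma$. The example above shows it fails if such a hypothesis may itself be (nearly) constant.
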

\begin{proof}
Define a geometric sequence $\gamma_i=1/2^{i}$ for each $i=0,1,2,\ldots$. 
Run the tester-learner with parameters $(\eps,\eta,\gamma_i)$ on fresh samples, and whenever it accepts obtain a hypothesis $h_i$ and also compute an independent validation estimate $\widehat{\err}_i$ of $\Pr[h_i(x)\neq y]$ using an additional fresh labeled sample. Let $i_1<i_2<\cdots$ be the indices at which the tester accepts. We stop at the first $k\ge 2$ such that $\widehat{\err}_{i_{k-1}}-\widehat{\err}_{i_k}\le \eps/2$ (i.e., the error curve ``flattens'' across two consecutive accepted $\gamma$'s), and output $h_{i_k}$.

This stopping rule is justified by the following two facts. First, for any $\gamma\le\gamma^*$ the completeness guarantee implies acceptance (with high probability) and $\Pr[h_\gamma(x)\neq y]\le \opt+\eps$, hence once we enter the regime $\gamma\le\gamma^*$ the errors of all subsequently accepted hypotheses differ by at most $\eps$.
However, whenever the tester accepts for  $\gamma’ $ and $ \gamma'',\gamma'>\gamma''$ with $\gamma’ >  2\gamma^*$
we must have $\Pr[h_{\gamma’}(x)\neq y]-\Pr[h_{\gamma''}(x)\neq y]\ge (1-2\eta)\gamma^*$.
This is because $h_{i_{k-1}}$ must have at least $\gamma’-\gamma^*$ more mass for one of the labels which results to at least  $(1-2\eta)(\gamma’-\gamma^*)$ error. Also $h_{\gamma''}$ should have error at most that of the classifier that matches $h^*$ and differs only in bias. So the flattening condition cannot hold before we reach $\gamma\ge\gamma^*$.    
\end{proof}

\end{document}